\definecolor{darkgreen}{rgb}{0,0.5,0}
\newcommand{\kibitz}[2]{\ifnum\Comments=1{\color{#1}{#2}}\fi}
\newcommand{\ignore}[1]{}
\newtheorem{theorem}{Theorem}
\newtheorem{lemma}{Lemma}
\newtheorem{corollary}{Corollary}
\newtheorem{proposition}{Proposition}
\newtheorem{definition}{Definition}
\newtheorem*{theorem*}{Theorem} 
\title{Asynchronous Proportional Response Dynamics: Convergence in Markets with Adversarial Scheduling}
\author{%
  Yoav Kolumbus\\
  Cornell University\\
  \texttt{yoav.kolumbus@cornell.edu} \\
  \And
  Menahem Levy\\
  The Hebrew University of Jerusalem\\
  \texttt{menahem.levy@mail.huji.ac.il} \\
  \And
  Noam Nisan\\
  The Hebrew University of Jerusalem\\
  \texttt{noam@cs.huji.ac.il} \\
}
\begin{document}

\maketitle

\begin{abstract}  
We study Proportional Response Dynamics (PRD) in linear Fisher markets, where participants act asynchronously. We model this scenario as a sequential process in which at each step, an adversary selects a subset of the players to update their bids, subject to liveness constraints. We show that if every bidder individually applies the PRD update rule whenever they are included in the group of bidders selected by the adversary, then, in the generic case, the entire dynamic converges to a competitive equilibrium of the market. Our proof technique reveals additional properties of linear Fisher markets, such as the uniqueness of the market equilibrium for generic parameters and the convergence of associated no swap regret dynamics and best response dynamics under certain conditions.
\end{abstract}

\section{Introduction}
A central notion in the study of markets is the {\em equilibrium}: a state of affairs where no single party wishes to unilaterally deviate from it. The main benefit of focusing on the notion of equilibria lies in what it ignores: how the market can reach an equilibrium (if at all). 
This latter question is obviously of much interest as well, especially when considering computational aspects,\footnote{As we know that finding an equilibrium may be computationally intractable in general.} and a significant amount of research has been dedicated to the study of ``market dynamics'' and their possible convergence to an equilibrium. Almost all works that study market dynamics consider synchronous dynamics.

\vspace{1pt}
\noindent
{\bf Synchronous Dynamics:} 
At each time step $t$, all participants simultaneously update their behavior based on the state at time $t-1$.
\vspace{1pt}

Such synchronization is clearly difficult to achieve in real markets, and so one might naturally wonder to what extent full synchrony is needed or whether convergence of market dynamics occurs even asynchronously.  There are various possible levels of generality of asynchrony to consider. The simplest model considers a sequential scenario in which, at every time step $t$, an {\em adversary} chooses a single participant, and only this participant updates their behavior based on the state at time $t-1$. The adversary is constrained to adhere to some liveness condition, such as scheduling every participant infinitely often or at least once every $T$ steps. In the most general model \cite{nisan2008asynchronous}, the adversary may also introduce message delays, leading players to respond to dated information. In this paper, we focus on an intermediate level of permissible asynchrony where updates may occur in an arbitrary asynchronous manner, but message delays are always shorter than the granularity of activation. In our proofs, the adversary's goal is to select the schedule of strategy updates by the players in a way that prevents the dynamic from converging to a market equilibrium.  

\noindent
{\bf Activation Asynchrony:}\footnote{In \cite{nisan2008asynchronous} this was termed ``simultaneous.''} Every time step $t$, an arbitrary subset of participants is chosen by the adversary and all of these participants update their behavior based on the state at time $t-1$.  The adversary must adhere to the liveness condition where for every participant some set that includes them must be chosen at least once every $T$ consecutive steps.

The market dynamics that we study in this paper are linear Fisher markets \cite{brainard2005compute} with proportional response dynamics (PRD), a model that has received much previous attention \cite{birnbaum2011distributed,branzei2021exchange,cheung2018dynamics,zhang2011proportional} and for which synchronous convergence to equilibrium is known.  
While there are a few asynchronous convergence results known for other dynamics, specifically for tatonnement dynamics \cite{cheung2018amortized,cole2008fast}, there are no such results known for proportional response dynamics, and achieving such results has been mentioned as an open problem in \cite{cheung2018dynamics,zhang2011proportional}. 

\vspace{1pt}
{\bf Fisher Markets with Linear Utilities:}
There are $n$ players and $m$ goods.  Each player $i$ has a budget $B_i$ and each good $j$ has, w.l.o.g., a total quantity of  one. Buyer $i$'s utility from getting an allocation 
$x_i=(x_{i1},...,x_{im})$ 
is given by $u_i(x_i) = \sum_j a_{ij} x_{ij}$, 
where the parameters $a_{ij} \geq 0$ 
are part of the definition of the market. A market equilibrium is an allocation 
$X = (x_{ij})$ (where $0 \le x_{ij} \le 1$) and a pricing 
$p = (p_j)$ with the following properties. 
(1) Market clearing: for every good $j$ it holds that 
$\sum_i x_{ij} = 1$;  
(2) budget feasibility: for every player $i$ it holds that $\sum_j x_{ij} p_j \le B_i$; and 
(3) utility maximization: for every player $i$ and every alternative allocation 
$y = (y_1,...,y_m)$ with $\sum_j y_j p_j \le B_i$ we have that $u_i(x_i) \ge u_i(y)$.  

\vspace{1pt}
{\bf Proportional Response Dynamics:}
At each time step $t$, each player $i$ will make a bid $b^t_{ij} \ge 0$ for every good $j$, where $\sum_j b^t_{ij} = B_i$.  In the first step, the bid is arbitrary.  Once bids for time $t$ are announced, we calculate $p^t_j = \sum_i b^t_{ij}$ and allocate the goods proportionally: $x^t_{ij} = b^t_{ij}/p^t_j$, providing each player $i$ with utility $u^t_i = \sum_j a_{ij}x^t_{ij}$.  At this point,  player $i$ updates its bids for the next step by bidding on each good proportionally to the utility obtained from that good: $b^{t+1}_{ij} = B_i \cdot a_{ij}x^t_{ij} / u^t_i$. 

From the perspective of the player, proportional response updates can be thought of as a simple parameter-free online learning heuristic, with some similarity to regret-matching \cite{hart2013simple} in its proportional updates, but one that considers the utilities directly, rather than the more sophisticated regret vector loss. 

It is not difficult to see that a fixed point of this proportional response dynamic is indeed an equilibrium of the Fisher market.  Significantly, it was shown in \cite{birnbaum2011distributed} that this dynamic does converge, in the synchronous model, to an equilibrium.  As mentioned, the  question of asynchronous convergence was left open. We provide the first analysis of proportional response dynamics in the asynchronous setting, and provide a positive answer to this open question in our ``intermdiate'' level of asynchrony.

\vspace{1pt}
\begin{theorem}\label{thm:convergence}
For linear Fisher markets, proportional response dynamics with adversarial activation asynchrony, where each player is activated at least once every $T$ steps, approach the set of market equilibrium bid profiles. The prices in the dynamics converge to the unique equilibrium prices.
\end{theorem}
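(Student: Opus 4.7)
The plan is to adapt the Lyapunov/potential-function approach that establishes synchronous convergence of PRD (\cite{birnbaum2011distributed}) to the activation-asynchronous model. The three main ingredients I would set up are: (i) a potential $\Phi$ on bid profiles that is non-increasing along \emph{any} asynchronous PRD trajectory, (ii) a characterization of the zero-set of the per-step decrease as exactly the market-equilibrium profiles, and (iii) a liveness-based closure using the guarantee that each player moves at least once every $T$ steps.

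First, I would identify a continuous, bounded-below potential $\Phi(b)$ whose minimum is attained exactly on the set of market-equilibrium bid profiles. The natural candidate is the Shmyrev-type convex objective $\Phi(b) = \sum_j p_j \log p_j - \sum_{ij} b_{ij} \log a_{ij}$ with $p_j = \sum_i b_{ij}$, subject to $\sum_j b_{ij} = B_i$; its minimizers are known to coincide with market equilibria. The classical synchronous analysis shows $\Phi$ strictly decreases unless the profile is already an equilibrium, essentially because the PRD map is a mirror-descent step on $\Phi$.

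The main obstacle, and the step I expect to require the most work, is upgrading this descent from a fully synchronous update to an \emph{arbitrary subset} update. The argument I have in mind is to rewrite the per-player PRD update as the unique minimizer of a suitable Bregman surrogate of $\Phi$ in which only player $i$'s bids are variable and all others are frozen; showing that this per-player surrogate upper-bounds $\Phi$ up to a gap that vanishes at the PRD iterate. Because every player in the active set $S_t$ responds to the same frozen prices $p^t$, an additivity / Jensen-type argument over the surrogate bounds across $i \in S_t$ should let me conclude $\Phi(b^{t+1}) \le \Phi(b^t)$ for any subset update, with equality only when every player in $S_t$ is already at their PRD fixed point with respect to the others' bids.

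With monotone descent established, the endgame proceeds as follows. The sequence $\Phi(b^t)$ is monotone and bounded below, hence convergent; the bid trajectory lives in a compact simplex product and thus has limit points. Fix any limit point $b^\infty$ and a subsequence $b^{t_k} \to b^\infty$. Using the liveness condition, within each window $[t_k, t_k+T]$ every player is activated, and $\Phi$-decrease along this window must vanish in the limit, forcing each player to be at their PRD fixed point at $b^\infty$; fixed points of PRD are market equilibria, which proves the first assertion. For the second assertion, I would invoke the uniqueness of equilibrium prices in the generic case (one of the paper's announced byproducts): since prices $p^t = \sum_i b^t_i$ depend continuously on bids and all equilibrium bid profiles share the same price vector $p^*$, the price sequence converges to $p^*$.
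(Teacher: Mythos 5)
Your proposal follows essentially the same route as the paper: the Shmyrev potential (up to sign and an additive constant), monotonicity under arbitrary subset updates obtained by writing the update as the maximizer of a linearization-of-$\Phi$-minus-KL surrogate together with the inequality $D(p\|p') \le D(b_v \| b'_v)$ (proved per player by convexity of the divergence and summed over the active set), and a compactness-plus-liveness endgame identifying limit points with profiles at which every player's proportional response is a fixed point. Two small corrections: equilibrium prices of a linear Fisher market are unique \emph{unconditionally}, so the second assertion needs no genericity (the paper uses genericity only for uniqueness of the equilibrium \emph{bid profile}); and the claim that the vanishing $\Phi$-decrease over a window $[t_k, t_k+T]$ forces each player to be fixed \emph{at $b^\infty$} requires propagating the limit through the finitely many continuous compositions of update maps inside the window (players are activated at interior times of the window, not at $t_k$), which is precisely the work done by the paper's stable-set and composition lemmas.
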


The dynamics approaching the set of equilibria means that the distance between the bids at time $t$ and the set of market equilibrium bid profiles converges to zero as $t \rightarrow \infty$. Additionally, in Section \ref{sec:generic-markets}, we show that for generic parameters (i.e., except for measure zero of possible $(a_{ij})$'s), the market equilibrium bid profile is unique, and thus Theorem \ref{thm:convergence} implies convergence of the bids to a point in the strategy space. We do not know whether the genericity condition is required for asynchronous convergence of the bids to a point, and we leave this as a minor open problem. We did not analyze the rate of convergence to equilibrium; we leave such analysis as a second open problem. Our main open problem, however, is the generalization to full asynchrony with message delays.

\vspace{1pt}
\noindent
\textbf{Open Problem:} Does such convergence occur also in the full asynchronous model where the adversary may introduce arbitrary message delays?

Our techniques rely on considering an associated game obtained by using ``modified'' utility functions for each of the players: $\tilde{u_i}(b)=\sum_j{b_{ij} \ln(a_{ij})} + \sum_j{p_j(1 - \ln(p_j))}$.  We show that a competitive market equilibrium (with the original utility functions) corresponds to a Nash equilibrium in the associated game.\footnote{It is worthwhile to emphasize, though, that a competitive market equilibrium is not
a Nash equilibrium in the original market, since the players are price takers rather than fully rational. See Section \ref{sec:associated-game}.}
These modified utility functions are an adaptation to an individual utility of a function $\Phi(b)=\sum_{ij}{b_{ij} \ln(a_{ij})} + \sum_j{p_j(1 - \ln(p_j))}$ that was proposed in \cite{shmyrev2009algorithm} as an objective for a convex program for equilibrium computation.\footnote{Notice that $\Phi$ is not the sum of the $\tilde{u_i}$'s, as the second term appears only once.}  This function was first linked with proportional
response dynamics in \cite{birnbaum2011distributed} where it was proven that synchronous proportional response dynamics act as mirror descent on this function. We show that $\Phi$ serves as a potential in our associated game.

\vspace{1pt}
\begin{theorem}\label{thm:equlibrium-and-potential}
The following three sets of bid profiles are identical: (1) the set of pure strategy Nash equilibria of the associated game, (2) the set of market equilibria of the Fisher market, and (3) the maximizing set of the potential function $\Phi$. 
\end{theorem}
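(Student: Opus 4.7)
The plan is to establish the chain (1) $\Leftrightarrow$ (3) $\Leftrightarrow$ (2), using the potential $\Phi$ as the common pivot. The first and most important observation is that $\Phi$ is an \emph{exact} potential for the associated game: for every player $i$,
$$\Phi(b) - \tilde{u}_i(b) = \sum_{k \ne i}\sum_j b_{kj}\ln(a_{kj}),$$
which is independent of $b_i$. Hence, holding $b_{-i}$ fixed, $\tilde{u}_i(\cdot, b_{-i})$ and $\Phi(\cdot, b_{-i})$ differ only by a constant, so the unilateral best-response correspondences of the two functions coincide, and in particular the pure Nash equilibria of the associated game are exactly the profiles that are local maxima of $\Phi$ along each player's coordinate block.

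Next I would establish that $\Phi$ is concave on the product of budget simplices $\{b_i : b_{ij}\ge 0,\ \sum_j b_{ij}=B_i\}$. The linear term $\sum_{ij} b_{ij}\ln(a_{ij})$ is affine; the second term equals $\sum_i B_i - \sum_j p_j\ln p_j$ up to an additive constant, and $t \mapsto -t\ln t$ is concave on $[0,\infty)$ (with value $0$ at $0$), so its composition with the linear map $b\mapsto p$ is concave in $b$. On a convex domain, any coordinate-wise local maximum of a concave function is a global maximum, so the potential identity from the previous paragraph upgrades into the set equality (1) $=$ (3).

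For (2) $=$ (3), I would write the KKT conditions for maximizing $\Phi$ subject to $\sum_j b_{ij} = B_i$ and $b_{ij}\ge 0$. A direct calculation gives
$$\frac{\partial \Phi}{\partial b_{ij}} = \ln(a_{ij}) - \ln(p_j),$$
so optimality requires Lagrange multipliers $\lambda_i$ such that $a_{ij}/p_j \le e^{\lambda_i}$ for all $j$, with equality whenever $b_{ij} > 0$. This is exactly the ``bang-per-buck'' condition: each buyer places positive bids only on goods maximizing $a_{ij}/p_j$, which is the utility-maximization requirement of a competitive equilibrium. The remaining equilibrium conditions are automatic once we set $x_{ij} = b_{ij}/p_j$: market clearing $\sum_i x_{ij} = 1$ follows from $p_j = \sum_i b_{ij}$, and budget feasibility with equality from $\sum_j b_{ij} = B_i$. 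Since $\Phi$ is concave and Slater's condition holds on the interior of the simplices, KKT is both necessary and sufficient, completing (2) $=$ (3).

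The main technical nuisance will be the treatment of goods with $p_j = 0$, where $\ln(p_j)$ is undefined in the KKT derivation. This should be handled by observing that such goods carry no bids at the relevant profiles and can be excised (or assigned an arbitrary positive price without affecting any equilibrium condition); similarly, one should be careful about goods with $a_{ij}=0$, for which $\ln(a_{ij})=-\infty$ forces $b_{ij}=0$ at any maximizer of $\Phi$, consistent with the market-equilibrium side. Everything else is a bookkeeping exercise combining the exact-potential identity, concavity of $\Phi$, and the standard bang-per-buck reading of its KKT conditions.
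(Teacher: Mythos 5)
Your proof is correct in substance but follows a genuinely different route from the paper's. The paper does not argue through concavity and KKT conditions on $\Phi$ at all: instead it introduces a second auxiliary game $G'$ with utilities $u_i'(b)=\sum_j a_{ij}\ln(p_j)$, proves that the best responses in $G'$ and in $\tilde{G}$ coincide and are unique (via an explicit ``water-filling'' characterization $b_{ij}^*=\max\{a_{ij}/c^*-\theta_{ij},\,0\}$), identifies the Nash equilibria of $G'$ with the market equilibria through the bang-per-buck conditions, and only then invokes the exact-potential property to tie in the maximizers of $\Phi$. Your approach is more direct and uses standard convex-analysis machinery: the observation that $\Phi(b)-\tilde{u}_i(b)=\sum_{k\neq i}\sum_j b_{kj}\ln(a_{kj})$ depends only on $b_{-i}$ is a cleaner way to get the exact-potential property than the paper's gradient computation, and reading the market-equilibrium conditions off the KKT system for $\max_{b\in S}\Phi$ replaces three of the paper's appendix lemmas. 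What the paper's longer detour buys is the explicit structure of best responses (uniqueness, continuity, the constant $c^*$), which is reused heavily in Sections 4--6 for best-response dynamics and the convergence proof; your argument would not supply that.

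One step deserves a sharper justification: the claim that ``any coordinate-wise local maximum of a concave function is a global maximum'' is false for general (non-differentiable) concave functions --- for example $f(x,y)=-\bigl(|x-y|+\tfrac{1}{2}|x|+\tfrac{1}{2}|y|\bigr)$ has a coordinate-wise maximum at $(1,1)$ that is not global. What saves you here is that $\Phi$ is differentiable wherever all prices are positive, so at a block-wise maximum each player's variational inequality $\nabla_{b_i}\Phi(b^*)^{\top}(b_i-b_i^*)\le 0$ holds, these sum to the global variational inequality because $S=\prod_i S_i$ is a product, and concavity then gives global optimality. You should state this argument (and note that the non-differentiability at $p_j=0$ is harmless because the one-sided derivative of $-p\ln p$ there is $+\infty$, so no maximizer or best response leaves a valued good unpriced). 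With that repair the proof is complete.
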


The technical core of our proof is to show that not only does a synchronized proportional response step by {\em all} the players increase the potential 
function $\Phi$ but, in fact, every proportional response step by any {\em subset of the players} increases this potential function.

The point of view of market equilibria as Nash equilibria of the associated game offers several other advantages, e.g., suggesting several other dynamics that are related to proportional response that can be of interest.  For example, we show that letting players best-respond in the associated game corresponds to the limit of a sequence of proportional response steps by a single player, but can be implemented as a single step of such a best-response, which can be computed efficiently by the players and may converge faster to the market equilibrium.  Another possibility is using some (internal) regret-minimization dynamics (for the associated game), which would also converge to equilibrium in the generic case since, applying \cite{neyman1997correlated}, it is the unique correlated Equilibrium as well.

The structure of the rest of the paper is as follows. In Section \ref{sec:model}, we provide further formal details and notations that will be useful for our analysis. In Section \ref{sec:associated-game}, we present the associated game and its relation to the competitive equilibria of the market. In Section \ref{sec:BR}, we study best response dynamics in the associated game and their relation to PRD. In Section \ref{sec:player-subsets}, we present a key lemma regarding the potential function of the associated game under bid updates by subsets of the players. Then, in Section \ref{sec:convergence} we complete our proof of convergence for asynchronous PRD. In Section \ref{sec:generic-markets}, we show the uniqueness of the market equilibrium for generic markets. In Section \ref{sec:simulations}, we provide simulation results that compare the convergence of proportional response dynamics with best response dynamics in the associated game in terms of the actual economic parameters in the market, namely, social welfare and the convergence of the bid profiles. Finally, in Section \ref{sec:conclusion}, we conclude and discuss the limitations of our technique and open questions.
All proofs in this paper are deferred to the appendix.

\subsection{Further Related Work}

Proportional response dynamics (PRD) were originally studied in the context of bandwidth allocation in file-sharing systems, where it was proven to converge to equilibrium, albeit only for a restrictive setting \cite{wu2007proportional}. Since then, PRD has been studied in a variety of other contexts, including Fisher markets, linear exchange economies, and production markets. See \cite{branzei2021exchange} for further references. In Fisher markets, synchronous PRD has been shown to converge to market equilibrium for Constant Elasticity of Substitution (CES) utilities in the substitutes regime \cite{zhang2011proportional}. For the linear Fisher market setting, synchronous PRD was explained in \cite{birnbaum2011distributed} as mirror descent on a convex program, previously discovered while developing an algorithm to compute the market equilibrium \cite{shmyrev2009algorithm},
and later proven to be equivalent to the famous Eisenberg-Gale program \cite{cole2017convex}. By advancing the approach of cite{birnbaum2011distributed}, synchronous PRD with mild modifications was proven to converge to a market equilibrium for CES utilities in the complements regime as well \cite{cheung2018dynamics}. In linear exchange economies, synchronous PRD has been shown to converge to equilibrium in the space of utilities and allocations while prices do not converge and may cycle, whereas for a damped version of PRD, also the prices converge \cite{branzei2021proportional}.
In production markets, synchronous PRD has been shown to increase both growth and inequalities in the market \cite{branzei2018universal}. 
PRD has also been proven to converge with quasi-linear utilities \cite{gao2020first}, and to remain close to market equilibrium for markets with parameters that vary over time \cite{cheung2018tracing}. All the above works consider simultaneous updates by all the players, and the question of analyzing asynchronous dynamics and whether they converge was raised by several authors as an open problem \cite{cheung2018dynamics, zhang2011proportional}.

Asynchronous dynamics in markets have been the subject of study in several recent works. However, these works examine different models and dynamics than ours, and to our knowledge, our work presents the first analysis of asynchronous proportional response bidding dynamics.
In \cite{cole2008fast}, it is shown that tatonnement dynamics under the activation asynchrony model converge to equilibrium, with results for settings both with and without carryover effects between time units. A later work, \cite{cheung2018amortized}, showed that tatonnement price dynamics converge to a market equilibrium under a model of sequential activation where in every step a single agent is activated, and where additionally,  the information available to the activated seller about the current demand may be inaccurate within some interval that depends on the range of past demands. 
A different approach taken in \cite{dvijotham2022convergence} assumes that each seller has a set of rules affected by the other players' actions and governing its price updates; it is shown that the dynamics in which sellers update the prices based on such rules converge to a unique equilibrium of prices in the activation asynchrony model.

Online learning with delayed feedback has been studied for a single agent in \cite{quanrud2015online}, which established regret bounds that are sub-linear in the total delay. A different approach, presented in \cite{mertikopoulos2020gradient}, extends online gradient descent by estimating gradients when the reward of an action is not yet available due to delays. When considering multiple agents, \cite{zhou2017countering} extended the results of \cite{quanrud2015online}. They demonstrated that in continuous games with particular stability property, if agents experience equal delays in each round, online mirror descent converges to the set of Nash equilibria and a modification of the algorithm converges even when delays are not equal.

Classic results regarding the computation of competitive equilibria in markets mostly consider centralized computation and vary from combinatorial approaches using flow networks \cite{devanur2008market,devanur2003improved,duan2015combinatorial,jain2003approximating}, interior point \cite{ye2008path}, and ellipsoid \cite{codenotti2005polynomial,jain2007polynomial} methods, and many more \cite{bei2019ascending,deng2003complexity,devanur2008market,echenique2011finding,gargcomplementary,garg2004auction,scarf1982computation}. Eisenberg and Gale devised a convex program which captures competitive equilibria of the Fisher model as its solution \cite{eisenberg1959consensus}. Notable also is the tatonnement model of price convergence in markets dated back to Walras \cite{walras1900elements} and studied extensively from Arrow \cite{arrow1959stability} and in later works. 
       
More broadly, in the game theoretic literature, our study is related to a long line of work on learning in games, starting from seminal works in the 1950s \cite{blackwell1954controlled,brown1951iterative,hannan1957lapproximation,robinson1951iterative}, and continuing to be an active field of theoretical research \cite{daskalakis2021near,foster2023hardness,kolumbus2022and,papadimitriou2018nash,syrgkanis2015fast}, also covering a wide range of classic economic settings including competition in markets \cite{blum2008regret,nadav2010no}, bilateral trade \cite{cesa2023bilateral,eliaz2018bilateral}, and auctions \cite{balseiro2019learning,Syrgk2018LearningtoBid,kolumbus2022auctions}, as well as applications such as blockchain fee markets \cite{leonardos2021dynamical,leonardos2022optimality,nisan2023serial} and strategic queuing systems \cite{gaitonde2020stability,gaitonde2021virtues}. For a broad introduction to the field of learning in games, see \cite{cesa2006prediction,hart2013simple}. The vast majority of this literature studies repeated games under the synchronous dynamics model. Notable examples of analyses of games with asynchronous dynamics are \cite{jensen2009stability}, which study best response dynamics with sequential activation, and \cite{nisan2008asynchronous}, which explore best response dynamics in a full asynchrony setting which includes also information delays, and show that in a class of games called max-solvable, convergence of best response dynamics is guaranteed. Our analysis of best response dynamics in Section \ref{sec:BR} takes a different route, and does not conclude whether the associated game that we study is max-solvable or not; such an analysis seems to require new ideas.   
    
Our work is also related to a large literature on asynchronous distributed algorithms. We refer to a survey on this literature \cite{gartner1999fundamentals}. The liveness constraint that we consider in the dynamics\footnote{Intuitively, if one allows some of the parameters in the dynamic not to update, these parameters become irrelevant, as they will remain frozen, and thus one cannot hope to see any convergence of the entire system.} is related to those, e.g., in \cite{bertsekas1983distributed,tsitsiklis1986distributed}. Recent works that are conceptually more closely related are \cite{cenedese2021asynchronous,wahbi2016distributed,yi2019asynchronous}, which propose asynchronous distributed algorithms for computing Nash equilibria in network games. Notably, \cite{cenedese2021asynchronous} propose an algorithm that converges to an equilibrium in a large class of games in asynchronous settings with information delays. Their approach, however, does not capture proportional response dynamics and does not apply to our case of linear Fisher markets.

\section{Model and Preliminaries}\label{sec:model}
\textbf{The Fisher market}: We consider the classic Fisher model of a networked market in which there is a set of buyers $\mathcal{B}$ and a set of divisible goods $\mathcal{G}$. We denote the number of buyers and number of goods as $n = |\mathcal{B}|$, $m = |\mathcal{G}|$, respectively, and index buyers with $i$ and goods with $j$. Buyers are assigned budgets $B_{i} \in \mathbb{R}^+$ and have some value\footnote{
For the ease of exposition, our proofs use w.l.o.g. $a_{ij} > 0$. This is since in all cases where  $a_{ij} = 0$ might have any implication on the proof, such as $\ln(a_{ij})$, these expressions are multiplied by zero in our dynamics.} $a_{ij} \geq 0$ for each good $j$. Buyers' valuations are normalized such that $\sum_j a_{ij} = 1$. It is convenient to write the budgets as a vector $B=(B_i)$ and the valuations as a matrix $A_{n \times m}=(a_{ij})$, such that $A,B$ are the parameters defining the market. We denote the allocation of goods to buyers as a matrix $X = (x_{ij})$ where $x_{ij}\geq 0$ is the (fractional) amount of good $j$ that buyer $i$ obtained. We assume w.l.o.g. (by proper normalization) that there is a unit quantity of each good. The price of good $j$ (which depends on the players' actions in the market, as explained below) is denoted by $p_j\geq 0$ and prices are listed as a vector $p=(p_j)$. Buyers have a linear utility function $u_i(x_i)=\sum_j a_{ij}x_{ij}$ with the budget constraint $\sum_j x_{ij} p_j \leq B_i$. We assume w.l.o.g. that the economy is normalized, i.e., $\sum_i B_i = \sum_j p_j = 1$.

{\bf Market equilibrium}: The competitive equilibrium (or ``market equilibrium'') is defined in terms of allocations and prices as follows. 

\vspace{1pt}
\begin{definition}\label{def:ME}
    \textbf{(Market Equilibrium):} A pair of allocations and prices $(X^*,p^*)$ is said to be market equilibrium if the following properties hold:
    \vspace{-3pt}
    \begin{enumerate}
        \item $\text{Market clearing:}\quad\quad\  \forall j,\ \sum_i x_{ij}^* = 1$,
        
        \item $\text{Budget feasibility:}\quad\ \ \ \forall i,\ \sum_j x_{ij}^* p_j^*\leq B_i$, 
        
        \item $\text{Utility maximization:} \ \ \  \forall i,\ x_i^* \in \arg\max_{x_i} u_i(x_i)$.
    \end{enumerate}
    \vspace{-3pt}
\end{definition}
In words, under equilibrium prices all the goods are allocated, all budgets are used, and no player has an incentive to change their bids given that the prices remain fixed. 
Notice that this notion of equilibrium is different from a Nash equilibrium of the game where the buyers select their bids strategically, since in the former case, players do not consider the direct effect of possible deviation in their bids on the prices. We discuss this further in Section \ref{sec:associated-game}.    
For linear Fisher markets, it is well established that competitive equilibrium utilities $u^*$ and prices $p^*$ are unique, equilibrium allocations are known to form a convex set, and the following conditions are satisfied.
\[
\forall i,j\quad \frac{a_{ij}}{p^*_j} \leq \frac{u_i^*}{B_i}
\quad\quad \text{and} \quad x_{ij} > 0 \implies \frac{a_{ij}}{p^*_j} = \frac{u_i^*}{B_i}.
\]
This is a detailed characterization of the equilibrium allocation: every buyer gets a bundle of goods in which all goods maximize the value per unit of money. The quantity $a_{ij}/p^*_j$ is informally known as ``bang-per-buck'' (ch. 5 \& 6 in \cite{nisan2007algorithmic}), the marginal profit from adding a small investment in good $j$. 

Market equilibrium bids are also known to maximize the Nash social welfare function (see \cite{eisenberg1959consensus}) $\text{NSW}(X)=\prod_{i\in\mathcal{B}} u_i(x_i)^{B_i}$ and to be Pareto efficient, i.e., no buyer can improve their utility without making anyone else worse off (as stated in the first welfare theorem).

{\bf The trading post mechanism and the market game (Shapley-Shubik)}: 
First described in \cite{shapley1977trade} and studied under different names \cite{feldman2008proportional,kelly1997charging}, the trading post mechanism is an allocation and pricing mechanism which attempts to capture how a price is modified by demand. Buyers place bids on goods, where buyer $i$ places bid $b_{ij}$ on good $j$. Then, the mechanism computes the good's price as the total amount spent on that good and allocates the good proportionally to the bids, i.e., for bids $b$:
\[
p_j = \sum_{i=1}^n b_{ij}\quad \quad x_{ij} = 
     \begin{cases}
       \frac{b_{ij}}{p_{j}} & \quad b_{ij} > 0\\
       \\
       0 & \quad \text{otherwise.}\\
     \end{cases}
\]
Note that the trading post mechanism guarantees market clearing for every bid profile $b$ in which all goods have at least one buyer who is interested in buying. The feasible bid set of a buyer under the budget constraint is $S_i=\{b_i\in\mathbb{R}^m | \forall j \ b_{ij} \geq 0 \quad \sum_j b_{ij}=B_i\}$, i.e., a scaled simplex. Denote  $S=\prod_{i\in \mathcal{B}} S_i$ and $S_{-i}=\prod_{k\in \mathcal{B} \setminus \{i\}} S_k$. 
Considering the buyers as strategic, one can define the \textit{market game} as $G=\{\mathcal{B},(S_i)_{i\in \mathcal{B}},(u_i)_{i\in \mathcal{B}}\}$ where the utility functions can be written explicitly as $u_i(b)=u_i(x_i(b))=\sum_{j=1}^m\frac{a_{ij}b_{ij}}{p_j}$.
We sometimes use the notation $u_i(b_i, b_{-i})$, where $b_i$ is the bid vector of player $i$ and $b_{-i}$ denotes the bids of the other players. 

{\bf Potential function and Nash equilibrium}: For completeness, we add the following definitions. 
\textit{Potential function:} A function $\Phi$ is an exact potential function\cite{monderer1996potential} if $\forall i\in \mathcal{B},\forall b_{-i}\in S_{-i}$ and $\forall b_i,b_i'\in S_i$ we have that $\Phi(b_i',b_{-i}) - \Phi(b_i,b_{-i}) = u_i(b_i',b_{-i}) - u_i(b_i,b_{-i})$, with $u_i$ being $i$'s utility function in the  game. 
\textit{Best response:} $b_i^*$ is a best response to $b_{-i}$ if $\forall b_i\in S_i \quad u_i(b^*_i,b_{-i})\geq u_i(b_i,b_{-i})$. That is, no other response of $i$ can yield a higher utility.
\textit{Nash equilibrium:} $b^*$ is Nash equilibrium if $\forall i$ $b^*_i$ is a best response to $b^*_{-i}$ (no player is incentivized to change their strategy).

{\bf Proportional response dynamics}:
As explained in the introduction, the proportional response dynamic is specified by an initial bid profile $b^0$, with $b_{ij}^0 > 0$ whenever $a_{ij} > 0$, and the following update rule for every player that is activated by the adversary: $b^{t+1}_{ij} =  \frac{a_{ij}x^t_{ij}}{u_i(x^t_i)} B_i$. See Section \ref{sec:player-subsets} for further details on activation of subsets of the players.

\section{The Associated Game}\label{sec:associated-game}

As mentioned above, the Fisher market can be naturally thought of as a game in which every one of the $n$ players aims to optimize their individual utility $u_i(b_i, b_{-i})$ (see Section \ref{sec:model} for the formal definition). However, it is known that the set of Nash equilibria of this game does not coincide with the set of market equilibria \cite{feldman2008proportional,shapley1977trade}, and so a solution to this game (if indeed the players reach a Nash equilibrium) is economically inefficient \cite{branzei2022nash}.

A natural question that arises is whether there is some other objective for an individual player that when maximized by all the players, yields the market equilibrium. We answer positively to this question and show that there is a family of utility functions such that in the ``associated games'' with these utilities for the players, the set of Nash equilibria is identical to the set of market equilibria of the original game  
(for further details, see also Appendix A). 

However, the fact that a Nash equilibrium of an associated game is a market equilibrium still does not guarantee that the players' dynamics will indeed reach this equilibrium. 
A key element in our proof technique is that we identify, among this family of associated games, a single game, defined by the ``associated utility'' $\tilde{u_i}(b)=\sum_j{b_{ij} \ln(a_{ij})} + \sum_j{p_j(1 - \ln(p_j))}$, which admits an exact potential. We then use a relation which we show between this game and the proportional response update rule to prove the convergence of our dynamics (Theorem \ref{thm:convergence}). 

\vspace{2pt}
\begin{definition}  \textbf{(The Associated Game):}
    Let $G$ be a market game. Define the \textit{associated utility} of a player $i$ as $\tilde{u_i}(b)=\sum_j{b_{ij} \ln(a_{ij})} + \sum_j{p_j(1 - \ln(p_j))}$.
    The \textit{associated game} $\tilde{G}$ is the game with the associated utilities for the players and the same parameters as in $G$.   
\end{definition}

\vspace{2pt}
\begin{theorem}\label{thm:potential}
    For every Fisher market, the associated game $\tilde{G}$  admits an exact potential function that is given by\footnote{Since we discuss the players' associated utilities, we consider maximization of this potential. Of course, if the reader feels more comfortable with minimizing the potential, one can think of the negative function.} 
    $\Phi(b)=\sum_{ij}{b_{ij} \ln(a_{ij})} + \sum_j{p_j(1 - \ln(p_j))}$.
\end{theorem}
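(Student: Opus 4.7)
The plan is to verify the potential-function identity $\Phi(b_i', b_{-i}) - \Phi(b_i, b_{-i}) = \tilde{u}_i(b_i', b_{-i}) - \tilde{u}_i(b_i, b_{-i})$ directly, by showing that the quantity $\Phi - \tilde{u}_i$ depends only on $b_{-i}$. The calculation is short: subtracting the defining expressions gives
\[
\Phi(b) - \tilde{u}_i(b) \;=\; \Bigl(\sum_{k,j} b_{kj}\ln(a_{kj}) + \sum_j p_j(1-\ln p_j)\Bigr) - \Bigl(\sum_j b_{ij}\ln(a_{ij}) + \sum_j p_j(1-\ln p_j)\Bigr) \;=\; \sum_{k\neq i}\sum_j b_{kj}\ln(a_{kj}),
\]
and the right-hand side is a function of $b_{-i}$ alone.

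From here, I would observe that both of the ``price terms'' $\sum_j p_j(1-\ln p_j)$, which depend on player $i$'s bid through $p_j = \sum_k b_{kj}$, appear identically in $\Phi$ and in $\tilde{u}_i$, so they contribute the same change under a unilateral deviation by player $i$ and cancel when forming differences. Likewise, the ``own-bid term'' $\sum_j b_{ij}\ln(a_{ij})$ is present in both $\Phi$ and $\tilde{u}_i$, while the remaining cross-terms $\sum_{k\neq i}\sum_j b_{kj}\ln(a_{kj})$ appear only in $\Phi$ and are constant under a deviation of player $i$. Subtracting $\Phi(b_i,b_{-i})$ from $\Phi(b_i',b_{-i})$ then yields exactly $\tilde{u}_i(b_i',b_{-i}) - \tilde{u}_i(b_i,b_{-i})$, which is the definition of an exact potential.

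There is no substantive obstacle here: the statement reduces to the algebraic observation that $\tilde{u}_i$ is obtained from $\Phi$ by discarding terms that are constant in $b_i$. The only minor care needed is to note that the terms involving $\ln p_j$ and $p_j\ln p_j$ are identical in both functions and that $a_{ij}=0$ entries are handled by the convention $0 \cdot \ln 0 = 0$ (consistent with the footnote in Section~\ref{sec:model}, since $b_{ij}=0$ whenever $a_{ij}=0$ along PRD trajectories). I would write the proof as a single short display verifying $\Phi(b) - \tilde{u}_i(b) = \sum_{k\neq i,j} b_{kj}\ln(a_{kj})$, followed by one sentence invoking the definition of an exact potential.
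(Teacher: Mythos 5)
Your proof is correct, and it takes a slightly different route from the paper's. The paper verifies the sufficient condition $\nabla_{b_i}\Phi(b_i,b_{-i}) = \nabla_{b_i}\tilde{u}_i(b_i,b_{-i})$ for all $i$, computing that both partial derivatives equal $\ln(a_{ij}/p_j)$; this invokes the differential characterization of exact potentials from Monderer and Shapley. You instead verify the defining identity directly by observing that $\Phi(b) - \tilde{u}_i(b) = \sum_{k\neq i}\sum_j b_{kj}\ln(a_{kj})$ is a function of $b_{-i}$ alone, so it cancels when forming the difference $\Phi(b_i',b_{-i}) - \Phi(b_i,b_{-i})$. Your argument is more elementary in that it requires no calculus and no smoothness of the utilities --- in particular it is untroubled by the boundary of the simplex, where $\ln(p_j)$ blows up and the gradient condition needs care --- and it makes transparent exactly why $\tilde{G}$ was constructed the way it was: $\tilde{u}_i$ is $\Phi$ with the terms constant in $b_i$ stripped out. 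The paper's derivative computation, on the other hand, produces the expression $\ln(a_{ij}/p_j)$ as a byproduct, which is reused later (e.g., in Lemma~\ref{lemma:br_argmax} and in the analysis of best responses), so it earns its keep elsewhere in the development. Either proof is acceptable; yours is self-contained and complete as written.
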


$\Tilde{G}$ is constructed such that the function $\Phi$ is its potential. Note that although having similar structure, $\Tilde{u}_i$ and $\Phi$ differ via summation on $i$ only in the first term ($\Phi$ is not the sum of the players' utilities). 

Once having the potential function defined, the proof is straightforward: the derivatives of the utilities $\Tilde{u_i}$ and the potential $\Phi$ with respect to $b_i$ are equal for all $i$. Theorem \ref{thm:equlibrium-and-potential}, formally restated below, connects between the associated game, the market equilibria and the potential.

\vspace{2pt}
\begin{theorem*} \textbf{(Restatement of Theorem \ref{thm:equlibrium-and-potential}).} 
    The following three sets of bid profiles are equal. (1) The set of pure-strategy Nash equilibria of the associated game: $NE(\tilde{G}) = \{b^*|\ \forall b\in S \quad \tilde{u_i}(b^*)\geq \tilde{u_i}(b)\}$; (2) the set of market equilibrium bid profiles of the Fisher market: $\{b^*|\ (x(b^*),p(b^*)) \text{ satisfy Def. \ref{def:ME}}\}$; and (3) the maximizing set of the potential from Theorem \ref{thm:potential}: $\arg\max_{b\in S} \Phi(b)$. 
\end{theorem*}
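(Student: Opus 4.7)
The plan is to establish the chain (1)~$\Leftrightarrow$~(3)~$\Leftrightarrow$~(2). The first equivalence exploits the exact potential structure from Theorem~\ref{thm:potential} together with concavity of $\Phi$; the second extracts the bang-per-buck characterization of market equilibrium directly from the KKT conditions for maximizing $\Phi$ over the product of budget simplices.

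For (3)~$\subseteq$~(1): this is a one-line consequence of the exact potential property. If $b^*$ globally maximizes $\Phi$, then for any player $i$ and any $b_i \in S_i$, $\Phi(b^*) - \Phi(b_i, b^*_{-i}) = \tilde{u}_i(b^*) - \tilde{u}_i(b_i, b^*_{-i}) \geq 0$, so $b^*_i$ is a best response to $b^*_{-i}$ in $\tilde G$. For (1)~$\subseteq$~(3) I would use concavity: each $\tilde{u}_i(\cdot, b_{-i})$ is concave in $b_i$ and $\Phi$ is jointly concave in $b$, since $p \mapsto p(1-\ln p)$ has second derivative $-1/p < 0$ and is composed with a linear map of the bids. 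At any Nash equilibrium of $\tilde{G}$ the first-order KKT conditions for each player's constrained best-response problem hold; by the potential identity $\partial \Phi / \partial b_{ij} = \partial \tilde{u}_i / \partial b_{ij}$, these coincide with the KKT conditions for the joint concave maximization of $\Phi$ over $S=\prod_i S_i$, and joint concavity promotes them to a sufficient condition for a global maximum.

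For (3)~$=$~(2): I would write the Lagrangian for $\max_{b \in S}\Phi(b)$ with multipliers $\lambda_i$ for the budget equality $\sum_j b_{ij} = B_i$ and $\mu_{ij}\geq 0$ for the nonnegativity constraints. Stationarity gives $\ln(a_{ij}/p_j) = \lambda_i - \mu_{ij}$, which combined with complementary slackness becomes $a_{ij}/p_j \leq e^{\lambda_i}$, with equality whenever $b_{ij}>0$. This is precisely the bang-per-buck characterization of a Fisher market equilibrium recorded in Section~\ref{sec:model}; the identification $e^{\lambda_i} = u_i^*/B_i$ follows by multiplying the equality case by $b_{ij}$ and summing over $j$, which yields $u_i^* = e^{\lambda_i} B_i$. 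Market clearing $\sum_i x_{ij}=1$ and budget feasibility $\sum_j x_{ij} p_j = B_i$ are automatic under the trading post mechanism, so the KKT characterization exhausts the requirements of Definition~\ref{def:ME}.

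The main obstacle is bookkeeping around boundary cases. Goods with $a_{ij}=0$ create nominally undefined terms $b_{ij}\ln a_{ij}$, but following the convention in the paper (footnote in Section~\ref{sec:model}) one sets $b_{ij}=0$ whenever $a_{ij}=0$, so these terms drop and the KKT stationarity is applied only on the active coordinates. A second subtlety is that $\ln p_j$ is undefined when $p_j=0$, but with the standing assumption that every good has at least one interested buyer the prices stay strictly positive on $S$, so the Lagrangian computation is valid on the relevant interior. Neither point is deep, but both should be stated explicitly, and together they let the KKT calculation translate cleanly into the bang-per-buck conditions, closing the loop with Definition~\ref{def:ME}.
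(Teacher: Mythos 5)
Your proposal is correct, and it takes a genuinely different route from the paper. The paper introduces an auxiliary game $G'$ with utilities $u_i'(b)=\sum_j a_{ij}\ln(p_j)$, proves via a variational characterization of best responses (its Lemmas \ref{lemma:br_argmax} and \ref{lemma:weights_are_equal}) that $NE(G')$ equals the set of market equilibria, then shows through a fairly involved support-comparison argument (Lemma \ref{lemma:br_uu_uy}, matching the constants $c$ and $e^d$) that $G'$ and $\tilde{G}$ have identical best responses, and only then invokes the potential property. You bypass $G'$ entirely: the equivalence $(1)\Leftrightarrow(3)$ comes from the exact potential identity plus joint concavity of $\Phi$ (block-wise first-order optimality summed over players gives $\nabla\Phi(b^*)\cdot(b-b^*)\le 0$ for all $b\in S$, hence global optimality), and $(3)\Leftrightarrow(2)$ comes from reading the bang-per-buck conditions directly off the KKT system for $\max_S\Phi$. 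This is shorter and, for the direction $NE(\tilde G)\subseteq\arg\max\Phi$, actually more explicit than the paper, whose one-sentence claim that a non-maximizer of $\Phi$ admits an improving unilateral deviation implicitly relies on exactly the concavity argument you spell out. What the paper's longer route buys is the detailed structure of best responses (uniqueness, the water-filling form $b^*_{ij}=(a_{ij}/c^*-\theta_{ij})^+$, the constant $c^*$), which is reused later for the best-response-dynamics results and the efficient computation of best responses; your proof does not produce those byproducts, but none are needed for this theorem. Two boundary points you flag should indeed be stated: prices are strictly positive at any maximizer of $\Phi$ and at any Nash equilibrium of $\tilde G$ because the marginal associated utility $\ln(a_{ij}/p_j)$ diverges as $p_j\to 0$, which is the precise reason the KKT computation is legitimate there; and the identification $e^{\lambda_i}=u_i^*/B_i$ requires $u_i^*>0$, which holds since each buyer has positive budget and some positive valuation.
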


The proof uses a different associated game $G'$ that has simpler structure than $\tilde{G}$, but does not have an exact potential, and shows that: (i) Nash equilibria of $G'$ are the market equilibria; (ii) all the best responses of players $i$ to bid profiles $b_{-i}$ in $G'$ are the same as those in $\tilde{G}$; and (iii) every equilibrium of $\tilde{G}$ maximizes the potential $\Phi$ (immediate by the definition of potential).

\section{Best Response Dynamics}\label{sec:BR}
In this section we explore another property of the associated game: we show that if instead of using the proportional response update rule, each player myopically plays their best response to the last bid profile with respect to their associated utility, then the entire asynchronous sequence of bids converges to a market equilibrium, as stated in the following theorem. We then show that there is a close relation between best response and proportional response dynamics.  

\vspace{2pt}
\begin{theorem}\label{thm:brconvergence}
    For generic linear Fisher markets in a sequential asynchrony model where in every step a single player is activated, best response dynamics converge to the Market Equilibrium. For non-generic markets the prices are guaranteed to converge to the equilibrium prices.
\end{theorem}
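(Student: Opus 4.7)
The plan is to use the exact potential $\Phi$ from Theorem \ref{thm:potential} as a Lyapunov function. Each best-response step by the activated player $i$ weakly increases $\tilde{u}_i$, and the potential identity $\Phi(b_i',b_{-i}) - \Phi(b_i,b_{-i}) = \tilde{u}_i(b_i',b_{-i}) - \tilde{u}_i(b_i,b_{-i})$ adds the same nonnegative increment to $\Phi$, so $\Phi(b^t)$ is monotone nondecreasing. Since $S$ is compact and $\Phi$ is continuous on $S$ (using the standard extension $0\cdot\ln 0 = 0$), $\Phi(b^t)$ converges to some finite $\Phi^{*}$ and $\{b^t\}$ has accumulation points.

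\textbf{Limit points are Nash equilibria of $\tilde{G}$.} The main step is to show that every accumulation point $b^\infty$ lies in $NE(\tilde{G})$. First observe that $\tilde{u}_i(\cdot,b_{-i})$ is strictly concave on the simplex $S_i$: writing $p_j = \sum_k b_{kj}$, one computes $\partial^2 \tilde{u}_i / \partial b_{ij}^2 = -1/p_j$ with vanishing cross partials, so the Hessian is strictly negative-definite. Consequently the best-response map $B_i(b_{-i}) := \arg\max_{b_i\in S_i}\tilde{u}_i(b_i,b_{-i})$ is single-valued, and continuous in $b_{-i}$ by Berge's maximum theorem. Given a subsequence $b^{t_k}\to b^\infty$, compactness of $S^T$ lets me extract a sub-subsequence along which $b^{t_k+\tau}\to b^{(\tau)}$ for each $\tau\in\{0,\ldots,T\}$ with $b^{(0)}=b^\infty$; since only finitely many activation schedules exist on a length-$T$ window, I may further assume a fixed activation schedule $(i_0,\ldots,i_{T-1})$ along the sub-subsequence. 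Continuity of $\Phi$ combined with $\Phi(b^t)\to\Phi^{*}$ forces $\Phi(b^{(\tau)})=\Phi^{*}$ for every $\tau$, and the potential identity then yields $\tilde{u}_{i_\tau}(b^{(\tau+1)}) = \tilde{u}_{i_\tau}(b^{(\tau)})$. Continuity of $B_{i_\tau}$ gives $b^{(\tau+1)}_{i_\tau} = B_{i_\tau}(b^{(\tau)}_{-i_\tau})$, and strict concavity forces $b^{(\tau)}_{i_\tau} = B_{i_\tau}(b^{(\tau)}_{-i_\tau})$ as well; together with $b^{(\tau+1)}_{-i_\tau} = b^{(\tau)}_{-i_\tau}$ (inactive players do not move) this gives $b^{(\tau+1)} = b^{(\tau)} = b^\infty$ throughout the window, and $b^\infty_{i_\tau} = B_{i_\tau}(b^\infty_{-i_\tau})$ for every $\tau$. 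The liveness constraint guarantees that every player $i$ appears as some $i_\tau$, so $b^\infty \in NE(\tilde{G})$ and, by Theorem \ref{thm:equlibrium-and-potential}, is a market equilibrium.

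\textbf{From limit points to convergence.} For generic markets the market equilibrium bid profile is unique (Section \ref{sec:generic-markets}), so $\{b^t\}$ admits a single accumulation point and converges to it. For non-generic markets every accumulation point is still a market equilibrium, and since equilibrium prices are unique in linear Fisher markets (recalled in Section \ref{sec:model}), the price sequence $p^t=(\sum_i b^t_{ij})_j$ has a unique accumulation point and therefore converges to the equilibrium prices $p^{*}$. The main obstacle I anticipate is the continuity step at boundary points where some $p_j^\infty = 0$: the second derivative of $\tilde{u}_i$ diverges there, so I must verify that $B_i$ still behaves continuously. This is handled by the $0\cdot\ln 0 = 0$ extension, under which $\tilde{u}_i$ remains continuous and strictly concave on the closed simplex, so Berge's theorem applies uniformly on $S$ and the argument above goes through on the boundary as well.
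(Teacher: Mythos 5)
Your proof is correct, but it takes a genuinely different route from the paper's. The paper verifies the hypotheses of a known result on best-reply potential games (Jensen 2009): it shows that $\tilde{G}$ is a potential game, that best replies are single-valued (strict concavity of $\tilde{u}_i$ in $b_i$, Lemma \ref{lemma:strictly_concave}), and that the best-reply map is continuous --- the latter not via Berge but via the explicit water-filling characterization $b^*_{ij}=(a_{ij}/c^*-\theta_{ij})^+$ with $c^*=\max_s c_s$ a maximum of finitely many continuous functions of $\theta_i$ (Lemmas \ref{lemma:br_uu_uy} and \ref{lemma:cs_maximum}) --- and then invokes Jensen's theorem as a black box to get convergence to $NE(\tilde{G})$, finishing exactly as you do via Theorems \ref{thm:equlibrium-and-potential} and \ref{thm:generic} and uniqueness of equilibrium prices. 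You instead reprove the needed special case of that convergence theorem from scratch: monotonicity of $\Phi$ along best-reply steps, extraction of a convergent sub-subsequence over a length-$T$ window with a fixed activation schedule, and the standard ``no further potential increase at an accumulation point plus uniqueness of the maximizer'' argument to conclude that every accumulation point is a Nash equilibrium. Both arguments are sound. Your version is self-contained and makes explicit where the liveness bound $T$ enters (the finitely-many-schedules pigeonhole); the paper's version is shorter and reuses the explicit structure of best responses, which it needs anyway for Proposition \ref{prop:br_computed_efficiently} and Proposition \ref{prop:prd_to_br}. Your appeal to Berge's maximum theorem is legitimate here since $\tilde{u}_i$ is jointly continuous on the compact set $S$ under the $0\ln 0=0$ convention and the feasible set $S_i$ is fixed, and your closing remark correctly identifies and resolves the only delicate point, namely continuity and strict concavity at boundary profiles where some price vanishes (this is also the content of the paper's footnote about the best-reply function being well defined at zero for the associated utility).
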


The idea of the proof is to show that the best-response functions are single valued 
($\forall i,b_{-i} : \Tilde{u}_i(\cdot, b_{-i})$ has a unique maximizer) and continuous (using the structure of best-response bids). Together with the existence of the potential function $\Phi$ it holds that the analysis of \cite{jensen2009stability} applies for these dynamics with the associated utilities\footnote{Note that while the best-reply function with respect to the standard utility function is formally undefined at zero, the associated utility and its best-reply function are well defined and continuous at zero.} and thus convergence is guaranteed.    

One of the appealing points about proportional response dynamics is their simplicity --- in each update, a player observes the obtained utilities and can easily compute the next set of bids. We show that also the best response of a player can be computed efficiently by reducing the calculation to a search over a small part of the subsets of all goods which can be solved by a simple iterative process. 

\vspace{2pt}
\begin{proposition}\label{prop:br_computed_efficiently}
    For every player $i$ and any fixed bid profile $b_{-i}$ for the other players, the best response of $i$ is unique and can be computed in 
    polynomial 
    time.
\end{proposition}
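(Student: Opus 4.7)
The plan is to reduce best-response computation to a one-dimensional ``water-filling'' problem solvable by sorting. Fix player $i$ and set $q_j := \sum_{k \neq i} b_{kj}$, so $p_j = b_{ij} + q_j$. The best response maximizes
$\tilde u_i(b_i, b_{-i}) = \sum_j b_{ij} \ln a_{ij} + \sum_j (b_{ij}+q_j)\bigl(1 - \ln(b_{ij}+q_j)\bigr)$
over the simplex $\{b_i \geq 0 : \sum_j b_{ij} = B_i\}$. First I would establish \emph{uniqueness}: the map $x \mapsto x(1 - \ln x)$ has second derivative $-1/x < 0$ on $(0,\infty)$, so each term in the second sum is strictly concave in $b_{ij}$ while the first sum is linear; the objective is therefore strictly concave on a compact convex set, and a unique maximizer exists.

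Next, I would derive the KKT conditions to characterize this maximizer. A direct computation gives $\partial \tilde u_i / \partial b_{ij} = \ln(a_{ij}/p_j)$, so with multiplier $\lambda$ for the budget and $\mu_j \geq 0$ for $b_{ij} \geq 0$, stationarity reads $\ln(a_{ij}/p_j) = \lambda - \mu_j$ with $\mu_j b_{ij} = 0$. Setting $c := e^{-\lambda}$ and $r_j := q_j/a_{ij}$ (goods with $a_{ij}=0$ receive bid $0$), the conditions collapse to: $b_{ij}^* > 0$ iff $r_j < c$, in which case $b_{ij}^* = a_{ij} c - q_j$. The budget constraint then pins down $c = \bigl(B_i + \sum_{j \in S} q_j\bigr) / \sum_{j \in S} a_{ij}$, where $S$ is the support.

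The key structural consequence is that $S$ must be a \emph{prefix} when the goods are sorted by $r_j$: the KKT conditions force $\max_{j \in S} r_j < c \leq \min_{j \notin S} r_j$. The algorithm is therefore: sort goods with $a_{ij} > 0$ by $r_j$ ascending into $r_{(1)} \leq \cdots \leq r_{(m')}$; sweep $k = 1,\ldots,m'$, maintaining running sums to compute $c_k = \bigl(B_i + \sum_{\ell \leq k} q_{(\ell)}\bigr)/\sum_{\ell \leq k} a_{i,(\ell)}$ in $O(1)$ per step; output the unique $k$ satisfying $r_{(k)} < c_k \leq r_{(k+1)}$ (with $r_{(m'+1)} := +\infty$), setting $b^*_{i,(\ell)} = a_{i,(\ell)} c_k - q_{(\ell)}$ for $\ell \leq k$ and $0$ otherwise. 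Total cost: $O(m \log m)$.

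The main thing left to verify --- and the closest thing to a technical obstacle --- is that this sweep always produces exactly one valid $k$. Uniqueness of the valid $k$ is immediate from strict concavity of $\tilde u_i(\cdot, b_{-i})$; existence is the KKT argument run in reverse, since the unique maximizer guaranteed by strict concavity and compactness corresponds to exactly one prefix in the sorted list. Boundary cases (goods with $a_{ij}=0$ or $q_j=0$, and ties among the $r_j$) are handled by the same analysis without affecting correctness or the asymptotic runtime.
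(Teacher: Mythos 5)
Your proposal is correct and follows essentially the same route as the paper: strict concavity of $\tilde u_i(\cdot,b_{-i})$ gives uniqueness, the first-order conditions yield the water-filling form $b^*_{ij}=(a_{ij}c-q_j)^+$ with a support that is a prefix under the bang-per-buck ordering, and a sort-and-sweep gives $O(m\log m)$. The only (immaterial) difference is how the correct prefix is selected: you test the KKT consistency condition $r_{(k)}<c_k\leq r_{(k+1)}$ directly, whereas the paper proves (Lemma \ref{lemma:cs_maximum}) that the right prefix is the one maximizing $c_s$ and simply takes the maximum over the sweep; both criteria identify the same unique prefix.
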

Roughly, best responses are characterized uniquely by a one-dimensional variable $c^*$. For every subset of goods $s$ we define a variable $c_s$ and prove that $c^*$ is the maximum amongst all $c_s$. So finding $c^*$ is equivalent to searching a specific subset with maximal $c_s$. The optimal subset of goods admits a certain property that allows to narrow down the search domain from all subsets to only $m$ subsets.

The relation between the best response and proportional response updates can intuitively be thought of as follows. While in PRD players split their budget between all the goods according the utility that each good yields, and so gradually shift more budget to the more profitable subset of goods, 
best response bids of player $i$ with respect to $\tilde{u}_i$ can be understood as spending the entire budget on a subset of goods which, after bidding so (considering the effect of bids on prices), will jointly have the maximum bang-per-buck (in our notation $a_{ij}/p_j$) amongst \textit{all subsets of goods}, given the bids $b_{-i}^t$ of the other players. Those bids can be regarded as ``water-filling'' bids as they level the bang-per-buck amongst all goods purchased by player $i$ (for a further discussion see the appendix).

It turns out that there is a clear formal connection between the best response of a player in the associated game and the proportional response update rule in the true game: the best response bids are the limit point of an infinite sequence of proportional response updates by the same player when the bids of the others are held fixed, as  expressed in the following proposition.

\vspace{2pt}
\begin{proposition}\label{prop:prd_to_br}
    Fix any player $i$ and fix any bid profile  $b_{-i}$ for the other players.\ Let $b_i^* = argmax_{b_i \in S_i} \tilde{u}_i(b_i, b_{-i})$ and let $(b_i^t)_{t=1}^\infty$ be a sequence of consecutive proportional response steps applied by player $i$, where $b_{-i}$ is held fixed at all times $t$. Then $\lim_{t \to \infty} b_i^t = b_i^*$.
\end{proposition}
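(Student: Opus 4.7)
The plan is to interpret iterated PRD by player $i$ (with $b_{-i}$ held fixed) as a monotone ascent on the strictly concave function $f(b_i) := \Phi(b_i, b_{-i})$, show that the trajectory converges to a unique fixed point of the PRD map, and then identify that fixed point with the unique maximizer $b_i^*$ of $f$. By Theorem \ref{thm:potential}, $\Phi(b) - \tilde u_i(b) = \sum_{i' \neq i, j} b_{i'j}\ln a_{i'j}$ is a function of $b_{-i}$ alone, so $b_i^* = \arg\max_{b_i \in S_i} \tilde u_i(\cdot, b_{-i}) = \arg\max_{b_i \in S_i} f$. Writing $q_j := \sum_{k \neq i} b_{kj}$ (fixed) and $p_j = q_j + b_{ij}$, direct computation gives $\partial^2 f/\partial b_{ij}^2 = -1/p_j < 0$ with vanishing mixed partials, so $f$ is strictly concave on $S_i$ and $b_i^*$ is its unique maximizer.

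Next, I would invoke the key lemma of Section \ref{sec:player-subsets} for the singleton $\{i\}$: each single-player PRD step weakly increases $\Phi$, and a quantitative form of this ascent (in the spirit of the mirror-descent interpretation of PRD from \cite{birnbaum2011distributed}) yields $b_i^{t+1} - b_i^t \to 0$. The PRD map $R$ is continuous on $\{b_i \in S_i : p_j > 0 \text{ whenever } a_{ij} > 0\}$, a property preserved along the trajectory by the initialization $b_{ij}^0 > 0$ whenever $a_{ij} > 0$, so every accumulation point $b_i^\dagger$ of $(b_i^t)$ satisfies $R(b_i^\dagger) = b_i^\dagger$. The fixed-point set $\mathrm{Fix}(R) \cap S_i$ is finite, since each fixed point is uniquely determined by its support $J_+ = \{j : b_{ij} > 0\}$ via the equations $a_{ij}/(q_j + b_{ij}) = u_i/B_i$ for $j \in J_+$ and $\sum_{j \in J_+} b_{ij} = B_i$. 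Because the accumulation set of a sequence with vanishing one-step increments in a compact space is connected, and here sits inside a finite set, it must be a singleton; hence $b_i^t \to b_i^\dagger$ for some $b_i^\dagger \in \mathrm{Fix}(R)$.

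Finally, I would show $b_i^\dagger = b_i^*$. The fixed-point condition gives $a_{ij}/p_j^\dagger = u_i^\dagger/B_i$ on the support of $b_i^\dagger$, i.e.\ the stationarity part of the KKT conditions for $\arg\max f$. The KKT inequality $a_{ij}/p_j^\dagger \leq u_i^\dagger/B_i$ at a coordinate $j$ with $b_{ij}^\dagger = 0$ and $a_{ij} > 0$ is the delicate step, since such points are genuine fixed points of $R$ and a direct fixed-point argument fails. I resolve this by a local instability argument: if the inequality were violated, the multiplicative ratio $b_{ij}^{t+1}/b_{ij}^t = B_i a_{ij}/(p_j^t u_i^t)$ would converge to some $c > 1$, and since $b_{ij}^t > 0$ at every finite $t$ (by positivity preservation of PRD), the quantity $b_{ij}^t$ would grow geometrically for large $t$, contradicting $b_{ij}^t \to 0$. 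Hence the full KKT conditions hold at $b_i^\dagger$, and strict concavity of $f$ forces $b_i^\dagger = b_i^*$. The main obstacle is precisely this last step --- ruling out convergence to spurious boundary fixed points of $R$ where the KKT inequality fails --- which requires combining the quantitative ascent (to pin down a single limit) with the local instability analysis of the PRD update (to identify that limit with the best response).
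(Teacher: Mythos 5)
Your architecture is genuinely different from the paper's, and while the endpoints are right, two of your load-bearing steps are asserted rather than proved. The paper never establishes $b_i^{t+1}-b_i^t\to 0$; instead it rules out every condensation point other than $b_i^*$ directly: (i) a non-fixed point $b_i^{**}$ cannot be a subsequential limit because $\tilde{u}_i\circ f_i$ is continuous and $\tilde{u}_i(f_i(b_i^{**}))>\tilde{u}_i(b_i^{**})$, so once the subsequence enters a small neighborhood of $b_i^{**}$ the monotone sequence $\tilde{u}_i(b_i^t)$ jumps strictly above $\tilde{u}_i(b_i^{**})$ and can never return; (ii) a fixed point $b_i^{**}\neq b_i^*$ is excluded by showing (via the $c_s$ characterization of Lemma \ref{lemma:cs_maximum}) that its support cannot contain $s^*$, so some good $j\in s^*\setminus s^{**}$ has multiplicative update ratio bounded above $1$ near $b_i^{**}$ and $b_{ij}^t$ cannot tend to $0$ --- essentially your instability argument, but deployed against \emph{all} spurious fixed points rather than only to verify a KKT inequality at an already-identified limit. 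Your route --- vanishing increments, finiteness of the fixed-point set, connectedness of the accumulation set, then KKT identification --- is a legitimate alternative, and your finiteness observation and the KKT step are correct.

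The gaps are these. First, $b_i^{t+1}-b_i^t\to 0$ does not follow from monotonicity of $\Phi$ alone; you need a quantitative ascent such as $\Phi(b^{t+1})-\Phi(b^t)\ge D(b_i^t\|b_i^{t+1})$, which is obtainable from the three-point property of the KL-proximal characterization in Lemma \ref{lemma:prd_is_maximizer} together with Lemmas \ref{lemma:potential_linear_approx} and \ref{lemma:kl_div_bids}, followed by Pinsker's inequality --- but ``in the spirit of mirror descent'' is not a proof, and your entire connectedness argument hinges on this estimate. Second, the step ``every accumulation point is a fixed point of $R$'' uses continuity of $R$ at the accumulation point, which fails exactly when $\theta_{ij}=0$ and $b_{ij}^\dagger=0$ for some $j$ with $a_{ij}>0$ (there $u_i$, hence $R$, is discontinuous, and ``$R(b_i^\dagger)=b_i^\dagger$'' is not even well defined). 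The paper isolates this as a separate case and kills it by noting $a_{ij}/p_j^t=a_{ij}/b_{ij}^t\to\infty$ while $u_i^t/B_i\le m/B_i$, forcing geometric growth of $b_{ij}^t$; your final instability argument contains the same idea but is applied too late in your chain of reasoning to rescue the middle step as written. Both gaps are repairable, but neither is cosmetic.
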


\section{Simultaneous Play by Subsets of Agents}\label{sec:player-subsets}

In this section, we shift our focus back to proportional response dynamics under the activation asynchrony model in which the adversary can choose in every step any subset of players to update their bids. Towards proving that proportional response dynamics converges to a market equilibrium in this setting, we utilize the associated game and potential function presented in Section \ref{sec:associated-game} to show that \textit{any activated subset} of players performing a PRD step will increase the potential. 
Formally, let $v\subseteq \mathcal{B}$ be a subset of players activated by the adversary and let $f_v(b)$ be a function that applies proportional response to members of $v$ and acts as the identity function for all the other players. The update for time $t+1$ when the adversary activates a subset of the players $v^t \subseteq \mathcal{B}$ is therefore:
\vspace{-4pt}
\[   
b_{ij}^{t+1}=(f_{v^t}(b^t))_{ij} = 
     \begin{cases}
       \frac{a_{ij}x_{ij}^t}{u_i^t}B_i &\quad\text{if}\ i \in v^t\\
       b_{ij}^t &\quad\text{otherwise}.\\
     \end{cases}
\]

\vspace{2pt}
\begin{lemma}\label{lemma:subset-updates}
    For all $v \subseteq \mathcal{B}$ and for all $b\in S$ it holds that $\Phi(f_v(b)) > \Phi(b)$, unless $f_v(b)=b$.     
\end{lemma}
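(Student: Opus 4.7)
My plan is to compute $\Phi(f_v(b)) - \Phi(b)$ directly and show that it equals a sum of Kullback--Leibler terms that is nonnegative, with a unique zero at fixed points of $f_v$. Write $b' = f_v(b)$, $p_j = \sum_i b_{ij}$, $p'_j = \sum_i b'_{ij}$, and recall that for $i \in v$ the PR rule is $b'_{ij} = B_i a_{ij} b_{ij}/(p_j u_i)$ with $u_i = \sum_j a_{ij} b_{ij}/p_j$, while $b'_{ij} = b_{ij}$ for $i \notin v$. In particular, for every $i \in v$ both $b_i$ and $b'_i$ sum to $B_i$, and $p'_j - p_j = \sum_{i \in v}(b'_{ij} - b_{ij})$, so $\sum_j (p'_j - p_j) = 0$.

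\textbf{Step 1: Rewrite the linear part.} The difference decomposes as
\[
\Phi(b') - \Phi(b) = \sum_{i\in v}\sum_j (b'_{ij} - b_{ij})\ln a_{ij} + \sum_j\bigl[p'_j(1-\ln p'_j) - p_j(1-\ln p_j)\bigr].
\]
Using the PR update rule, $\ln a_{ij} = \ln b'_{ij} - \ln b_{ij} + \ln p_j + \ln u_i - \ln B_i$ for $i \in v$. Substituting this and using $\sum_j(b'_{ij}-b_{ij})=0$ to kill the $\ln u_i$ and $\ln B_i$ terms, the linear part becomes
\[
\sum_{i\in v}\sum_j (b'_{ij}-b_{ij})(\ln b'_{ij} - \ln b_{ij}) + \sum_j (p'_j - p_j)\ln p_j = \sum_{i\in v}\bigl[\mathrm{KL}(b'_i\|b_i)+\mathrm{KL}(b_i\|b'_i)\bigr] + \sum_j (p'_j-p_j)\ln p_j.
\]

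\textbf{Step 2: Simplify the entropy part.} Because $\sum_j p'_j = \sum_j p_j = 1$, the second bracket reduces to $-\sum_j p'_j \ln p'_j + \sum_j p_j \ln p_j$. Adding this to the leftover $\sum_j (p'_j - p_j)\ln p_j$ from Step 1 gives exactly $-\sum_j p'_j \ln(p'_j/p_j) = -\mathrm{KL}(p'\|p)$. Combining,
\[
\Phi(b') - \Phi(b) = \sum_{i\in v}\bigl[\mathrm{KL}(b'_i\|b_i)+\mathrm{KL}(b_i\|b'_i)\bigr] - \mathrm{KL}(p'\|p).
\]

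\textbf{Step 3: Dominate the price KL via log-sum.} Since $b'_{ij}=b_{ij}$ for $i\notin v$, the extra summands contribute $0$, so $\sum_{i\in v}\mathrm{KL}(b'_i\|b_i)=\sum_{i\in \mathcal{B}}\sum_j b'_{ij}\ln(b'_{ij}/b_{ij})$. Applying the log-sum inequality coordinate-wise in $j$,
\[
\sum_i b'_{ij}\ln(b'_{ij}/b_{ij}) \ge p'_j\ln(p'_j/p_j),
\]
and summing over $j$ yields $\sum_{i\in v}\mathrm{KL}(b'_i\|b_i) \ge \mathrm{KL}(p'\|p)$. Substituting into the identity from Step 2,
\[
\Phi(b')-\Phi(b) \ge \sum_{i\in v}\mathrm{KL}(b_i\|b'_i) \ge 0,
\]
and the right-hand side is strictly positive unless $b_i = b'_i$ for every $i \in v$, i.e.\ unless $f_v(b) = b$.

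\textbf{Main obstacle.} The only non-routine step is Step 3: we cannot telescope over players one at a time because the PR update of each $i\in v$ uses the \emph{common} pre-update prices $p_j$, not intermediate prices, so the two KL terms in Step 2 involve different aggregations (per-buyer vs.\ per-good). The log-sum inequality is exactly the tool that compares these aggregations at each good $j$, which is what makes the argument go through uniformly in $v$ rather than only for $v = \mathcal{B}$.
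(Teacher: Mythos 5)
Your proof is correct, and it reaches the same two structural ingredients as the paper --- the identity $\Phi(f_v(b))-\Phi(b)=[\text{linear term}]-D(p'\|p)$ and the domination $D(p'\|p)\le D(b'_v\|b_v)$ --- but it assembles them by a genuinely more direct route. The paper factors the argument through a mirror-descent characterization: it first proves a Bregman-type identity $\Phi(b_v)=\ell_\Phi(b_v;b'_v)-D(p\|p')$, then proves separately that the proportional response point is the \emph{maximizer} of $b_v\mapsto\ell_\Phi(b_v;b'_v)-D(b_v\|b'_v)$, and chains the two inequalities. You bypass the variational lemma entirely: by substituting the explicit update formula $\ln a_{ij}=\ln b'_{ij}-\ln b_{ij}+\ln p_j+\ln u_i-\ln B_i$ into the linear term, you turn it into the symmetrized divergence $\sum_{i\in v}[D(b'_i\|b_i)+D(b_i\|b'_i)]$, which is manifestly nonnegative, and you obtain the explicit lower bound $\Phi(f_v(b))-\Phi(b)\ge\sum_{i\in v}D(b_i\|(f_v(b))_i)$ in one computation (the same bound is implicit in the paper's argument but never stated). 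You also prove the price-versus-bids comparison by the log-sum inequality applied good-by-good, where the paper invokes joint convexity of the KL divergence at $\lambda=\tfrac12$; these are essentially the same fact, but your version is the more standard and transparent one. What the paper's modular route buys is reusability: the maximizer lemma is what lets the authors later interpret a PRD step as $\arg\max\{\ell_\Phi(\cdot;b)-D(\cdot\|b)\}$ and contrast it with best response as $\arg\max\{\ell_\Phi(\cdot;b)-D(p\|p^t)\}$, a comparison your self-contained computation does not expose. One small point of hygiene that applies equally to both proofs: the divergences $D(b_i\|b'_i)$ are finite and vanish only at equality because each $b_i$ and $b'_i=(f_v(b))_i$ sums to $B_i$ and because $b'_{ij}=0$ forces $a_{ij}b_{ij}=0$ under the update rule (the paper's standing assumption $a_{ij}>0$ and the convention $0\ln 0=0$ handle the boundary); it is worth saying this explicitly since the lemma is claimed for all $b\in S$.
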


The proof shows that for any subset of players $v^t$, a PRD step $b^{t+1}$ is the solution to some maximization problem of a function $g^t(b)$ different from $\Phi$, such that  $\Phi(b^{t+1})>g^t(b^{t+1})\geq g^t(b^t)=\Phi(b^t)$. 

Notable to mention is the sequential case where all subsets are singletons, i.e., for all $t$, $v^t=\{i^t\}$ for some $i^t \in \mathcal{B}$. 
In that case, the above result yields that the best-response bids can be expressed as the solution to an optimization problem over the bids $b$ on a function that is monotone in the KL divergence between the \textit{prices} induced by $b$ and the current \textit{prices}, 
whereas PRD is the solution to an optimization problem on a similar function, but one that depends on the KL divergence between the \textit{bids} $b$ and the current \textit{bids}. Thus, sequential PRD can be regarded as a relaxation of best response; on the one hand, it is somewhat simpler to compute a step, and on the other hand, it takes more steps to reach the best response  (see Proposition \ref{prop:prd_to_br} and the simulations in Section \ref{sec:simulations}).

\section{Convergence of Asynchronous Proportional Response Dynamics}\label{sec:convergence}

With the results from the previous sections under our belts (namely, the associated game, Theorems \ref{thm:equlibrium-and-potential}, \ref{thm:potential} about its potential and equilibria, and Lemma \ref{lemma:subset-updates} about updates by several players simultaneously), we are now ready to complete the proof of Theorem \ref{thm:convergence} on the convergence of asynchronous proportional response dynamics. We explain here the idea of the proof. The full proof is in the appendix.

\textbf{Proof idea of Theorem \ref{thm:convergence}:} Our starting point is that we now know that Proportional Response Dynamics (PRD) steps by subsets of players increase the potential. Therefore, the bids should somehow converge to reach the maximum potential, which is obtained only at the set of market equilibria.
Technically, since the sequence of bids $b^t$ is bounded, it must have condensation points. The proof then proceeds by way of contradiction. If the sequence does not converge to the set of equilibrium bid profiles, $ME = \{b^* \mid b^* \text{ is a market equilibrium bid profile}\}$, then there is some subsequence that converges to a bid profile $b^{**}$ outside of this set, which by Theorem \ref{thm:equlibrium-and-potential}, must achieve a lower potential than any $b^* \in ME$ 
(since it is not a market equilibrium, and recall that only market equilibria maximize the potential function).

From this point, the main idea is to show that if the dynamic preserves a ``livness'' property where the maximum time interval between consecutive updates of a player is bounded by some constant $T$, then the dynamic must reach a point where the bids are sufficiently close to $b^{**}$ such that there must be some future update by some subset of the players under which the potential increases to more than $\Phi(b^{**})$, contradicting the existence of condensation points other than market equilibria (note that the sequence of potential values $\Phi(b^t)$ is increasing in $t$).
To show this, the proof requires several additional arguments on the continuity of compositions of PRD update functions that arise under adversarial scheduling, and the impact of such compositions on the potential function. The full proof is in the appendix.

\section{Generic Markets}\label{sec:generic-markets}
Here we show that in the generic case, linear Fisher markets have a unique equilibrium bid profile. 
While it is well known that in linear Fisher markets equilibrium prices and utilities are unique, and the equilibrium bids and allocations form convex sets (see 
 section \ref{sec:model}), we show that multiplicity of equilibrium bid profiles can result only from a special degeneracy in the market parameters that has measure zero in the parameter space. In other words, if the market parameters are not carefully tailored to satisfy a particular equality (formally described below), or, equivalently, if the parameters are slightly perturbed, the market will have a unique equilibrium.  Similar property was known for linear exchange markets \cite{bonnisseau2001continuity} and we present a simple and concise proof for the Fisher model. 

\vspace{2pt}
\begin{definition}\label{def:generic}
    A Fisher market is called \textit{generic} if the non-zero valuations of the buyers $(a_{ij})$ do not admit any multiplicative equality. That is, for any distinct and non empty $K, K'\subseteq \mathcal{B}\times \mathcal{G}$ it holds that $\prod_{(i,j)\in K}a_{ij} \neq \prod_{(i',j')\in K'}a_{i'j'}$.
\end{definition}

\vspace{2pt}
\begin{theorem}\label{thm:generic}
    Every generic linear fisher market has a unique market equilibrium bid profile $b^*$.
\end{theorem}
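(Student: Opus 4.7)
The plan is to reduce uniqueness of the equilibrium bid profile to uniqueness of the equilibrium allocation, and then show that non-uniqueness of allocations would force a forbidden multiplicative equality among the $a_{ij}$'s. Since at any market equilibrium $b^*_{ij} = x^*_{ij} p^*_j$ and the equilibrium prices $p^*$ are already known to be unique, it suffices to prove uniqueness of the equilibrium allocation $X^*$.

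I would first suppose, for contradiction, that $X^1 \neq X^2$ are two equilibrium allocations. By the standard KKT-type characterization of linear Fisher equilibria recalled in Section~\ref{sec:model}, both are supported on the maximum bang-per-buck bipartite graph
\[
E^* \;:=\; \bigl\{(i,j) \in \mathcal{B} \times \mathcal{G} : a_{ij}/p^*_j = u^*_i/B_i\bigr\}.
\]
The difference $\Delta := X^1 - X^2$ is a nonzero real-valued function on $E^*$ satisfying the homogeneous constraints $\sum_i \Delta_{ij} = 0$ for every good $j$ (market clearing) and $\sum_j \Delta_{ij}\,p^*_j = 0$ for every buyer $i$ (budget exhaustion, which holds at every equilibrium). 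The next step would be to show that the support of $\Delta$, viewed as a bipartite subgraph of $E^*$, must contain a cycle: any degree-one good $j$ in this support forces $\Delta_{ij} = 0$ on its unique incident edge by market clearing, and any degree-one buyer $i$ forces $\Delta_{ij}\,p^*_j = 0$ by budget exhaustion, hence $\Delta_{ij} = 0$ whenever $p^*_j > 0$ (goods with $p^*_j = 0$ can be handled separately since every bid on them is automatically zero). Iteratively pruning leaves would otherwise drive $\Delta$ to zero on a forest, contradicting $\Delta \neq 0$.

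Given a simple cycle $i_1 - j_1 - i_2 - j_2 - \cdots - i_k - j_k - i_1$ inside $E^*$ (with indices taken mod $k$, and necessarily $k \geq 2$ by bipartiteness), I would then apply the MBB equality on every cycle edge to get $a_{i_l,j_l} = p^*_{j_l}\,u^*_{i_l}/B_{i_l}$ and $a_{i_{l+1},j_l} = p^*_{j_l}\,u^*_{i_{l+1}}/B_{i_{l+1}}$. Dividing within each pair and then multiplying over $l = 1,\dots,k$, the price factors cancel on each edge and the factors $u^*_{i_l}/B_{i_l}$ telescope around the cycle, yielding
\[
\prod_{l=1}^{k} a_{i_l, j_l} \;=\; \prod_{l=1}^{k} a_{i_{l+1}, j_l}.
\]
The two index sets $K := \{(i_l,j_l)\}_{l=1}^k$ and $K' := \{(i_{l+1},j_l)\}_{l=1}^k$ are non-empty and disjoint: any common element would, by matching second coordinates, force $i_l = i_{l+1}$, contradicting simplicity of the cycle. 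This violates the genericity assumption of Definition~\ref{def:generic}, proving uniqueness of $X^*$, and thus of $b^*$.

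The main obstacle, in my view, is the middle step: formulating the null-flow reduction on the bipartite MBB graph precisely, handling the mild edge case of goods with zero equilibrium price, and making the leaf-pruning argument airtight. Once the graph-theoretic fact ``a nonzero null flow on a bipartite graph forces a cycle'' is in hand, the telescoping identity along the cycle and the contradiction with genericity are essentially mechanical.
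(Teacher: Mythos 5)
Your proof is correct and follows essentially the same route as the paper's: both extract a cycle in a bipartite graph from the difference of two equilibria and then derive the multiplicative identity $\prod_l a_{i_l j_l} = \prod_l a_{i_{l+1} j_l}$ by telescoping the bang-per-buck equalities around the cycle, contradicting genericity (this is exactly the paper's Lemma~\ref{lemma:cycles}). The only cosmetic differences are that you work in allocation space and locate the cycle directly in the MBB graph via a null-flow/leaf-pruning argument, whereas the paper works with bids and passes to a convex combination of the two equilibria to obtain a single equilibrium whose support contains the cycle.
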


Before discussing the proof of Theorem \ref{thm:generic}, we have the following  corollaries.

\vspace{2pt}
\begin{corollary}
    For generic linear Fisher markets, proportional response dynamics with adversarial activation asynchrony, where each player is activated at least once every $T$ steps, converge to the unique market equilibrium. 
\end{corollary}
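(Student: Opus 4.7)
The plan is to combine the two main results of the paper in a direct way. First I would invoke Theorem \ref{thm:convergence}, which establishes that under adversarial activation asynchrony with the liveness condition (each player activated at least once every $T$ consecutive steps), the sequence of bid profiles $(b^t)_{t\geq 0}$ produced by the proportional response update rule approaches the set of market equilibrium bid profiles
$$ME \;=\; \{\,b^* \in S \;:\; (x(b^*),p(b^*)) \text{ satisfies Definition \ref{def:ME}}\,\}.$$
Formally, this means the set-distance $d(b^t, ME) := \inf_{b^* \in ME}\|b^t - b^*\|$ tends to zero as $t\to\infty$.

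Next I would apply Theorem \ref{thm:generic}, which states that for a generic linear Fisher market, the market equilibrium bid profile is unique. Consequently $ME = \{b^*\}$ is a singleton, and therefore
$$\|b^t - b^*\| \;=\; d(b^t, ME) \;\xrightarrow[t\to\infty]{}\; 0,$$
which is precisely convergence of the bid sequence to the unique market equilibrium $b^*$. Theorem \ref{thm:convergence} also guarantees that the prices $p^t = \sum_i b_i^t$ converge to the unique equilibrium prices $p^*$, so nothing extra needs to be checked at the level of prices.

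The only subtlety worth noting is that the bid profiles all lie in the compact product of simplices $S = \prod_i S_i$, and the topology in which Theorem \ref{thm:convergence} is stated (distance-to-set in the ambient Euclidean norm) agrees on $S$ with convergence in the product topology of the simplices; thus the passage from ``approaches a set'' to ``converges to a point'' when that set is a singleton is immediate and requires no additional compactness or continuity argument. There is no real obstacle in this proof: it is a direct corollary that lifts set-convergence to point-convergence via the uniqueness statement of Theorem \ref{thm:generic}.
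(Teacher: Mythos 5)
Your proposal is correct and matches the paper's own argument exactly: the paper likewise derives this corollary by combining Theorem \ref{thm:convergence} (convergence to the set of market equilibria) with Theorem \ref{thm:generic} (uniqueness of the equilibrium bid profile for generic markets), so that set-convergence becomes point-convergence when the set is a singleton. Your additional remark that no extra compactness or continuity argument is needed is accurate and consistent with the paper's (terser) treatment.
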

The main theorem, Theorem \ref{thm:convergence}, suggests that proportional response dynamics with adversarial activation asynchrony converges to the set of market equilibria. Since the previous theorem guarantees that a generic market has a unique market equilibrium bid profile, it is clear the dynamics converges to that bid profile.

\vspace{2pt}
\begin{corollary}
    In generic linear Fisher markets, no-swap regret dynamics in the associated game converge to the market equilibrium. 
\end{corollary}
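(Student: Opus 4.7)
The strategy combines three ingredients. Standard online learning theory guarantees that the empirical distribution of play under any no-swap (a.k.a.\ internal) regret minimizing dynamic accumulates on the set of correlated equilibria of the underlying game. I plan to show that for the associated game $\tilde{G}$ in a generic Fisher market, this set consists of the single Dirac measure at the unique market equilibrium bid profile $b^*$; convergence to $b^*$ then follows immediately.

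To identify the correlated equilibria of $\tilde{G}$, I would first invoke Theorem \ref{thm:potential}: $\tilde{G}$ is an exact potential game with potential $\Phi$. The key structural result, from \cite{neyman1997correlated}, is that in an exact potential game every correlated equilibrium is supported on the set of maximizers of the potential. Combining this with Theorem \ref{thm:equlibrium-and-potential}, which identifies $\arg\max_{b\in S}\Phi(b)$ with the set of market equilibrium bid profiles, and with Theorem \ref{thm:generic}, which ensures this set is a singleton $\{b^*\}$ in the generic case, I conclude that $\delta_{b^*}$ is the unique correlated equilibrium of $\tilde{G}$. The corollary then follows because any weak-star limit point of the empirical distribution of play under no-swap regret dynamics must lie in the (singleton) set of correlated equilibria, forcing convergence of the empirical distribution, and hence of the bids, to $b^*$.

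The main obstacle is that Neyman's theorem is stated for finite potential games, whereas $\tilde{G}$ has continuous strategy spaces $S_i \subset \mathbb{R}^m$. The argument transfers with minor adaptations: the exact potential identity $\tilde{u}_i(b'_i,b_{-i}) - \tilde{u}_i(b_i,b_{-i}) = \Phi(b'_i,b_{-i}) - \Phi(b_i,b_{-i})$ implies that, under any correlated distribution whose support is not contained in $\arg\max\Phi$, some player admits a strictly profitable swap rule (e.g., one that slightly shifts its bids along a direction of increase of $\Phi$), violating the correlated equilibrium condition. I would spell out this extension carefully, and similarly justify the passage from no-swap regret guarantees to convergence of the empirical distribution in the continuous associated game, exploiting continuity of the $\tilde{u}_i$ on the compact $S$ together with the strict separation of $b^*$ from the rest of $\arg\max\Phi$ guaranteed by genericity. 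A clean alternative is to discretize $S$ on a sufficiently fine grid, apply Neyman's theorem verbatim, and take the grid size to zero using uniform continuity of the $\tilde{u}_i$ and $\Phi$.
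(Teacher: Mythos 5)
Your proposal is correct and follows essentially the same route as the paper: invoke \cite{neyman1997correlated} to identify the correlated equilibria of $\tilde{G}$ with mixtures over $\arg\max_b \Phi$, which by Theorems \ref{thm:equlibrium-and-potential} and \ref{thm:generic} is the singleton $\{b^*\}$, so no-swap regret dynamics converge to it. The only superfluous part is the obstacle you raise about extending Neyman's theorem from finite games: that result is stated precisely for compact convex strategy sets with a continuously differentiable concave potential (which $\Phi$ is, since $p \mapsto p(1-\ln p)$ is concave and $p$ is linear in $b$), so the discretization and swap-rule extension machinery you sketch is unnecessary.
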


This follows from \cite{neyman1997correlated}, who showed that in games with convex strategy sets and a continuously differentiable potential function $\Phi$ (as in our case), the set of correlated equilibria consists of mixtures of elements in $\arg\max_b \Phi$. Theorem \ref{thm:equlibrium-and-potential} yields that $\arg\max_b \Phi = b^*$, and so there is a unique correlated equilibrium, which is the market equilibrium, and we have that no-swap regret guarantees convergence to it. 

To prove Theorem \ref{thm:generic}, we use a representation of the bids in the market as a bipartite graph of players and goods $\Gamma(b)=\{ V,E\}$ with $V=\mathcal{B}\cup \mathcal{G}$ and $E=\{(i,j) |\  b_{ij} > 0\}$. The proof shows that if a market has more than one equilibrium bid profile, then there has to be an equilibrium $b$ with $\Gamma(b)$ containing a cycle, whereas the following lemma forbids this for generic markets.

\vspace{2pt}
\begin{lemma}\label{lemma:cycles}
    If $b^*$ are equilibrium bids in a generic linear Fisher market, then $\Gamma(b^*)$ has no cycles.
\end{lemma}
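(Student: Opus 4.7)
The plan is to exploit the bang-per-buck equality at equilibrium to turn any cycle in $\Gamma(b^*)$ into a multiplicative relation amongst the $a_{ij}$'s, which is forbidden by genericity. Recall from Section~\ref{sec:model} that at any market equilibrium, $x^*_{ij} > 0$ (equivalently $b^*_{ij} > 0$ when $p^*_j > 0$) implies $a_{ij}/p^*_j = u^*_i/B_i$. So every edge $(i,j) \in E(\Gamma(b^*))$ satisfies
\[
a_{ij} \;=\; \frac{u^*_i}{B_i}\, p^*_j.
\]
Under the standing normalization one can assume $p^*_j > 0$ and $u^*_i > 0$ (if some $p^*_j = 0$ the good is unsold and contributes no edge; similarly for $u^*_i$), so the factors on the right are strictly positive.

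Now I would argue by contradiction: suppose $\Gamma(b^*)$ contains a cycle $C$. Since $\Gamma(b^*)$ is bipartite, $C$ has even length $2k$ and its vertices can be enumerated around the cycle as $i_1, j_1, i_2, j_2, \ldots, i_k, j_k$, with $i_1, \ldots, i_k$ distinct players and $j_1, \ldots, j_k$ distinct goods. Split the edges of $C$ into the two perfect matchings
\[
K \;=\; \{(i_t, j_t) : 1 \le t \le k\}, \qquad K' \;=\; \{(i_{t+1}, j_t) : 1 \le t \le k\}
\]
(indices taken mod $k$, so $i_{k+1} = i_1$). These are disjoint (so in particular distinct) and non-empty subsets of $\mathcal{B}\times\mathcal{G}$.

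Multiplying the equilibrium identity along each matching gives
\[
\prod_{(i,j)\in K} a_{ij} \;=\; \prod_{t=1}^{k} \frac{u^*_{i_t}}{B_{i_t}}\, p^*_{j_t}, \qquad
\prod_{(i,j)\in K'} a_{ij} \;=\; \prod_{t=1}^{k} \frac{u^*_{i_{t+1}}}{B_{i_{t+1}}}\, p^*_{j_t}.
\]
Each index $i_t$ and each index $j_t$ appears exactly once in each product, so the two right-hand sides are equal; hence $\prod_{(i,j)\in K} a_{ij} = \prod_{(i,j)\in K'} a_{ij}$ with $K \ne K'$ and both non-empty. This multiplicative equality among non-zero $a_{ij}$'s directly contradicts Definition~\ref{def:generic}, completing the proof.

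The bookkeeping (choice of alternating matchings so each player and each good is charged the same number of times on both sides) is the only delicate part; once one commits to enumerating the cycle as above, the cancellation is automatic. The main conceptual step, rather than an obstacle, is recognizing that the equilibrium bang-per-buck condition converts edges of $\Gamma(b^*)$ into rank-one factorizations $a_{ij} = \alpha_i \beta_j$ with $\alpha_i = u^*_i/B_i$ and $\beta_j = p^*_j$, which is exactly the structure that produces identical products along the two matchings of any even cycle.
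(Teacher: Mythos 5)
Your proof is correct and follows essentially the same route as the paper: both use the equilibrium bang-per-buck identity $a_{ij} = (u^*_i/B_i)\,p^*_j$ along the edges of a hypothetical cycle and cancel the $u^*_i/B_i$ and $p^*_j$ factors to obtain a multiplicative equality among the $a_{ij}$'s that violates genericity. Your phrasing via the two alternating matchings of the even cycle is just a cleaner bookkeeping of the paper's telescoping product of ratios; the underlying cancellation is identical.
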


A key observation for proving this lemma is that at a market equilibrium, for a particular buyer $i$, the quantity $a_{ij}/p^*_j$ is constant amongst goods purchased, and so it is possible to trace a cycle and have all the $p^*_j$ cancel out and obtain an equation contradicting the genericity condition.

An observation that arises from Lemma \ref{lemma:cycles} is that when the number of buyers in the market is of the same order of magnitude as the number of goods or larger, then, in equilibrium, most buyers will only buy a small number of goods. Since there are no cycles in $\Gamma(b^*)$ and there are $n+m$ vertices, there are at most $n+m-1$ edges. Thus, with $n$ buyers, the average degree of a buyer is $1 + \frac{m-1}{n}$.

\vspace{5pt}
%

\section{Simulations}\label{sec:simulations}
Next, we look at simulations of the dynamics that we study and compare the convergence of proportional response dynamics to best response dynamics in the  associated game, as discussed in Section \ref{sec:BR}. 
The metrics we focus on here for every dynamic are the Nash social welfare \cite{nash1950bargaining}, which, as mentioned in Section \ref{sec:model}, is maximized at the market equilibrium, and the Euclidean distance between the bids at time $t$ and the equilibrium bids. Additionally, we look at the progression over time of the value of the potential $\Phi(b^t)$ (for the definition, see Section \ref{sec:associated-game}).

Figure \ref{fig:br-and-pr-dynamics} presents simulations of an ensemble of markets, each with ten buyers and ten goods. The parameters in each market (defined in the matrices $A$ and $B$) are uniformly sampled, ensuring that the genericity condition (defined in Section \ref{sec:model}) holds with probability one. These parameters are also normalized, as explained in Section \ref{sec:model}. For each market, the parameters remain fixed throughout the dynamics. The initial condition in all simulation runs is the uniform distribution of bids over items, and the schedule is sequential, such that a single player updates their bids in each time step. 

Figure \ref{fig:prd} (main plot) show our metrics for PRD averaged over a sample of $300$ such simulations. The insets show the plots of a random sample of $50$ individual simulations (without averaging) over a longer time period. 
Figure \ref{fig:br} show similar plots for best response dynamics. 

As could be expected based on our analysis in Section \ref{sec:BR}, best response dynamics converge faster than PRD, as can be seen in the different time scales on the horizontal axes. 
A closer look at the individual bid dynamics depicted in the insets reveals a qualitative difference between the two types of dynamics: in PRD, the bids in each dynamic smoothly approach the equilibrium profile, whereas best response bid dynamics are more irregular.
Additionally, the collection of curves for the individual simulations shows that under uniformly distributed market parameters, both dynamics exhibit variance in convergence times, with a skewed distribution. In most markets, the dynamics converge quickly, but there is a distribution tail of slower-converging dynamics.

\begin{figure}[!t]
\centering
\vspace{-5pt}
	\begin{subfigure}{.36\linewidth}
		\includegraphics[width=0.94\linewidth]{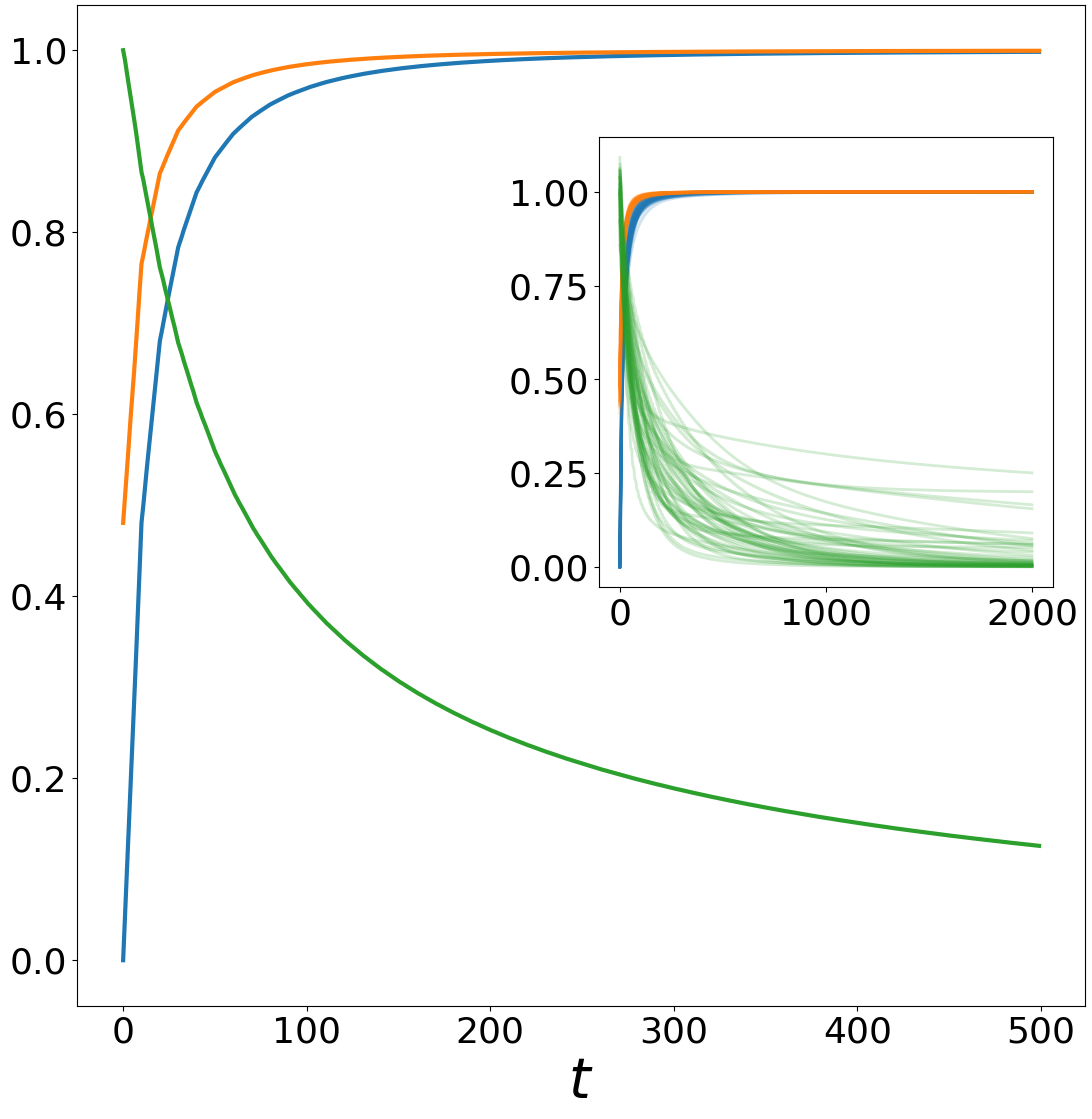}
		\caption{Proportional Response}
		\label{fig:prd}
	\end{subfigure}
        \hspace{14pt}
	\begin{subfigure}{.36\linewidth} 
		\includegraphics[width=0.94\linewidth]{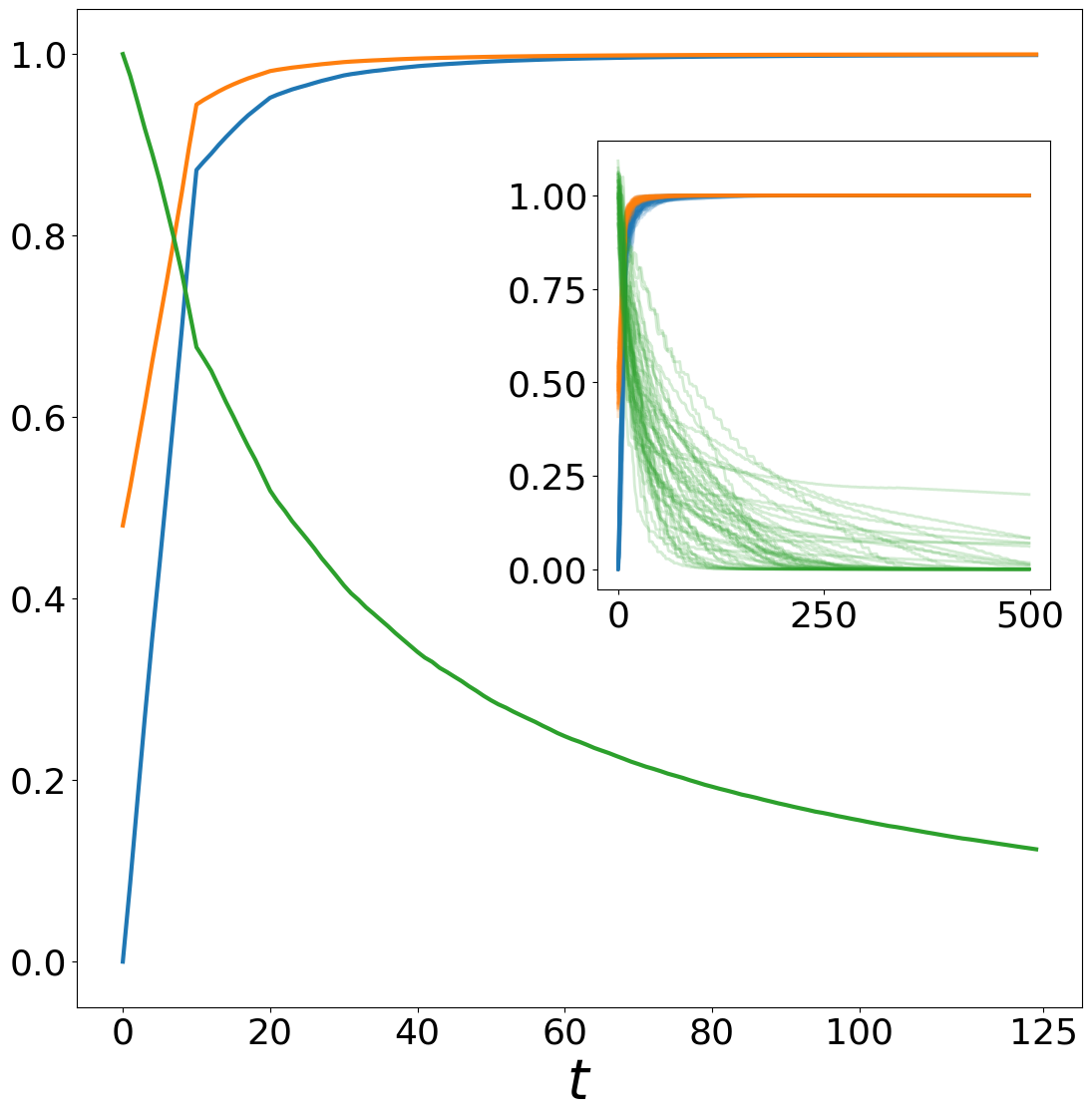}
		\caption{Best Response}  
		\label{fig:br}
	\end{subfigure}
	\begin{subfigure}{.17\linewidth} 
		\includegraphics[width=1.0\linewidth]{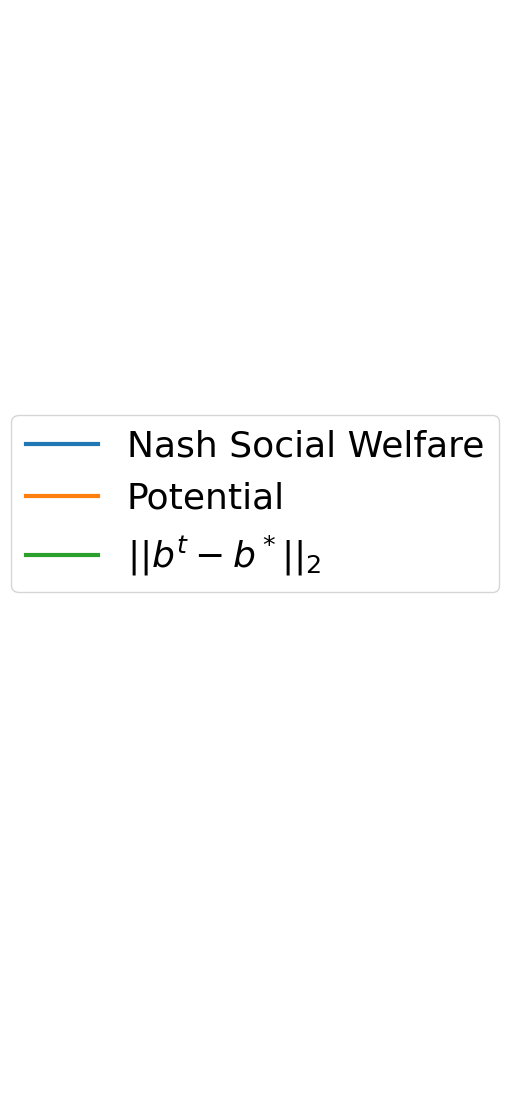}
	\end{subfigure}
\caption{{\small Proportional response and best response dynamics. The main figures show the progression of the average metrics over time and the insets show a collection of individual dynamics over a longer time period.}}
\label{fig:br-and-pr-dynamics}
\end{figure}

\section{Conclusion}\label{sec:conclusion}
We have shown that proportional response bid dynamics converge to a market equilibrium in a setting where the schedule of bid updates can be chosen adversarially, allowing for sequential or simultaneous updates for any subset of players. We proposed a novel approach to address this problem by identifying a family of associated games related to proportional response dynamics, showing their relation to the competitive equilibria of the market, and leveraging these relations to prove convergence of the dynamics.
En route, we showed that other types of dynamics, such as myopic best response and no swap regret, also converge in the associated game. Additionally, we note that our result on the uniqueness of market equilibria in the generic case (e.g., if the market parameters have some element of randomness) may also be of interest for future research in the Fisher market setting.   

One main open question that we did not analyze is whether proportional response dynamics converge under the full asynchrony model, which includes information delays. The analysis of this model raises several complications, as it creates further coupling between past and current bid profiles. We conjecture that if information delays are bounded, then convergence also occurs in this model. However, it is not clear whether our approach could be extended to argue that proportional response updates by subsets of players with respect to delayed information increase the potential in our associated game, or whether proving convergence in this setting will require new methods. 
One limitation of our analysis is that we provide a guarantee that under any bid update by any subset of players chosen by an adversary, the potential function of the associated game increases, but our technique does not specify by how much the potential increases in every step, and therefore, we do not provide speed of convergence results. Such analysis seems to require new techniques, and we see this as an interesting open problem for future work.   

\section*{Acknowledgments}
This work was supported by the European Research Council (ERC) under the European Union’s Horizon 2020 Research and Innovation Programme (grant agreement no. 740282) and by a grant from the Israeli Science Foundation (ISF number 505/23).
During part of this project, Yoav Kolumbus has been affiliated with the Hebrew University of Jerusalem.

\newpage
\bibliographystyle{plain}
\bibliography{async-prd-arxiv}

\newpage
\begin{appendices}
    
\section*{{\Large Appendices}}
In the following sections we provide the proofs for the results presented in the main text as well as further technical details and discussion.
\newtheorem*{definition*}{Definition}

\newcommand{\uu}[0]{\Tilde{u}_i}
\newcommand{\uy}[0]{u_i'}
\newcommand{\dd}[1]{\frac{d}{d #1}}

In the following, we use $\nabla_{b_i}f$ to denote the gradient of a function $f$ with respect to the bids $b_i$ of player $i$ only. We use $\partial_{b_{ij}} f$ to denote the partial derivative with respect to the bid of player $i$ on good $j$, and $\partial_{b_{ij}}^2 f$ to denote the second derivative. We denote by $\theta_i = \sum_{k \neq i} b_i$ as the `pre-prices,' which represent the prices excluding the bids $b_i$ (and so, for every player $i$ and every bid profile, $p = \theta_i + b_i$). In some of the proofs, we use the abbreviated notation $(f)^+ = \max(f, 0)$. All other notations are as defined in the main text.

\section{The Associated Game}

\begin{proof} (Theorem \ref{thm:potential}): 
A sufficient condition for $\Phi$ being an exact potential\cite{monderer1996potential} is 
	\[
		\forall i \quad \nabla_{b_i} \Phi(b_i, b_{-i})= \nabla_{b_i} \uu(b_i,b_{i}). 
	\] 
	And indeed, in our case we have:
	\[\ 
		\partial b_{ij} \Phi(b_i, b_{-i}) = \ln(a_{ij}) - \ln(p_j)=\ln(\frac{a_{ij}}{p_j}),  
        \]
        \[\
		\partial b_{ij}  \Tilde{u}_i(b_i,b_{i}) = \ln(a_{ij}) - \ln(p_j)=\ln(\frac{a_{ij}}{p_j}).
	\] 
\end{proof}
In order to prove Theorem \ref{thm:equlibrium-and-potential}, 
we first define a different associated game denoted $G'$ that differs from $\Tilde{G}$ only in having a different associated utility function $u_i'=\sum_j a_{ij}\ln(p_j)$. 

In fact, $\tilde{G}$ and $G'$ a part of a family of associated games of the market game $G$, which have the property that they all share the same best responses to bid profiles (and therefore, also the same Nash equilibria) and for all these games, the function $\Phi$ is a best-response potential (see \cite{voorneveld2000best} for the definition of best-response potential games). Among this family of games, we are particularly interested in the games $\tilde{G}$ and $G$ since the former admits $\Phi$ as an exact potential, and the latter has a particularly simple derivative for its utility $\uy$, which has a clear economic interpretation: $\partial_{b_{ij}} \uy(b) = a_{ij}/p_j$. This is simply the bang-per-buck of player $i$ from good $j$ (see the model section in the main text). 

Next, we present several technical lemmas that will assist us in proving Theorem 2 and which will also be useful in our proofs later on.

\begin{lemma}\label{lemma:strictly_concave}
	For any player $i$ and fixed $b_{-i}$, both $\uu(b_i,b_{-i})$ and $\uy(b_i,b_{-i})$ are strictly concave in $b_i$.
\end{lemma}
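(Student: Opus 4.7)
The plan is to verify strict concavity by computing the Hessians of $\tilde{u}_i$ and $u_i'$ with respect to $b_i$ for fixed $b_{-i}$, and showing that in both cases the Hessian is a diagonal matrix with strictly negative entries, hence negative definite. Since the feasible set $S_i$ is convex, this will give strict concavity on $S_i$.

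First I would recall from the proof of Theorem \ref{thm:potential} that $\partial_{b_{ij}} \tilde{u}_i = \ln(a_{ij}) - \ln(p_j)$, where $p_j = \theta_{ij} + b_{ij}$ and $\theta_{ij} = \sum_{k \neq i} b_{kj}$ is independent of $b_i$. Differentiating once more gives $\partial_{b_{ij}}^2 \tilde{u}_i = -1/p_j$ and, because $p_j$ depends only on the $j$-th coordinate of $b_i$, the mixed partials $\partial_{b_{ij}} \partial_{b_{ik}} \tilde{u}_i$ vanish for $j \neq k$. Analogously, $\partial_{b_{ij}} u_i' = a_{ij}/p_j$, so $\partial_{b_{ij}}^2 u_i' = -a_{ij}/p_j^2$ with vanishing mixed partials. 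Thus both Hessians with respect to $b_i$ are diagonal.

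Next I would argue that the diagonal entries are strictly negative on the relevant domain. Using the convention $a_{ij} > 0$ (as stated in the model footnote), and noting that along the proportional response dynamics $b_{ij} > 0$ whenever $a_{ij} > 0$, the prices satisfy $p_j \geq b_{ij} > 0$, so $-1/p_j < 0$ and $-a_{ij}/p_j^2 < 0$ for every $j$. Hence both Hessians are negative definite at every $b_i$ with $p_j > 0$ for all $j$, which implies strict concavity of $\tilde{u}_i(\cdot, b_{-i})$ and $u_i'(\cdot, b_{-i})$ on $S_i$.

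I do not expect any serious obstacle here; the only subtlety is boundary behavior when some $p_j = 0$. This is handled by the standing genericity/positivity assumption on $(a_{ij})$ and by the fact that $\tilde{u}_i$ and $u_i'$ are studied on the interior of $S_i$ where all active prices are positive, so the Hessian computation above is valid throughout.
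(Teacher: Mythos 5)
Your proposal is correct and follows essentially the same route as the paper: compute the Hessian with respect to $b_i$, observe it is diagonal with entries $-1/p_j$ (resp. $-a_{ij}/p_j^2$) and zero off-diagonal, and conclude negative definiteness and hence strict concavity. The only difference is that you spell out the $u_i'$ case and the boundary issue at $p_j=0$ explicitly, which the paper handles with a one-line remark; no substantive divergence.
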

\begin{proof}
	We will show the proof for $\uu$. 
	We compute the Hessian and show that it is negative definite. 
	The diagonal elements are 
	\[
	\partial^2_{b_{ij}}  \uu(b_i,b_{i}) = -\frac{1}{p_j},
	\]
	and all of the off-diagonal elements are
	\[
	\partial_{b_{ik}}\partial_{b_{ij}}  \uu(b_i,b_{i}) = 0.
	\]
	Therefore, the Hessian is a diagonal matrix with all of its elements being negative, and thus, $\uu$ is strictly concave. The same argument works for $\uy$ as well. 
\end{proof}

\begin{lemma}\label{lemma:br_argmax}
	Fix a player $i$ and any bid profile $b_{-i}\in S_{-i}$ of the other players, then the following two facts hold. 
	\begin{enumerate}
		\item $b_i^{'*} = \arg\max_{b_i \in S_i} \uy(b_i, b_{-i}) $ if and only if it holds that 
                    $ \forall b_i' \in S_i \ \sum_j \frac{a_{ij}}{p_j^{'*}} b_{ij}^{'*} \geq \sum_j \frac{a_{ij}}{p_j^{'*}} b_{ij}'$, where $p_j^{'*}=\theta_{ij} +b^{'*}_{ij}$.
		\item $\tilde{b}_i^* = \arg\max_{b_i \in S_i} \uu(b_i, b_{-i}) $ if and only if it holds that $ 
                    \forall \tilde{b_i} \in S_i \ \sum_j \ln(\frac{a_{ij}}{\tilde{p}_j^*}) \tilde{b}_{ij}^* \geq \sum_j \ln(\frac{a_{ij}}{\tilde{p}_j^*}) \tilde{b}_{ij}$, where $\tilde{p}_j^*=\theta_{ij} +\tilde{b}^*_{ij}$.
	\end{enumerate}
\end{lemma}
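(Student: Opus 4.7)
The plan is to derive both characterizations as direct applications of the first-order optimality condition for a strictly concave, continuously differentiable function maximized over a convex compact set. Recall that for such $f$ and $C$, $x^{*}\in C$ is the unique maximizer of $f$ on $C$ if and only if $\langle \nabla f(x^{*}),\, y - x^{*} \rangle \leq 0$ for every $y \in C$. Strict concavity of both $u_i'(\cdot, b_{-i})$ and $\tilde{u}_i(\cdot, b_{-i})$ as functions of $b_i$ is already supplied by Lemma \ref{lemma:strictly_concave}, and $S_i$ is a scaled simplex, hence convex and compact, so the argmax in each part of the lemma exists and is unique.

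The next step is to compute the gradients with respect to $b_i$. Writing $p_k = \theta_{ik} + b_{ik}$, only the $k=j$ summand contributes to $\partial_{b_{ij}}$, giving $\partial_{b_{ij}} u_i'(b) = a_{ij}/p_j$ and $\partial_{b_{ij}} \tilde{u}_i(b) = \ln(a_{ij}) + \partial_{b_{ij}}\bigl(p_j - p_j \ln(p_j)\bigr) = \ln(a_{ij}) - \ln(p_j) = \ln(a_{ij}/p_j)$; the latter is already recorded inside the proof of Theorem \ref{thm:potential}.

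Substituting these gradients into the variational inequality yields both claims immediately. For part (1), $b_i^{'*}$ maximizes $u_i'(\cdot, b_{-i})$ if and only if $\sum_j (a_{ij}/p_j^{'*})(b_{ij}' - b_{ij}^{'*}) \leq 0$ for every $b_i' \in S_i$, which rearranges verbatim to the stated inequality. Part (2) is identical with $\ln(a_{ij}/\tilde{p}_j^{*})$ replacing $a_{ij}/p_j^{'*}$. The ``only if'' direction is necessity of first-order conditions at a maximum of a differentiable function on a convex set: the directional derivative along any $y - x^{*}$ with $y\in S_i$ (a direction in the tangent cone of the simplex) must be nonpositive. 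The ``if'' direction follows from concavity via the standard inequality $f(b_i') \leq f(b_i^{'*}) + \langle \nabla f(b_i^{'*}),\, b_i' - b_i^{'*}\rangle \leq f(b_i^{'*})$.

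There is no serious obstacle here; the only bookkeeping point is that the gradients require $p_j > 0$ so that $\ln(p_j)$ and $a_{ij}/p_j$ are well defined and smooth on $S_i$. Under the paper's standing assumption that $a_{ij} > 0$ wherever it matters and that every good has some buyer placing a positive bid along the dynamics, all prices remain strictly positive, so the first-order characterization applies throughout the relevant region of $S_i$ without any boundary subtlety.
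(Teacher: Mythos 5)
Your proof is correct and takes essentially the same route as the paper: the paper's argument restricts $\uy$ (resp.\ $\uu$) to the line segment $[b_i',b_i^*]$ and differentiates at $\xi=1$, which is precisely a from-scratch derivation of the first-order variational inequality you invoke, with the same gradient computations $\partial_{b_{ij}}\uy = a_{ij}/p_j$ and $\partial_{b_{ij}}\uu = \ln(a_{ij}/p_j)$ and the same concavity argument for the converse direction. Your remark about positivity of prices matches the paper's own handling (it assumes $\theta_{ij}>0$ for ease of exposition and notes the results persist otherwise), so there is no gap.
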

\begin{proof} We will show the proof for (1), and the proof for (2) is similar. 
 Let $b_i^*$ be a best response to $b_{-i}$ and let $b'_i$ be some other strategy. Consider the restriction of $\uy(b_i)$ to the line segment $[b_i', b_i^*]$ as follows; define $f(\xi)=\uy(b_i(\xi))$ for $b_i(\xi)=b_i' + \xi (b_i^* - b_i')$ where $\xi\in [0,1]$. As $\uy$ is strictly concave and $b_i^*$ is the unique maximizer of $\uy$, it holds that $f$ is strictly concave and monotone increasing in $\xi$. Therefore the derivative of $f$ must satisfy at the maximum point $\xi = 1$ that $\dd{\xi} f(1) \geq 0$. This is explicitly given by
 \[
        \dd{\xi} f(\xi) = \nabla_{b_i}\uy(b_i(\xi))(b_i^* - b_i').
 \]
Therefore, when substituting $\xi=1$ in the derivative we get $b_i(1)=b_i^*$, and
    \[
    \begin{split}
        0 \leq \dd{\xi} f(1) & = \nabla_{b_i}\uy(b^*_i)(b_i^* - b_i')
                            \\& = \sum_{j} \frac{a_{ij}}{p^*_j}b^*_{ij} - \sum_{j} \frac{a_{ij}}{p^*_j}b'_{ij},
    \end{split}
    \]
 which implies $\sum_{j} \frac{a_{ij}}{p^*_j}b'_{ij} \leq \frac{a_{ij}}{p^*_j}b^*_{ij}$, as required. 
 
 To complete the second direction of the proof of (1), consider $b^*_i$ for which the expression stated in right hand side of (1) is true for all $b'_i$. Then, fix any $b'_i$ and again consider the restriction of $\uy$ to $[b'_i, b^*_i]$. By direct calculation, as before but in the inverse direction, it holds that $f(1) \geq 0$, and as $\uy$ and $f(\xi)$ are strictly concave, it thus must be that $\dd{\xi} f(\xi)$ is monotone decreasing in $\xi$. Thus, for all $\xi$ we have $\dd{\xi}f(\xi) \geq \dd{\xi} f(1) \geq 0$. This must mean that $\xi=1$ is the maximizer of $f(\xi)$ since for all $\xi $, $\dd{\xi}f(\xi) \geq 0$ implies that $f(\xi)$ is monotone increasing and therefore $\uy(b^*_i) \geq \uy(b'_i)$. Finally, note that this holds for any $b'_i$ and hence $b^*_i$ must be a global maximum of $\uy$.

\end{proof}

\begin{lemma}\label{lemma:weights_are_equal}
	Let $(c_j)_{j\in[m]}\in\mathbb{R}^m$, if there exists $x^* \in \Delta^m$ (the $m$-dimensional simplex) such that $\forall x \in \Delta^m$ it holds that $\sum_j c_j x_j \leq \sum_j c_j x^*_j := \alpha$ then:
	\begin{enumerate}
		\item for all $j$ we have that $c_j\leq \alpha$, and 
		\item if $x^*_j > 0$ then $c_j=\alpha$.
	\end{enumerate}
\end{lemma}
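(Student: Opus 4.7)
The statement is the standard first-order characterization of a maximizer of a linear functional on the simplex, so the plan is to extract both parts from the hypothesis by testing $x^*$ against carefully chosen competitors in $\Delta^m$.

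For part (1), the plan is to test the inequality against the $m$ vertices of the simplex: plug in $x = e_j$, the $j$-th standard basis vector, which is a valid element of $\Delta^m$. The left-hand side becomes $\sum_k c_k (e_j)_k = c_j$, and the right-hand side is $\alpha$ by definition, giving $c_j \leq \alpha$ for every $j$.

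For part (2), the plan is to use part (1) together with the fact that $x^*$ itself lies in $\Delta^m$, so its coordinates are nonnegative and sum to one. The key identity to exploit is
\begin{equation*}
\alpha \;=\; \sum_j c_j x^*_j \;=\; \sum_j (c_j - \alpha) x^*_j + \alpha \sum_j x^*_j \;=\; \sum_j (c_j - \alpha) x^*_j + \alpha,
\end{equation*}
which forces $\sum_j (c_j - \alpha) x^*_j = 0$. By part (1) each summand $(c_j - \alpha) x^*_j$ is nonpositive (the factor $c_j - \alpha \leq 0$ and $x^*_j \geq 0$), and a sum of nonpositive terms vanishes only when each term is zero. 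Therefore, whenever $x^*_j > 0$, we must have $c_j - \alpha = 0$, i.e., $c_j = \alpha$, which is exactly (2).

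There is no real obstacle here: the proof is just two short applications of linearity on the simplex, and the lemma is presumably stated as a clean reusable tool for the subsequent arguments (for instance, when combined with Lemma~\ref{lemma:br_argmax} to characterize the support of best-response bids by a common ``bang-per-buck'' level).
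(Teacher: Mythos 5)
Your proof is correct and follows essentially the same route as the paper's: part (1) by testing against the standard basis vectors, and part (2) by combining part (1) with nonnegativity of $x^*$ to force each term $(c_j-\alpha)x^*_j$ to vanish (the paper phrases this as a contradiction from summing one strict inequality with the weak ones, but it is the same argument). No gaps.
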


\begin{proof}
(1) Assume for the sake of contradiction that there exists $k$ with $c_k >\alpha$ then $x=e_k$ (the ``one-hot'' vector with 1 at the $k$'th coordinate and 0 in all other coordinates) yields $\sum_j c_j x_j = c_k > \alpha = \sum_j c_j x^*_j$, a contradiction.
		
Now, to prove (2), assume for the sake of contradiction that there exists $k$ with $x^*_k >0$ and $c_k<\alpha \implies c_k x^*_k<\alpha x^*_k$. From (1) we have that $c_j \leq \alpha  \implies c_j x^*_j \leq \alpha x^*_j$, summing the strict inequality with the weak ones over all $j$ yields $\sum_j c_j x^*_j < \sum_j \alpha x^*_j = \alpha$, a contradiction.
\end{proof}

\begin{lemma}\label{lemma:br_uu_uy}
	Fix a player $i$ and any bid profile $b_{-i} \in S_{-i}$ of the other players, then the following properties of best-response bids  hold in the modified games $\tilde{G}$ and $G'$.
	\begin{enumerate}
        \item The \textit{support set} of $b^*_i$, defined as $s^*_i=\{j | b^*_{ij} > 0\}$, is equal to the set $\{j | a_{ij} > c^* \theta_{ij}  \}$, and for every $j \in s^*_i$ we have that $\frac{a_{ij}}{p^*_j}=c^*$.
	\item Best-response bids with respect to the utilities $\uy$ and $\uu$ are equal and unique. That is, in the definition from Lemma \ref{lemma:br_argmax}  we have $b'^*_i = \Tilde{b}_i^*$ (denoted simply as $b^*_i$). 
	\item Best-response bids are given by $b^*_{ij}=(\frac{a_{ij}}{c^*} - \theta_{ij})^+$ for a unique constant $c^* \in (0,\nicefrac{m}{B_i})$.
		\end{enumerate} 
\end{lemma}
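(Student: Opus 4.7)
\textbf{Proof plan for Lemma \ref{lemma:br_uu_uy}.} The overall strategy is to combine the two variational characterizations supplied by Lemma \ref{lemma:br_argmax} with the simplex-support lemma (Lemma \ref{lemma:weights_are_equal}) so as to extract, for each of the two associated utilities, a common ``equalization'' condition on the ratios $a_{ij}/p^*_j$ over the active goods. Strict concavity (Lemma \ref{lemma:strictly_concave}) then upgrades existence of such a maximizer to uniqueness and lets us read off a closed-form expression.

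For part (1), I would start from the best-response inequality for $\uy$ in Lemma \ref{lemma:br_argmax}. Normalize the bids by the budget so $x_j := b^*_{ij}/B_i$ lies in the simplex $\Delta^m$, and set $c_j := a_{ij}/p^*_j$. The condition $\sum_j c_j x_j \le \sum_j c_j x^*_j =: c^*$ for all $x \in \Delta^m$ is exactly the hypothesis of Lemma \ref{lemma:weights_are_equal}, which yields $a_{ij}/p^*_j \le c^*$ for all $j$ with equality on the support $s^*_i$. Translating $p^*_j = \theta_{ij}+b^*_{ij}$, on the support we get $a_{ij} = c^*(\theta_{ij}+b^*_{ij}) > c^*\theta_{ij}$ (since $b^*_{ij}>0$ and $c^*>0$), while off the support $p^*_j = \theta_{ij}$ gives $a_{ij} \le c^*\theta_{ij}$. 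This identifies $s^*_i$ with $\{j : a_{ij} > c^*\theta_{ij}\}$.

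For part (2), apply the same recipe to $\uu$ using the second characterization of Lemma \ref{lemma:br_argmax} with coefficients $c_j = \ln(a_{ij}/\tilde p^*_j)$. Lemma \ref{lemma:weights_are_equal} gives some constant $\tilde\alpha$ with $\ln(a_{ij}/\tilde p^*_j) = \tilde\alpha$ on the support and $\le\tilde\alpha$ off it; exponentiating recovers exactly the same structure as in part (1) with $c^* = e^{\tilde\alpha}$. Thus any maximizer of $\uu$ satisfies the same equalization conditions that pin down a maximizer of $\uy$, and conversely. Since Lemma \ref{lemma:strictly_concave} asserts strict concavity of both $\uu$ and $\uy$ on the compact convex set $S_i$, each has a unique maximizer, and because they solve the same system, the two maximizers coincide. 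Write the common value $b^*_i$.

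For part (3), combine the two cases from part (1): on $s^*_i$ we have $b^*_{ij} = a_{ij}/c^* - \theta_{ij} > 0$, and off $s^*_i$ we have $b^*_{ij}=0$ together with $a_{ij}/c^* - \theta_{ij} \le 0$. Both cases collapse to $b^*_{ij} = (a_{ij}/c^* - \theta_{ij})^+$. To pin down $c^*$, impose the budget constraint $\sum_j b^*_{ij} = B_i$ and consider $F(c) := \sum_j (a_{ij}/c - \theta_{ij})^+$ on $(0,\infty)$. The function $F$ is continuous, strictly decreasing on the set where $F(c)>0$, tends to $+\infty$ as $c \to 0^+$, and vanishes for $c$ sufficiently large, so there is a unique $c^*>0$ with $F(c^*)=B_i$. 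Finally, using $\sum_j a_{ij}=1$, the identity $B_i = \sum_{j\in s^*_i}(a_{ij}/c^* - \theta_{ij})$ gives $c^* = (\sum_{j\in s^*_i} a_{ij})/(B_i + \sum_{j\in s^*_i} \theta_{ij}) \le 1/B_i < m/B_i$, establishing the stated range. The main obstacle I anticipate is bookkeeping at the boundary $b^*_{ij}=0$: I have to make sure the two characterizations from Lemma \ref{lemma:br_argmax} remain equivalent when $a_{ij}/\tilde p^*_j$ could be zero, which is handled cleanly by restricting to goods with $a_{ij}>0$ (so $\ln$ is finite on the relevant support) and noting that if $a_{ij}=0$ the player will certainly not bid on $j$ at optimum.
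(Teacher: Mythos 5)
Your proposal is correct, and its skeleton matches the paper's: run the variational characterizations of Lemma \ref{lemma:br_argmax} through Lemma \ref{lemma:weights_are_equal} to equalize the bang-per-buck $a_{ij}/p^*_j$ (respectively its logarithm) over the support, read off the water-filling form $b^*_{ij}=(a_{ij}/c^*-\theta_{ij})^+$, and use strict concavity (Lemma \ref{lemma:strictly_concave}) for uniqueness of each maximizer. Where you genuinely depart from the paper is the core of part (2), the coincidence of the two best responses. The paper proves $c=e^d$ by contradiction: assuming $c<e^d$ forces $\tilde{s}^*\subseteq s'^*$ with $\tilde{b}^*_{ij}<b'^*_{ij}$ on $\tilde{s}^*$, and summing these inequalities over the supports produces the impossible chain $B_i<B_i$. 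You instead observe that both maximizers solve the same system $b_{ij}=(a_{ij}/\alpha-\theta_{ij})^+$ with $\sum_j b_{ij}=B_i$, and that $F(\alpha)=\sum_j(a_{ij}/\alpha-\theta_{ij})^+$ is continuous, strictly decreasing where positive, and decreases from $+\infty$ to $0$, so the root $\alpha=c^*$ is unique. This is cleaner and simultaneously delivers the uniqueness of $c^*$ asserted in part (3), which the paper extracts from the same support-containment argument. Two minor points for a final write-up: the coincidence claim in your part (2) logically depends on the uniqueness of the root of $F$, which you only establish in part (3), so the steps should be reordered (derive the common water-filling form for both games first, then invoke the monotonicity of $F$ once); and the boundary case that actually needs care is $\theta_{ij}=0$ with $a_{ij}>0$, where $\ln(a_{ij}/\tilde{p}^*_j)$ would be $+\infty$ off the support --- this is resolved by noting that any such good must lie in the support at the optimum, whereas your closing remark only addresses the easier case $a_{ij}=0$.
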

\begin{proof}
By Lemma \ref{lemma:strictly_concave}, $\uu$ and $\uy$ are strictly concave in $b_i$ for any fixed $b_{-i}$, and so each admits a unique maximizer. To see that they are equal, we use Lemma \ref{lemma:br_argmax} and introduce constants $c, d$ to obtain
	\[
		\begin{split}
			\forall b_i \quad & \sum_j \frac{a_{ij}}{p'^*_j}\frac{b_{ij}}{B_i} \leq \sum_j \frac{a_{ij}}{p'^*_j}\frac{b'^*_{ij}}{B_i} = c, \ \text{and }
			\\
			\forall b_i \quad & \sum_j \ln(\frac{a_{ij}}{\Tilde{p}^*_j})\frac{b_{ij}}{B_i} \leq \sum_j \ln(\frac{a_{ij}}{\Tilde{p}^*_j})\frac{\Tilde{b}^*_{ij}}{B_i} = d, 
		\end{split}
	\]
where $p'^*_j = \theta_{ij} + b'^*_{ij}$ and $\Tilde{p}^*_j = \theta_{ij} + \Tilde{b}^*_{ij}$.

For the ease of exposition, we assume that $\theta_{ij} >0$ for all $j$. All the results stated below remain valid also when $\theta_{ij} = 0$ for some $j$. 

Proof of (1): Applying Lemma \ref{lemma:weights_are_equal} to each of those inequalities (once with $x^*=\frac{1}{B_i}b'^*_i$ and twice with $x^*=\frac{1}{B_i}\Tilde{b}^*_i$) and denoting the support sets of $b'^*_{ij}, \Tilde{b}^*_{ij}$ as $s'^*,\Tilde{s}^*$, respectively, we obtain the following. (1) $\forall j \in s'^*$ we have $\frac{a_{ij}}{p'^*_j} = c$ and $\forall j \notin s'^* $ we have $\frac{a_{ij}}{p'^*_j} \leq c$. Therefore, $\forall j \in s'^*$ bids are positive and $c = \frac{a_{ij}}{p'^*_j} = \frac{a_{ij}}{\theta_{ij} + b'^*_{ij}} < \frac{a_{ij}}{\theta_{ij}} $ while $\forall j \notin s'^*$ the bids are zero, and $\frac{a_{ij}}{\theta_{ij}} \leq \frac{a_{ij}}{\theta_{ij} + 0} = \frac{a_{ij}}{p'^*_j} \leq c$, hence $s'^*=\{j | c < \frac{a_{ij}}{\theta_{ij}} \}$. To prove the second inequality, we use the same argument but with $d=\ln(\frac{a_{ij}}{\tilde{p}^*_j})$, and so we have that $\Tilde{s}^*=\{j | e^d < \frac{a_{ij}}{\theta_{ij}} \}$.

Proof of (2): We will show that $c=e^d$ and thus obtain that the vectors $b'^*_i, \Tilde{b}_i^*$ are identical. Assume by way of contradiction that $c < e^d$, then $j\in \Tilde{s}^* \implies c < e^d < \frac{a_{ij}}{\theta_{ij}} \implies j \in s'^*$, i.e., $\Tilde{s}^* \subseteq s'^*$. For all $j \in \Tilde{s}^*$ it holds that $\frac{a_{ij}}{p'^*_j} = c < e^d = \frac{a_{ij}}{\Tilde{p}^*_j} \implies \Tilde{p}^*_j < p'^*_j \implies \Tilde{b}^*_{ij} < b'^*_{ij}$. Now we sum those inequalities over $\Tilde{s}^*$ and extend to the support $s'^*$. By using the subset relation we proved, we obtain a contradiction:
	\[
		B_i = \sum_{j\in \Tilde{s}^*} \Tilde{b}^*_{ij} < \sum_{j\in \Tilde{s}^*} b'^*_{ij} \leq \sum_{j\in s'^*} b'^*_{ij} =B_i.
	\]
The case where $e^d < c$ follows similar arguments with inverse roles of $s'^*, \Tilde{s}^*$. Thus, $c=e^d$ and $s'^* =\Tilde{s}^*$, which implies  $\frac{a_{ij}}{p'^*_j} = \frac{a_{ij}}{\tilde{p}^*_j}$, meaning that the prices are equal as well for all goods purchased. Therefore $b'^*_i = \Tilde{b}_i^*$. 

Proof of (3): Finally, observe that for $j \in s'^* \quad c =  \frac{a_{ij}}{p'^*_j} =  \frac{a_{ij}}{\theta_{ij} + b'^*_{ij}} \implies b^*_{ij}=\frac{a_{ij}}{c^*} - \theta_{ij}$, while otherwise $b^*_{ij}=0$, and $\frac{a_{ij}}{c^*} - \theta_{ij} \leq 0$. For the bounds on $c$, notice that by definition it is equal to $\frac{u^*_i}{B_i}$ and that $u^*_i\in (0, m)$, as $i$ can receive as little as almost zero  (by the definition of the allocation mechanism, if $i$ places a bid on a good it will receive a fraction of this good, no matter how tiny) and receive at most (almost) all the goods. 

\end{proof}

The intuition of the above Lemma \ref{lemma:br_uu_uy} is that it shows a property of the structure of best-response bids. If we consider all the goods sorted by the parameter $\frac{a_{ij}}{\theta_{ij}}$, then the best-response bids are characterized by some value $c^*$ which partitions the goods into two parts: goods that can offer the player a bang-per-buck of value $c^*$ and those that cannot. The former set of goods is exactly the support $s^*$. When a player increases its bid on some good $j$, the bang-per-buck offered by that good decreases, so clearly, any good with $c^* \leq \frac{a_{ij}}{\theta_{ij}}$  cannot be considered in any optimal bundle. Consider the situation where the player has started spending money on goods with $\frac{a_{ij}}{\theta_{ij}} > c^*$, and that for some goods $j$ and $k$ we have that $\frac{a_{ij}}{p_{j}}=\frac{a_{ik}}{\theta_{ik}}$, then if the player increases the bid on $j$ without increasing the bid on $k$, this means that the bids are not optimal since the player could have received higher bang-per-buck by bidding on $k$. The optimal option is a `water-filling` one: to split the remaining budget and use it to place bids on both $j$ and $k$, yielding equal bang-per-buck for both (as Lemma \ref{lemma:br_uu_uy} shows).

\vspace{5pt}
With the above lemmas, we are now ready to prove Theorem \ref{thm:equlibrium-and-potential}.
\begin{proof} (Theorem \ref{thm:equlibrium-and-potential}): We start by making the following claim.
        
\textit{Claim}: The set of Market equilibria is equal to the set of Nash equilibria in the game $G'$.

\begin{proof} 	
By definition, $b^*$ is a Nash equilibrium of $G'$ if and only if for every $i$ it holds that $b_i^* = \arg\max_{b_i \in S_i} \uy(b_i, b^*_{-i})$, where by Lemma \ref{lemma:strictly_concave}, for any fixed $b_{-i}$, the bid profile $b_i^*$ is unique. By Lemma \ref{lemma:br_argmax}, we have that for $x^*_{ij}=b^*_{ij} p_j^*$ and any other $x'_{ij}=b'_{ij}p^*_j \text{,}\quad \uy(x^*_i) \geq \uy(x'_i)$, if and only if $(X^*, p^*)$ is a market equilibrium (market clearing and budget feasibility hold trivially). That is, the set of Nash equilibria of the game $G'$ corresponds to the set of market equilibria (i.e., every bid profile $b^*$ which is a market equilibrium must be a Nash equilibrium of $G'$, and vice versa).
\end{proof}

Then, by Lemma \ref{lemma:br_uu_uy}, best responses by every player $i$ to any bid profile $b_{-i}$ of the other players with respect to $\uu$ and with respect to $\uy$ are the same. Therefore, every Nash equilibrium in one game must be a Nash equilibrium in the other. Thus, we have that Nash equilibria of the game $\tilde{G}$ are market equilibria, and vice versa -- every market equilibrium must be Nash equilibrium of $\tilde{G}$. Finally, at a Nash equilibrium, no player can unilaterally improve their  utility, so no improvement is possible to the potential, and in the converse, if the potential is not maximized, then there exists some player with an action that improves the potential, and so by definition their utility function as well, thus contradicting the definition of a Nash equilibrium. Therefore, we have that every bid profile that maximizes the potential is a Nash equilibrium of $\tilde{G}$ and a market equilibrium (and vice versa). 
\end{proof}

\section{Best Response Dynamics}

We start with the following characterizations of best-response bids in the games $\tilde{G}$ and $G'$. 

\begin{lemma}\label{lemma:cs_maximum}
    Fix $\theta_i$ and let $b^*_i$ be $i$'s best response to $\theta_i$ with support $s^*$. Define $c_s = \frac{\sum_{j\in s} a_{ij}}{B_i+\sum_{j\in s} \theta_{ij}}$ for every subset $s\subseteq [m]$. Let $c^*$ be as described in Lemma \ref{lemma:br_uu_uy}. Then, it holds that $c^* = c_{s^*} \geq c_s$ for all $s\subseteq[m]$. 
    Furthermore, if $s^* \not \subset s$ then $c_{s^*} > c_s$.
\end{lemma}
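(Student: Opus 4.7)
The plan is to lean directly on the explicit characterization of best-response bids established in Lemma \ref{lemma:br_uu_uy}. Recall from there that $s^* = \{j : a_{ij} > c^*\theta_{ij}\}$, that $b^*_{ij} = a_{ij}/c^* - \theta_{ij}$ for $j \in s^*$, and that $b^*_{ij}=0$ otherwise. Everything else will be squeezed out of these two facts combined with the budget constraint $\sum_j b^*_{ij}=B_i$.

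First I would verify the equality $c^* = c_{s^*}$. Plugging the closed form for $b^*_{ij}$ into the budget identity and summing only over $j\in s^*$ gives $B_i = \sum_{j\in s^*} a_{ij}/c^* - \sum_{j\in s^*}\theta_{ij}$, which upon rearrangement is exactly $c^*\bigl(B_i + \sum_{j\in s^*}\theta_{ij}\bigr) = \sum_{j\in s^*}a_{ij}$, i.e.\ $c^* = c_{s^*}$.

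Next, for an arbitrary $s\subseteq[m]$ I would control $c_s$ by studying the sign of
\[
\Delta_s := c^*\Bigl(B_i + \sum_{j\in s}\theta_{ij}\Bigr) - \sum_{j\in s}a_{ij},
\]
which, since the bracketed quantity is positive, has the same sign as $c^* - c_s$. Substituting the identity $c^*B_i = \sum_{j\in s^*}(a_{ij} - c^*\theta_{ij})$ from the previous step, and then decomposing the two sums along the partition induced by $s\cap s^*$, routine cancellation collapses $\Delta_s$ to
\[
\Delta_s = \sum_{j\in s^*\setminus s}(a_{ij} - c^*\theta_{ij}) + \sum_{j\in s\setminus s^*}(c^*\theta_{ij} - a_{ij}).
\]
By the very definition of $s^*$, each term in the first sum is strictly positive (since $j\in s^*$) and each term in the second is non-negative (since $j\notin s^*$), so $\Delta_s\geq 0$ and hence $c^*\geq c_s$. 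The strict statement is then immediate: if $s^*\not\subset s$, then $s^*\setminus s$ contains at least one index $j_0$, which contributes a strictly positive term to the first sum, forcing $\Delta_s>0$ and $c^*>c_s$.

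I do not expect a real obstacle here: once Lemma \ref{lemma:br_uu_uy} is invoked, the argument is algebraic bookkeeping on set partitions. The only point needing minor care is the edge case $\theta_{ij}=0$, which I would handle exactly as in Lemma \ref{lemma:br_uu_uy}, observing that the inequalities $a_{ij} > c^*\theta_{ij}$ for $j\in s^*$ and $a_{ij}\leq c^*\theta_{ij}$ for $j\notin s^*$ are preserved, so the sign analysis of $\Delta_s$ is unaffected.
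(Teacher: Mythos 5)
Your proof is correct, and for the inequality part it takes a genuinely different route from the paper's. The equality $c^* = c_{s^*}$ is established identically in both (sum the closed form $b^*_{ij} = a_{ij}/c^* - \theta_{ij}$ over $s^*$ and invoke the budget constraint). For $c^* \geq c_s$, however, the paper splits into two cases: when $s^* \not\subset s$ it constructs an explicit perturbed bid profile $b'_i$ supported on $s$ (adding $\epsilon$ to the bids on $s \cap s^*$ and spreading the remaining budget over $s \setminus s^*$), derives $c^* p'_j > a_{ij}$ good by good, and sums; when $s^* \subset s$ it instead sums the pointwise inequalities $c^*\theta_{ij} \geq a_{ij}$ over $s \setminus s^*$ and adds them to the equality for $s^*$. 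Your single computation --- reducing $c^* - c_s$ to the sign of $\Delta_s = \sum_{j\in s^*\setminus s}(a_{ij} - c^*\theta_{ij}) + \sum_{j\in s\setminus s^*}(c^*\theta_{ij} - a_{ij})$ --- handles both cases at once and makes the source of strictness transparent (a nonempty $s^*\setminus s$ contributes strictly positive terms by the characterization $s^* = \{j : a_{ij} > c^*\theta_{ij}\}$ from Lemma \ref{lemma:br_uu_uy}). What the paper's perturbation argument buys is a self-contained economic interpretation (exhibiting a feasible deviation with support $s$); what your version buys is brevity, the elimination of the $\epsilon$-bookkeeping, and a unified treatment of the two containment cases. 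The algebraic cancellation you invoke checks out, and your handling of the $\theta_{ij}=0$ edge case is consistent with how the cited lemma treats it.
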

\begin{proof}
Let $b^*_i$ be a best response to $\theta_i$ with support $s^*$. By Lemma \ref{lemma:br_uu_uy} we have that $b^*_{ij}=(\frac{a_{ij}}{c^*}-\theta_{ij})^+$. By summing over $s^*$ we obtain that $B_i=\sum_{j_\in s^*} \frac{a_{ij}}{c^*}-\theta_{ij}$. Rearranging yields $c^*=\frac{\sum_{j\in s^* a_{ij}}}{B_i + \sum_{j\in s^*}\theta_{ij}}$ which is $c_{s^*}$ by definition. Now we prove that $c^*=\max_{s \subseteq [m]} c_s$. 
A key observation to the proof is that, by Lemma \ref{lemma:br_uu_uy}, if $j\in s^*$ then $ c^*\theta_{ij} < a_{ij} $ and otherwise $c^*\theta_{ij} \geq a_{ij}$.

For a set $s'$ distinct from $s^*$ we have two cases:

\textit{Case (1)}: $s^* \not\subset s'$. 
Consider a bid profile $b'_i$ that for every good $j$ in $s' \cap s^*$ (if the intersection is not empty) places a bid higher by  $\epsilon > 0$ than $b^*_{ij}$ and distributes the rest of $i$'s budget uniformly between all other goods in $s$: 
        \[   
        b'_{ij} = 
     \begin{cases}
       b^*_{ij} + \epsilon &\quad\text{if } j \in s' \cap s^*, \\
       \\
       \frac{B_i - \sum_{j \in s' \cap s^*} (b^*_{ij} + \epsilon)}{|s' \setminus s^*|} &\quad\text{otherwise}.\\
     \end{cases}
        \]
For $\epsilon$ small enough, we have $\sum_{j\in s'} b'_{ij} = B_i$ and the support of $b'_i$ is indeed $s'$.
For every $j \in s^* \cap s'$ we have  $b'_{ij} > b^*_{ij}$ and by adding $\theta_{ij}$ to both sides we obtain $p'_j > p^*_j$; multiplying both sides by $c^*$ yields (i) $c^* p'_j > c^* p^*_j = a_{ij}$, where the equality is by Lemma \ref{lemma:br_uu_uy}, while for every $j \in s' \setminus s^*$ it holds that $c^* \theta_{ij} \geq a_{ij}$ by which adding $c^* b'_{ij}$ to the left hand side only increases it and implies (ii) $c^* p'_j > a_{ij}$. Summing over inequalities (i) and (ii) for all $j$ appropriately, we obtain $c^* \sum_{j\in s'} p'_j > \sum_{j\in s'} a_{ij}$, observe that $\sum_{j\in s'} p'_j = \sum_{j\in s'} (b'_{ij}+\theta_{ij})=B_i + \sum_{j\in s'} \theta_{ij}$, and thus by division, we obtain the result: $c^* > \frac{\sum_{j\in s'} a_{ij}}{B_i + \sum_{j\in s'} \theta_{ij}} = c_{s'}$.

\textit{Case (2)}: $s^* \subset s'$. 
In this case, the idea used above can not be applied since adding $\epsilon$ to every bid $b^*_{ij}$ would create bids $b'_{ij}$ that exceed the budget $B_i$. 
As stated above, the equality $c^* = \frac{\sum_{j\in s^*} a_{ij}}{B_i + \sum_{j\in s^*} \theta_{ij}}$ holds where the sums are taken over all members of $s^*$, by rearranging we get $c^*B_i + c^* \sum_{j\in s^*}\theta_{ij}=\sum_{j\in s^*} a_{ij}$. For all $j \in s' \setminus s^*$ it holds that $c^* \theta_{ij} \geq a_{ij}$ and by summing those inequalities for all $j$ and adding the equality above we obtain: $c^*B_i + c^* \sum_{j\in s'}\theta_{ij}\geq\sum_{j\in s'} a_{ij}$ Rearranging yields the result: $c^* \geq \frac{\sum_{j\in s'} a_{ij}}{B_i + \sum_{j\in s'} \theta_{ij}} = c_{s'}$.

And so $c^*$ is obtained as the maximum over all $c_s$, as required.
\end{proof}

\begin{lemma}\label{lemma:br_cont}
	The function $BR_i:S_{-i}\to S_i$ which maps $b_{-i}$ to its best response  $b_i^*$ is continuous.
\end{lemma}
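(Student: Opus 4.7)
The plan is to use the explicit characterization of best responses established in Lemmas \ref{lemma:br_uu_uy} and \ref{lemma:cs_maximum}, which expresses $BR_i(b_{-i})$ as a composition of continuous operations applied to $b_{-i}$, and then conclude continuity by routine arguments.

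First, I would note that $\theta_i = \sum_{k \neq i} b_k$ depends continuously (in fact linearly) on $b_{-i}$, so it suffices to show continuity of the map $\theta_i \mapsto b_i^*$. By Lemma \ref{lemma:cs_maximum}, for each of the finitely many subsets $s \subseteq [m]$, the function $c_s(\theta_i) = \frac{\sum_{j \in s} a_{ij}}{B_i + \sum_{j \in s} \theta_{ij}}$ is continuous in $\theta_i$, since its denominator is bounded below by the strictly positive constant $B_i > 0$. Hence $c^*(\theta_i) = \max_{s \subseteq [m]} c_s(\theta_i)$ is continuous as the maximum of a finite family of continuous functions.

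Next, by Lemma \ref{lemma:br_uu_uy}(3), the best response is given explicitly by $BR_i(b_{-i})_j = \bigl(a_{ij}/c^*(\theta_i) - \theta_{ij}\bigr)^+$. To pass from continuity of $c^*$ to continuity of this expression, I need $c^*(\theta_i)$ to be bounded away from zero on the domain, so that the reciprocal $1/c^*$ is continuous. This follows from the lower bound $c^*(\theta_i) \geq c_{\{j_0\}}(\theta_i) = \frac{a_{ij_0}}{B_i + \theta_{ij_0}}$ for any good $j_0$ with $a_{ij_0} > 0$; since $\theta_{ij_0}$ is bounded above (by the total budget $1$ in our normalized market), $c^*$ is uniformly bounded below by a positive constant. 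Composing with the continuous operations $x \mapsto 1/x$ (on the positive reals), subtraction, and $x \mapsto (x)^+ = \max(x,0)$, we conclude that each coordinate of $BR_i$ is continuous in $b_{-i}$, and hence $BR_i$ itself is continuous.

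The main technical point is really just the boundedness of $c^*$ away from zero; once that is in place, every other step is immediate from the explicit formula. The potential worry that the optimal support set $s^*$ might change discontinuously as $b_{-i}$ varies is automatically handled by the $(\cdot)^+$ truncation: when a good leaves or enters the support, its best-response bid passes continuously through zero, and the max-over-subsets representation of $c^*$ ensures no jump occurs in the scalar $c^*$ itself.
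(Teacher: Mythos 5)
Your proof is correct and follows essentially the same route as the paper: reduce to continuity in $\theta_i$, use Lemma \ref{lemma:cs_maximum} to write $c^*$ as a finite maximum of continuous functions $c_s$, and conclude via the explicit formula $b^*_{ij}=(a_{ij}/c^*-\theta_{ij})^+$. The only addition is your explicit verification that $c^*$ stays positive (via the singleton lower bound), a detail the paper leaves implicit but which is correct and harmless to include.
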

\begin{proof}
By Lemma \ref{lemma:br_uu_uy}, best-response bids are given by $b^*_{ij}=\max \{\frac{a_{ij}}{c^*} - \theta_{ij}, 0\}$, with support $s^*_i$. We wish to show that $b^*_i$ is a continuous in $b_{-i}$. We do so by showing that $b^*_{ij}$ is obtained by a composition of continuous functions. As $\theta_i$ is a sum of elements from $b_{-i}$, it suffices to prove continuity in the variable $\theta_i$. The expression for $b^*_{ij}$ is the maximum between zero and a continuous function of $\theta_{ij}$, which is continuous in $\theta_i$, and so we are left to prove that $\frac{a_{ij}}{c^*} - \theta_{ij}$ is continuous in $\theta_i$. More specifically, it suffices to show that $c^*$ as defined in Lemma \ref{lemma:br_uu_uy} is continuous in $\theta_i$. 
	
By Lemma \ref{lemma:cs_maximum}, $c^*$ is obtained as the maximum over all $c_s$ functions, where each is a continuous function itself in $\theta_i$, and thus $c^*$ is continuous in $\theta_i$.
\end{proof}

To prove Theorem \ref{thm:convergence} on the convergence of best-response dynamics we use the following known result (for further details  see \cite{jensen2009stability}).

\emph{Theorem (Jensen 2009 \cite{jensen2009stability}):} Let $G$ be a best-reply potential game with single-valued, continuous best-reply functions and compact strategy sets. Then any admissible sequential best-reply path converges to the set of pure strategy Nash equilibria. 

\begin{proof}
(Theorem \ref{thm:brconvergence}): 
    $\tilde{G}$ is a potential game, which is a stricter notion than being a best-reply potential game (i.e., every potential game is also a best-reply potential game). By Lemma \ref{lemma:br_uu_uy}, best replies are unique, and so the function $BR_i$ is single valued. Furthermore, Lemma \ref{lemma:br_cont} shows that it is also a continuous function. By definition, for every $i$ the strategy set $S_i$ is compact, and so their product $S$ is compact as well. Note that while the best-reply function with respect to the standard utility function is formally undefined when the bids of all other players on some good $j$ are zero, what we need is for the best-reply function to the associated utility to be continuous, and this is indeed the case. Admissibility of the dynamics is also guaranteed by the liveness constraint on adversarial scheduling of the dynamics, and thus by the theorem cited above, best-reply dynamics converges to the set of Nash equilibria of $\tilde{G}$. 
    Since every element in this set is market equilibrium (by Theorem \ref{thm:equlibrium-and-potential}) and equilibrium prices are unique (see the model section in the main text), we have that any dynamic of the prices are guaranteed to converge to equilibrium prices. Furthermore for a generic market there is a unique market equilibrium (by Theorem \ref{thm:generic}) and convergence to the set in fact means convergence to the point $b^*$, the market-equilibrium bid profile. 
\end{proof}

\begin{proof} (Proposition \ref{prop:br_computed_efficiently}): 
Fix a player $i$, fix any bid profile $b_{-i}$ of the other players and let $b^*_i$ be $i$'s best response to $b_{-i}$, by Lemma \ref{lemma:br_uu_uy}, $b^*_{ij}=(\frac{a_{ij}}{c^*}-\theta_{ij})^+$ for $c^*$ being a unique constant. We present a simple algorithm which computes $c^*$ and has a run-time of $\mathcal{O}(m\log(m))$. 
\begin{algorithm}
\caption{Compute $c^*$}\label{alg:find_cstar}
\begin{algorithmic}
\Require $a_i, B_i, \theta_i$
\\
\State Sort the values $a_i, \theta_i$ according to $\frac{a_{ij}}{\theta_{ij}}$ in a descending order.
If there are goods with $\theta_{ij} = 0$, sort them separately according to $a_{ij}$ and place them as a prefix (lower indices) before the other sorted goods. Equal values are sorted in a lexicographical order. 
\\
\State Set: $a \gets 0,\quad  \theta \gets 0,\quad c_s\gets 0,\quad c^* \gets 0$
\For{$j = 1, \dots, m$}
\State $a \gets a + a_{ij},\ \theta \gets \theta + \theta_{ij}$
\State $c_s \gets \frac{a}{\theta + B_i}$
\State $c^* \gets \max\{c^*, c_s\}$
\EndFor
\State \Return $c^*$
\end{algorithmic}
\end{algorithm}

To see that this process indeed reaches $c^*$, assume w.l.o.g. that the goods are sorted by $\frac{a_{ij}}{\theta_{ij}}$ in a descending order. For  ease of exposition, assume $\theta_{ij} > 0$ for all $j$; the case with $\theta_{ij} = 0$ for some goods is similar. By Lemma \ref{lemma:br_uu_uy} we have $s^* = \{j| a_{ij} > c^* \theta_{ij} \}$. And so, if $k < j$ and $j\in s^*$ then $k\in s^*$, since in this case $\frac{a_{ik}}{\theta_{ik}} >\frac{a_{ij}}{\theta_{ij}} > c^*$. Therefore, $s^*$ must be one of the following sets: $[1], [2], [3], \dots, [m]$. By Lemma \ref{lemma:cs_maximum} we have $c^*=\max_{s \subseteq [m]} c_s$. For any set mentioned, the algorithm computes $c_s=\frac{\sum_{j\in s} a_{ij}}{B_i + \sum_{j\in s} \theta_{ij}}$ and finds the maximal among all such $c_s$, and therefore it finds $c^*$.

As for the running time of the algorithm, it is dominated by the running time of the sorting operation which is $\mathcal{O}(m\log(m))$. 
\end{proof}

After proving that the best response to a bid profile can be computed efficiently, we can prove now that proportional response, applied by a single player while all the other players' bids are held fix, converges in the limit to that best response.

\begin{proof} (Proposition \ref{prop:prd_to_br}): 
Fix a player $i$ and fix any bid profile $b_{-i}$ of the other players, let $b^*_i$ be the best response of $i$ to $b_{-i}$ with support $s^*$ and let $(b^t_i)_{t=1}^{\infty}$ be a sequence of consecutive proportional responses made by $i$. That is, $b^{t+1}_i = f_i(b^t_i)$. We start the proof with several claims proving that any sub-sequence of $(b^t_i)_{t=1}^{\infty}$ cannot converge to any fixed point of $f_i$ other than $b^*_i$. After establishing this, we prove that the sequence indeed converges to $b^*_i$.

\textit{Claim 1}: Every fixed point of Proportional Response Dynamic has equal `bang-per-buck` for all goods with a positive bid. That is, if $b^{**}_i$ is a fixed point of $f_i$ then $\frac{a_{ij}}{p^{**}_j}=\frac{u^{**}_i}{B_i}$ for every good $j$ with $b^{**}_{ij} > 0$, where $u^{**}_i$ is the utility achieved for $i$ with the bids $b^{**}_i$.

\textit{Proof}: By substituting $b^{**}_i$ into the PRD update rule, we have 
\[
    \begin{split}
             &   b^{**}_i = f_i(b^{**}_i) \\
        \iff & \forall j\quad  b^{**}_{ij} = \frac{\nicefrac{a_{ij}}{p^{**}_j}}{\nicefrac{u^{**}_i}{B_i}} b^{**}_{ij} \\
        \iff & \text{either } b^{**}_{ij} = 0 \ \text {  or  } \ \nicefrac{a_{ij}}{p^{**}_j} = \nicefrac{u^{**}_i}{B_i}.
    \end{split}
\]

\qed

\textit{Claim 2}: The following properties of $b^*_i$ hold.

\begin{enumerate}
    \item Except $b^*_i$, there are no other fixed points of $f_i$ with a support that contains the support of $b^*_i$. Formally, there are no fixed points $b^{**}_i \neq b^*_i$ of $f_i$ with support $s^{**}$ such that $s^* \subset s^{**}$.
    \item The bids $b^*_i$ achieve a higher utility in the original game $G$, denoted $u^*_i$, than any other fixed point of Proportional Response Dynamics. Formally, let $b^{**}_i$ be any fixed point other than $b^*_i$, with utility $u^{**}_i$ in the original game $G$, then $u^*_i > u^{**}_i$. 
\end{enumerate}

\textit{Proof}:
Let $b_i$ be any fixed point of $f_i$. By the previous claim it holds that $\frac{a_{ij}}{p_j} = \frac{u_i}{B_i}$ whenever $b_{ij} > 0$. Multiplying by $p_j$ yields $a_{ij} =\frac{u_i}{B_i} p_j$. Summing over $j$  with $b_{ij}>0$ and rearranging yields $\frac{u_i}{B_i} = \frac{\sum_{j\in s} a_{ij}}{\sum_{j\in s} p_j} = c_s$ as defined in Lemma \ref{lemma:cs_maximum} with support $s$. By that lemma, we have that $c_{s^*} \geq c_s$ for any set $s$ distinct from $s^*$. Thus, we have that $\nicefrac{u^*_i}{B_i} = c_{s^*} \geq c_{s^{**}} = \nicefrac{u^{**}_i}{B_i}$ for $b^{**}_i$ being a fixed point of $f_i$ other than $b^*_i$ with support $s^{**}$ and utility value $u^{**}_i$.

Assume for the sake of contradiction that  $s^*\subset s^{**}$. If $j\in s^*$ then $j\in s^{**}$. By Claim 1 for every such $j$ the following inequality holds,
\[
    \frac{a_{ij}}{p^*_j} = \frac{u^*_i}{B_i} \geq \frac{u^{**}_i}{B_i} = \frac{a_{ij}}{p^{**}_j},
\]
implying that $p^{**}_j \geq p^*_j$. Subtracting $\theta_{ij}$ from both sides yields $b^{**}_{ij} \geq b^*_{ij}$. Summing over $j\in s^*$ yields a contradiction:
\[
    B_i = \sum_{j\in s^*} b^*_{ij} \leq \sum_{j\in s^*} b^{**}_{ij} < \sum_{j\in s^{**}} b^{**}_{ij} = B_i, 
\]
where the first inequality is as explained above, and the last by the strict set containment $s^* \subset s^{**}$. 

Finally, as there are no fixed points with support $s^{**}$ containing $s^*$, by Lemma \ref{lemma:cs_maximum}, the inequality stated above is strict, that is $c_{s^*} > c_{s^{**}}$ and so $u^*_i > u^{**}_i$. 

\qed

\textit{Claim 3}: If $b^{**}_i \neq b^{*}_i$ is a fixed point of $f_i$ then $b^{**}_i$ is not a limit point of any sub-sequence of  $(b^{t}_i)_{t=0}^\infty$.

\textit{Proof}:
The proof considers two cases:
(1) When $u_i$ is continuous at $b^{**}$ (2) when continuity doesn't hold.
Let $(b^{t_k})_{k=1}^\infty$ be a converging subsequence of $(b^{t}_i)_{t=0}^\infty$.

\textit{Case (1):}
The utility function $u_i$ is continuous at $b^{**}_i$ when for every good $j$ it holds that $\theta_{ij} > 0$ or $b^{**}_{ij} > 0$. i.e. that there is no good $j$ with both $\theta_{ij} = 0$ and $b^{**}_{ij} = 0$. This is implied directly from the allocation rule $x_{ij}=\frac{b_{ij}}{\theta_{ij} + b_{ij}}$ (see the formal definition in Section \ref{sec:model}) and the fact that $u_i=\sum_j a_{ij}x_{ij}$.
Examine the support of $b^{**}_i$, by Claim 2 there are no fixed points with support set $s^{**}$ containing $s^*$. Therefore $s^* \not\subset s^{**}$ implying that there exists a good $j\in s^* \setminus s^{**}$. That is, by definition of the supports, there exists $j$ with $b^*_{ij} >0$ and $b^{**}_{ij} = 0$. Consider such $j$ and assume for the sake of contradiction that $b^{**}_i$ is indeed a limit point. Then, by definition, for every $\delta^{**}>0$ exists a $T$ s.t. if $t> T$ then $\|b^{t_k}_i - b^{**}_{ij}\|< \delta^{**}$. Specifically it means that $|b^{t_k}_{ij} - b^{**}_{ij}| < \delta^{**}$ whenever $t>T$. 

By Claim 2, $u^*_i > u^{**}_i$. Then, by continuity there exists a $\delta'$ s.t. if $\|b^{**}_i - b_i\|< \delta'$ then $|u_i(b_i) - u^{**}_i| < u^*_i - u^{**}_i$.
Take $\delta^{**} < \min\{\delta', b^*_{ij}\}$ and, by the assumption of convergence, there is a $T$ s.t. for $t > T$ and we have that $\| b^{t_k}_i -  b^{**}_i\| < \delta^{**}$. This implies  
(I) $|b^{t_k}_{ij} - 0| < \delta^{**}< b^*_{ij}$ as $b^{**}_{ij}=0$ and (II) $|u^{t_k}_i - u^{**}_i| < u^*_i - u^{**}_i \implies u^{t_k}_i < u^*_i$. From these two, we can conclude that 
\[
    \frac{a_{ij}}{p^{t_k}_j} = \frac{a_{ij}}{\theta_{ij} + b^{t_k}_{ij}}  > \frac{a_{ij}}{\theta_{ij} + b^*_{ij}} = \frac{u^*_i}{B_i} >  \frac{u^{t_k}_i}{B_i}. 
\]

Finally, observe that  by rearranging the PRD update rule we get $b^{t+1}_{ij} = \frac{\nicefrac{a_{ij}}{p^{t_k}_j}}{\nicefrac{u^{t_k}_i}{B_i}} b^{t_k}_{ij}$,  implying that $b^{t_k+1}_{ij} > b^{t_k}_{ij}$ since $\frac{\nicefrac{a_{ij}}{p^{t_k}_j}}{\nicefrac{u^{t_k}_i}{B_i}} > 1$ for $t> T$ and $b^0_{ij} > 0$. This means that for all $t_k> T$ we have $b^{t_k}_{ij} > b^{T+1}_{ij}$. That is, $b^{t_k}_{ij}$ cannot converge to zero and thus the subsequence cannot converge to $b^{**}_i$, a contradiction.

\textit{Case (2):} When there exists a good $j$ with $\theta_{ij} = 0$ and $b^{**}_{ij} = 0$ we have that $u_i$ is not continuous at $b^{**}_i$ and the previous idea doesn't work. Instead we will contradict the PRD update rule. Assume of the sake of contradiction that $b^{**}_i$ is a limit point of a subsequence of PRD updates. Therefore for every $\epsilon$ exists a $T$ s.t. if $t_k>T$ then $|b^{t_k}_{ij}-b^{**}_{ij}|<\epsilon$. Note that $b^{**}_{ij}=0$ in this case and set $\epsilon < \frac{a_{ij}}{m}B_i$ and so, for $t_k>T$ it holds that $\frac{a_{ij}}{b^{t_k}_{ij}}> \frac{m}{B_i}$. Also note that $p^{t_k}_j = \theta^{t_k}_{ij} + b^{t_k}_{ij} = b^{t_k}_{ij}$ and that the maximal utility a buyer may have is $m$ (when it is allocated every good entirely). Then overall we have that $\frac{a_{ij}}{p^{t_k}_j}>\frac{m}{B_i} > \frac{u^{t_k}_i}{B_i}$.  
The PRD update rule is $b^{t_k+1}_{ij}=\frac{\nicefrac{a_ij}{p^{t_k}_{j}}}{\nicefrac{u^{t_k}_i}{B_i}}b^{t_k}_i$. But since the ratio $\frac{\nicefrac{a_ij}{p^{t_k}_{ij}}}{\nicefrac{u^{t_k}_i}{B_i}}$ is greater than 1 it must be that $b^{t_k+1}_{ij}> b^{t_k}_{ij}$. And so every subsequent element of the subsequence is bounded below by $b^{T+1}_{ij}>0$ and as before, we reach a contradiction as the subsequence cannot converge to $b^{**}_i$.
\qed

Finally we can prove the convergence of the sequence $(b^t_i)_{t=1}^{\infty}$. As the action space $S_i$ is compact, there exists a converging subsequence $b^{t_k}_i$ with the limit $b^{**}_i$. If $b^{**}_i = b^*_i$ for any such subsequence, then we are done. Otherwise, assume $b^{**}_i \neq b^*_i$. By the previous claim any fixed point of $f_i$ other than $b^*_i$ is not a limit point of any subsequence, thus $b^{**}_i$ is not a fixed point of $f_i$. By Lemma \ref{lemma:subset-updates}, any subset of players performing proportional response, strictly increase the potential function unless performed at a fixed point. When discussing a proportional response of a single player, with all others remaining fixed, this implies, by the definition of potential function, that $\uu$ is increased at each such step. Let $\epsilon < \uu(f_i(b^{**})) - \uu(b^{**})$, this quantity is positive since $b^{**}_i$ is not a fixed point. The function $\uu \circ f_i$ is a continuous function and $b^{t_k}_i$ converges to $b^{**}_i$ therefore there exists a $T$ such that for all $t_k > T$ we have that $|\uu(f_i(b^{**})) - \uu(f_i(b^{t_k}))| < \epsilon$. Substituting $\epsilon$ yields $\uu(f_i(b^{**})) - \uu(f_i(b^{t_k})) < \uu(f_i(b^{**})) - \uu(b^{**})$ which implies $\uu(b^{**}) < \uu(f_i(b^{t_k})) = \uu(b^{t_k+1}) \leq \uu(b^{t_{k+1}})$. That is, the sequence $u_i(b^{t_k}_i)$ is bounded away from $\uu(b^{**})$ and since $\uu$ is a continuous function, this implies that $b^{t_k}_i$ is bounded away from $b^{**}_i$ --- a contradiction to convergence.
\end{proof}

\section{Simultaneous Play by Subsets of Agents}

In order to prove Lemma \ref{lemma:subset-updates}, we first need some further definitions and technical lemmas. We use the notation $D(x\|y)$ to denote the KL divergence between the vectors $x$ and $y$, i.e., $D(x\|y)=\sum_{j} x_j \ln(\frac{x_j}{y_j})$. 
For a subset of the players $v\subseteq \mathcal{B}$, the subscript $v$ on vectors denotes the restriction of the vector to the coordinates of the players in $v$, that is, for a vector $b$ we use the notation $b_v=(b_{ij})_{i\in v, j\in [m]}$ to express the restriction to the subset. $\ell_\Phi(b_v;b'_v)$ denotes the linear approximation of $\Phi$; 
that is, $\ell_\Phi(b_v;b'_v)=\Phi(b'_v) + \nabla_{b_v}\Phi(b'_v)(b_v-b'_v)$.
 
The idea described in the next lemma to present the potential function as a linear approximation term and a divergence term was first described in \cite{birnbaum2011distributed} for a different scenario when all agents act together in a synchronized manner using mirror descent; we extend this idea to our asynchronous setting which requires using different methods and as well as embedding it in a game. 

\begin{lemma}\label{lemma:potential_linear_approx}
    Fix a subset of the players $v \subset \mathcal{B}$ and a bid profile $b_{-v}$ of the other players. Then, for all $b_v, b'_v \in S_v$ we have that $\Phi(b_v) = \ell_\Phi(b_v; b'_v) - D(p \| p')$, where $p=\sum_{i\notin v} b_{ij} + \sum_{i\in v} b_{ij}$ and $p'=\sum_{i\notin v} b_{ij} + \sum_{i\in v} b'_{ij}$.
\end{lemma}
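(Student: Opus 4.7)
The plan is to reduce this identity to a direct calculation exploiting the explicit form of $\Phi$ and the fact that $b_v,b'_v \in S_v$ both respect every player's budget constraint.

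First I would rearrange the claim into the equivalent statement
\[
\Phi(b_v) - \Phi(b'_v) - \nabla_{b_v}\Phi(b'_v)\cdot(b_v - b'_v) \;=\; -D(p\,\|\,p'),
\]
and then compute the left-hand side explicitly. Using $\Phi(b) = \sum_{ij} b_{ij}\ln a_{ij} + \sum_j p_j(1-\ln p_j)$, the difference $\Phi(b_v)-\Phi(b'_v)$ picks up the $\ln a_{ij}$ contributions only for $i\in v$ (since bids of players outside $v$ are held fixed) plus the difference of the price-dependent terms. From the calculation already carried out in the proof of Theorem~\ref{thm:potential}, the gradient entry is $\partial_{b_{ij}}\Phi(b'_v) = \ln a_{ij} - \ln p'_j$ for $i\in v$.

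Substituting these expressions, the $\ln a_{ij}$ contributions cancel exactly against the linear-approximation term, and I am left with
\[
\sum_j \Bigl[p_j(1-\ln p_j) - p'_j(1-\ln p'_j)\Bigr] + \sum_{i\in v,\,j}(b_{ij}-b'_{ij})\ln p'_j.
\]
Next I would use the key identity $p_j - p'_j = \sum_{i\in v}(b_{ij}-b'_{ij})$ to rewrite the second sum as $\sum_j(p_j - p'_j)\ln p'_j$. After combining terms, the expression collapses to $\sum_j\bigl[(p_j - p'_j) - p_j\ln(p_j/p'_j)\bigr]$.

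The final step is to note that because both $b_v$ and $b'_v$ lie in $S_v$, every player in $v$ spends exactly their budget, so $\sum_j p_j = \sum_j p'_j = \sum_i B_i$, which makes $\sum_j (p_j - p'_j) = 0$. What remains is $-\sum_j p_j\ln(p_j/p'_j) = -D(p\,\|\,p')$, as desired. There is no genuine obstacle here; the only thing to be careful about is that the gradient is taken with respect to $b_v$ only (so the bids of players outside $v$ contribute identically on both sides of the identity and drop out), and that the budget-preservation observation is what kills the leftover linear-in-price term.
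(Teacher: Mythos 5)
Your proposal is correct and follows essentially the same route as the paper's proof: expand $\Phi(b_v)-\Phi(b'_v)$ and the linear term explicitly, cancel the $\ln a_{ij}$ contributions, use $p_j - p'_j = \sum_{i\in v}(b_{ij}-b'_{ij})$, and kill the leftover $\sum_j(p_j-p'_j)$ term via budget preservation (the paper phrases this as $\sum_j p_j = 1$ by normalization, which is the same fact). No gaps.
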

\begin{proof}
    Calculating the difference $\Phi(b_v) - \ell_\Phi(b_v;b'_v)$ yields
    \[
        \begin{split}
            \Phi(b_v) - \ell_\Phi(b_v;b'_v) &= \Phi(b_v) - \Phi(b'_v) - \nabla_{b_v}\Phi(b'_v)(b_v - b'_v).
        \end{split}
    \]
    
    We rearrange the term $\Phi(b_v) - \Phi(b'_v)$ as follows. 
    \[
        \begin{split}
            \Phi(b_v) - \Phi(b'_v) &= \sum_{i\in v, j\in[m]} (b_{ij} - b'_{ij})\ln(a_{ij})-\sum_{j}(p_j\ln(p_j)-p'_j\ln(p'_j)) -\sum_{j}(p_j-p'_j) 
            \\ &= \sum_{i\in v, j\in[m]} (b_{ij} - b'_{ij})\ln(a_{ij})-\sum_{j}(p_j\ln(p_j)-p'_j\ln(p'_j)),
        \end{split}
    \]
    where the last equality is since $\sum_j p_j = 1$ for any set of prices because the economy is normalized (see the model section in the main text).
    
    The term $\nabla_{b_v}\Phi(b'_v)(b_v - b'_v)$ is expanded as follows.
    \[
        \begin{split}
            \nabla_{b_v}\Phi(b'_v)(b_v - b'_v) &= \sum_{i\in v, j\in[m]} \ln(\frac{a_{ij}}{p'_j})(b_{ij}-b'_{ij})
            \\ &=\sum_{i\in v, j\in[m]} \ln(a_{ij})(b_{ij}-b'_{ij}) - \sum_{i\in v, j\in[m]} \ln(p'_j)(b_{ij}-b'_{ij}).
        \end{split}
    \]
    
    Subtracting the latter from the former cancels out the term $\sum_{i\in v, j\in[m]} \ln(a_{ij})(b_{ij}-b'_{ij})$, and we are left with the following.

    \[
        \begin{split}
            \Phi(b_v) - \ell_\Phi(b_v;b'_v) &= \Phi(b_v) - \Phi(b'_v) - \nabla_{b_v}\Phi(b'_v)(b_v - b'_v)
            \\ &=\sum_{i\in v, j\in[m]} \ln(p'_j)(b_{ij}-b'_{ij}) -\sum_{j}(p_j\ln(p_j)-p'_j\ln(p'_j))
            \\&= -\sum_{j}p_j\ln(p_j)-(p'_j - \sum_{i \in v} b'_{ij} + \sum_{i \in v} b_{ij})\ln(p'_j)
            \\&= -\sum_{j}p_j\ln(p_j)-(\theta_{vj} + \sum_{i \in v} b_{ij})\ln(p'_j)
            \\&= -\sum_{j} p_j\ln(p_j) - p_j\ln(p'_j)
            \\&=-D(p\|p').
        \end{split}
    \]
\end{proof}

\begin{lemma}\label{lemma:kl_div_bids}
    For any subset of the players $v \subset \mathcal{B}$ and any bid profile $b_{-v}$ of the other players and for every $b_v, b'_v \in S_v$ it holds that $D(p\|p') \leq D(b_v\|b'_v)$, with equality only when $b_v=b'_v$.
\end{lemma}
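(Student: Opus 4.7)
The plan is to reduce the inequality to the log-sum inequality applied good-by-good. The key observation is that, because $b_{-v}$ is held fixed, the anchor $\theta_{vj}=\sum_{i\notin v}b_{ij}$ appears unchanged in both $p_j=\theta_{vj}+\sum_{i\in v}b_{ij}$ and $p'_j=\theta_{vj}+\sum_{i\in v}b'_{ij}$, so I can view $p_j$ and $p'_j$ as sums of $|v|+1$ nonnegative terms whose first coordinate agrees in the two expressions.

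For each fixed $j$ I would apply the log-sum inequality $\sum_k a_k\ln(a_k/c_k)\ge(\sum_k a_k)\ln\!\bigl((\sum_k a_k)/(\sum_k c_k)\bigr)$ to the pairs $\{(\theta_{vj},\theta_{vj})\}\cup\{(b_{ij},b'_{ij})\}_{i\in v}$. The anchor contributes $\theta_{vj}\ln(\theta_{vj}/\theta_{vj})=0$ to the left side, leaving
\[
\sum_{i\in v}b_{ij}\ln(b_{ij}/b'_{ij})\;\ge\;p_j\ln(p_j/p'_j).
\]
Summing over $j$ gives exactly $D(b_v\|b'_v)\ge D(p\|p')$. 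Equivalently, this is a one-line instance of the data processing inequality: $p$ is a marginal of the extended vector that consists of the anchor $\theta_{vj}$ together with the bids $(b_{ij})_{i\in v}$.

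For the equality clause, tightness of log-sum forces the ratios $b_{ij}/b'_{ij}$ for $i\in v$, together with the anchor ratio $\theta_{vj}/\theta_{vj}=1$, to all coincide; this pins the common ratio to $1$ and yields $b_{ij}=b'_{ij}$ on every coordinate $j$ with $\theta_{vj}>0$. The main (minor) obstacle is the edge case $\theta_{vj}=0$, in which log-sum tightness only constrains the $b_{ij}/b'_{ij}$ across $i\in v$ to a common scalar $c_j$; I would close this corner by combining these with the budget identities $\sum_j b_{ij}=B_i=\sum_j b'_{ij}$ for each $i\in v$, which forces $c_j=1$ as soon as some player in $v$ spends a positive amount on a good with nonzero anchor. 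Aside from this bookkeeping on the degenerate coordinates, the statement is an immediate consequence of the log-sum/data-processing inequality.
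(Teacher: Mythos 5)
Your derivation of the inequality itself is correct and is, in substance, the same argument as the paper's: the paper proves the single\hbox{-}player case from the joint convexity of $D(\cdot\|\cdot)$ together with its positive homogeneity, writing $p=\tfrac12\theta_v+\tfrac12 b_v$ scaled appropriately, and then sums over $i\in v$; the per-good log-sum inequality you invoke is exactly the standard proof of that convexity, so the two routes differ only in packaging (yours is arguably the more direct one, since it avoids the slightly delicate step of summing the single-player bounds over a set $v$ whose members' anchors change simultaneously).

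The genuine gap is in the equality clause. You correctly isolate the only obstruction --- goods $j$ with $\theta_{vj}=0$, where log-sum tightness constrains the ratios $b_{ij}/b'_{ij}$, $i\in v$, only to a common scalar $c_j$ --- but the budget identities do not close it. Take $v=\{i\}$ and two goods $j_1,j_2$ with $\theta_{vj_1}=\theta_{vj_2}=0$, and set $b'_{ij_1}=b'_{ij_2}=B_i/2$ while $b_{ij_1}=3B_i/4$, $b_{ij_2}=B_i/4$ (any further budget spent identically in $b_i$ and $b'_i$ on goods with positive anchor). The budget is preserved, $c_{j_1}=3/2\neq 1\neq 1/2=c_{j_2}$, and since $p_j=b_{ij}$ on these goods one checks directly that $D(p\|p')=D(b_v\|b'_v)$ while $b_v\neq b'_v$. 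This also refutes your claim that a positive bid on some nonzero-anchor good forces $c_j=1$: such a good only contributes an equality $b_{ij}=b'_{ij}$ on that coordinate, leaving the remaining budget freely reallocatable among two or more zero-anchor goods. So as written your argument establishes the strict inequality only when every good in the relevant supports has $\theta_{vj}>0$. (For what it is worth, the paper's own proof has the same blind spot: it appeals to ``convexity with equality only if the arguments are equal,'' which fails on coordinates where the anchor vanishes; you have at least located the real subtlety, but your patch does not repair it, and the equality clause in full generality needs either the hypothesis $\theta_{vj}>0$ or a different argument.)
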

\begin{proof}
    We begin by proving a simpler case where $v=\{i\}$ for some player $i$ and use it to prove the more general statement. Fix $i$ and $b_{-i}$, which implies fixing some $\theta_i$. KL divergence is convex in both arguments with equality only if the arguments are equal; formally, for $\lambda\in (0,1)$ it holds that $D(\lambda \theta_i + (1-\lambda)b_i\| \lambda \theta_i + (1-\lambda)b'_i) \leq \lambda D(\theta_i \| \theta_i) + (1-\lambda) D(b_i\| b'_i)$, which is equivalent to $D(\lambda \theta_i + (1-\lambda)b_i\| \lambda \theta_i + (1-\lambda)b'_i) \leq (1-\lambda) D(b_i\| b'_i)$, with equality only if $b_i=b'_i$ (since  $D(\theta_i \| \theta_i) = 0$). Substituting $\lambda=\frac{1}{2}$ and noting that  $p_j=\theta_{ij} + b_{ij}$ (and the same for $p'_j$ and $b'_{ij}$), we obtain the following relation.
    \[
    \begin{split}
            D(\frac{1}{2} p \| \frac{1}{2} p') &= D(\frac{1}{2} \theta_i + \frac{1}{2}b_i \| \frac{1}{2} \theta_i + \frac{1}{2}b'_i)
            \\ &\leq \frac{1}{2} D(b_i\| b'_i).
    \end{split}
    \]
    On the other hand, the expression $ D(\frac{1}{2} p \| \frac{1}{2} p')$ can be evaluated as follows.
    \[
    \begin{split}
        D(\frac{1}{2} p \| \frac{1}{2} p') &= \sum_j \frac{1}{2}p_j\ln(\frac{\nicefrac{1}{2}p_j}{\nicefrac{1}{2}p'_j})
        \\&=\frac{1}{2} \sum_j p_j\ln(\frac{p_j}{p'_j})
        \\&=\frac{1}{2} D(p \| p').
    \end{split}
    \]
    And therefore, we have 
    $ D(p \| p') \leq D(b_i\| b'_i)$, with equality only if $b_i = b'_i$.

    Now we can prove the general case, as stated fix $v$ and $b_{-v}$ and let $b_v, b'_v\in S_v$. We know that for all $i\in v$ it is true that $D(p\|p')\leq D(b_i\| b'_i)$, summing those inequalities for all $i\in v$ yields $|v|D(p\|p')\leq \sum_{i\in v} D(b_i\| b'_i)$, on the one hand clearly $D(p\|p') \leq |v|D(p\|p')$ and on the other hand $\sum_{i\in v} D(b_i\| b'_i) = \sum_{i\in v}\sum_j b_{ij}\ln(\frac{b_{ij}}{b'_{ij}}) = D(b_v\| b'_v)$ and the result is obtained.
\end{proof}

\begin{lemma}\label{lemma:prd_is_maximizer}
    Let $v\subseteq \mathcal{B}$, let $f_v:S\to S$ be a proportional response update function for members of $v$ and identity for the others, and let $b'\in S$ be some bid profile.  Then, $(f_v(b'))_v=\arg\max_{b_v\in S_v} \{\ell_\Phi(b_v; b'_v) - D(b_v\| b'_v) \}$.
\end{lemma}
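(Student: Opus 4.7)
The plan is to solve the constrained maximization problem explicitly via first-order conditions and observe that the unique solution coincides with the proportional response rule. First, note that for $i \in v$, the function $F(b_v) := \ell_\Phi(b_v; b'_v) - D(b_v \| b'_v)$ is strictly concave in $b_v$, since $\ell_\Phi(\cdot; b'_v)$ is linear and $-D(\cdot \| b'_v)$ is strictly concave in its first argument. Hence the maximizer over the product of scaled simplices $S_v$ is unique, and it suffices to verify that the proposed point $(f_v(b'))_v$ satisfies the Karush-Kuhn-Tucker conditions.

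Next, I would compute the partial derivatives. Using $\partial_{b_{ij}} \Phi(b') = \ln(a_{ij}/p'_j)$ from the proof of Theorem \ref{thm:potential}, and $\partial_{b_{ij}} D(b_v\| b'_v) = \ln(b_{ij}/b'_{ij}) + 1$, we get
\[
\partial_{b_{ij}} F(b_v) = \ln\!\left(\tfrac{a_{ij}}{p'_j}\right) - \ln\!\left(\tfrac{b_{ij}}{b'_{ij}}\right) - 1 \qquad (i \in v).
\]
Introducing a Lagrange multiplier $\lambda_i$ for the budget constraint $\sum_j b_{ij} = B_i$ for each $i \in v$, the stationarity condition $\partial_{b_{ij}} F = \lambda_i$ yields
\[
b_{ij} \;=\; b'_{ij}\,\tfrac{a_{ij}}{p'_j}\, e^{-1-\lambda_i}.
\]
Substituting into the budget constraint,
\[
B_i \;=\; e^{-1-\lambda_i} \sum_j \tfrac{a_{ij}\, b'_{ij}}{p'_j} \;=\; e^{-1-\lambda_i}\, u_i(b'),
\]
so $e^{-1-\lambda_i} = B_i/u_i(b')$, and therefore $b_{ij} = B_i \cdot a_{ij}\, x'_{ij}/u_i(b')$, which is exactly the proportional response update $(f_v(b'))_{ij}$. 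For players $i \notin v$ the optimization is not performed (the objective only depends on the coordinates of $v$), matching the identity action of $f_v$ on them.

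It remains to dispose of boundary issues. Since $b'_{ij} > 0$ whenever $a_{ij} > 0$ (an invariant preserved by PRD from the initial condition), the computed bids are strictly positive on the support of $a_i$, so non-negativity holds with slack and the KKT conditions reduce to the stationarity equation I solved. For coordinates $(i,j)$ with $a_{ij}=0$, both the PRD update and the derivative analysis collapse to $b_{ij}=0$ under the standard convention $0\ln 0 = 0$, which is also the correct solution on the boundary. Strict concavity then guarantees that this critical point is the unique maximizer, completing the proof. The main obstacle I anticipate is purely notational bookkeeping around these boundary cases; the core argument is the Lagrangian computation, which is a direct calculation.
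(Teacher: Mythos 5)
Your proof is correct, but it takes a genuinely different route from the paper. You solve the concave program by first-order (KKT) conditions: you differentiate the objective, introduce a multiplier $\lambda_i$ per budget constraint, solve the stationarity equation to get $b_{ij}=b'_{ij}\frac{a_{ij}}{p'_j}e^{-1-\lambda_i}$, and eliminate $\lambda_i$ via the budget to recover the PRD update; strict concavity then certifies uniqueness and global optimality. The paper instead never differentiates: it adds and subtracts terms that are constant in $b_v$ (including $\sum_{i\in v}B_i\ln(u'_i/B_i)$) and shows by direct algebra that the objective equals $-D\bigl(b_v\,\big\|\,(f_v(b'))_v\bigr)$ up to an additive constant, so the maximizer is immediate from the fact that KL divergence is nonnegative and vanishes only when its arguments coincide. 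The paper's route is calculation-only and sidesteps the boundary bookkeeping entirely (nonnegativity is absorbed into the domain of the KL divergence), and it yields the slightly stronger structural fact that the whole objective is a shifted negative divergence to the PRD point --- which is also what makes the comparison with best response (divergence in prices versus divergence in bids) transparent in Section 5. Your route buys a standard, mechanical verification familiar from convex optimization, at the cost of having to argue separately that the stationary point is interior on the support of $a_i$ and that the $a_{ij}=0$ coordinates collapse correctly; your handling of both points is adequate given the paper's convention that $b^0_{ij}>0$ whenever $a_{ij}>0$ and its w.l.o.g.\ assumption $a_{ij}>0$.
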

\begin{proof}
    By adding and removing constants that do not change the maximizer of the expression on the right hand side, we obtain that the maximizer is exactly the proportional response update rule: 
    \[
        \begin{split}
            \arg\max_{b_v\in S_v} \{\ell_\Phi(b_v; b'_v) - D(b_v\| b'_v) \} &= \arg\max_{b_v\in S_v} \{ \Phi(b'_v) + \nabla_{b_v}\Phi(b'_v)(b_v-b'_v) - D(b_v\| b'_v) \}
            \\ &= \arg\max_{b_v\in S_v} \{\nabla_{b_v}\Phi(b'_v)b_v - D(b_v\| b'_v) \}
            \\ &= \arg\max_{b_v\in S_v} \{\nabla_{b_v}\Phi(b'_v)b_v - D(b_v\| b'_v) - \sum_{i\in v} B_i \ln(\frac{u'_i}{B_i})\}.
        \end{split}
    \]
    Rearranging the last expression by elements yields the following result, 
    \[
        \begin{split}
            & \nabla_{b_v}\Phi(b'_v)b_v - D(b_v\| b'_v) - \sum_{i\in v} B_i \ln(\frac{u'_i}{B_i}) =
            \\&= \sum_{i\in v, j\in [m]} b_{ij} \ln(\frac{a_{ij}}{p'_j}) - \sum_{i\in v, j\in [m]} b_{ij}\ln(\frac{b_ij}{b'_{ij}})- \sum_{i\in v, j\in[m]} b_{ij} \ln(\frac{u'_i}{B_i}) 
            \\&=\sum_{i \in v, j \in [m]} b_{ij}\ln(\frac{a_{ij}}{p'_j}\frac{b'_{ij}}{b_{ij}}\frac{B_i}{u'_i})
            \\&=\sum_{i \in v, j \in [m]} b_{ij}\ln(\frac{\frac{a_{ij}x'_{ij}}{u'_i}B_i}{b_{ij}})
            \\&=-\sum_{i \in v, j \in [m]} b_{ij}\ln(\frac{b_{ij}}{\frac{a_{ij}x'_{ij}}{u'_i}B_i}),
        \end{split}
    \]
    which is exactly $-D(b_v\| (f_v(b'))_v)$, since $(f_v(b'))_{ij} = \frac{a_{ij}x'_{ij}}{u'_i}B_i$ for $i\in v$ by definition. That is, our maximization problem is equivalent to $\arg\min \{ D(b_v\| (f_v(b'))_v) \}$. Finally, note that KL divergence is minimized when both of its arguments are identical, and $(f_v(b'_v))_v \in S_v$, the domain of the minimization.
\end{proof}

\begin{proof} 
(Lemma \ref{lemma:subset-updates}):
    Let $v \subseteq \mathcal{B}$ be a subset of players and let $b \in S$ be some bid profile. By combining the lemmas proved in this section have that
    \[
        \Phi(f_v(b)) \geq \ell_\Phi(f_v(b); b) - D(f_v(b) \| b)
                 \geq  \ell_\Phi(b; b) - D(b \| b)
                 = \Phi(b),
    \]
    where the first inequality is by Lemmas \ref{lemma:potential_linear_approx} and \ref{lemma:kl_div_bids} with the inequality being strict whenever $f_v(b) \neq b$, and the second inequality is by Lemma \ref{lemma:prd_is_maximizer}, as $f_v(b)$ was shown to be the maximizer of this expression over all $b\in S$.
\end{proof}
 
An interesting case to note here is when $v={i}$. In this case, the lemmas above show that if the players' bids are $b^t$ and $i$ is being activated by the adversary, then the best response bids of $i$ to $b^t_{-i}$ are the solutions to the optimization problem $\arg\max_{b_i\in S_i} \{\ell_{\uu}(b_i;b^t_i) - D(p\|p^t)\}$. On the other hand, the proportional response to $b^t_{-i}$ is the solution to the optimization problem $\arg\max_{b_i\in S_i} \{\ell_{\uu}(b_i;b^t_i) - D(b_i\|b^t_i)\}$. This can be seen as a relaxation of the former, as proportional response does not increase $\uu$ (or equivalently the potential) as much as best response does. However, proportional response is somewhat easier to compute.

\section{Convergence of Asynchronous Proportional Response Dynamics}

\begin{proof} (Theorem \ref{thm:convergence}):
Denote the distance between a point $x$ and a set $S$ as $d(x,S)=\inf_{x^*\in S} \|x-x^*\|$. By Theorem \ref{thm:equlibrium-and-potential} we have that the set of potential maximizing bid profiles is identical to the set of market equilibria. Denote this set by $ME$ and the maximum value of the potential by $\Phi^*$. More specifically, every $b^*\in ME$ achieves $\Phi(b^*)=\Phi^*$, and $\Phi^*$ is achieved only by elements in $ME$. 

\vspace{2pt}
We start with the following lemma.

\begin{lemma}\label{lemma:dist_from_eq}
    For every $\epsilon > 0$ there exists a $\delta > 0$ such that $\Phi(b) > \Phi^* - \delta$ implies $d(b,ME) < \epsilon$.
\end{lemma}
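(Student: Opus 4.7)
The plan is to prove the contrapositive via a standard compactness/continuity argument, exploiting the fact that $\Phi$ is continuous on the compact strategy space $S$ and that $ME$ (as the preimage under $\Phi$ of the single value $\Phi^*$, by Theorem \ref{thm:equlibrium-and-potential}) is a closed subset of $S$.

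I proceed by contradiction: suppose the claim fails, so that there exists some $\epsilon_0 > 0$ such that for every $n \in \mathbb{N}$ one can find a bid profile $b_n \in S$ satisfying
\[
\Phi(b_n) > \Phi^* - \tfrac{1}{n}
\qquad \text{and} \qquad
d(b_n, ME) \geq \epsilon_0.
\]
The first inequality, together with the fact that $\Phi^*$ is the global maximum of $\Phi$ on $S$, forces $\Phi(b_n) \to \Phi^*$ as $n \to \infty$.

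Next I invoke compactness. Since each $S_i$ is a scaled simplex, the product $S$ is compact, so the sequence $(b_n)$ has a subsequence $(b_{n_k})$ converging to some limit $b^{**} \in S$. By continuity of $\Phi$ on $S$, we have $\Phi(b^{**}) = \lim_{k \to \infty} \Phi(b_{n_k}) = \Phi^*$. By Theorem \ref{thm:equlibrium-and-potential}, the maximizing set of $\Phi$ coincides with $ME$, and therefore $b^{**} \in ME$, which gives $d(b^{**}, ME) = 0$.

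On the other hand, the map $b \mapsto d(b, ME)$ is continuous (in fact $1$-Lipschitz) in $b$, so $d(b_{n_k}, ME) \to d(b^{**}, ME) = 0$. This contradicts $d(b_{n_k}, ME) \geq \epsilon_0 > 0$ for all $k$, completing the argument. No step here looks genuinely hard; the only minor care required is noting that $\Phi$ is continuous on all of $S$ (the only potentially singular term is $p_j \ln p_j$, which extends continuously to $0$ at $p_j = 0$) and that $ME$ is nonempty and closed, both of which follow from Theorem \ref{thm:equlibrium-and-potential} together with the continuity of $\Phi$ and compactness of $S$.
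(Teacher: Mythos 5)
Your proof is correct and follows essentially the same route as the paper's: assume a sequence with $\Phi \to \Phi^*$ but bounded away from $ME$, extract a convergent subsequence by compactness of $S$, and use continuity of $\Phi$ to place the limit point both inside and outside $ME$. The only cosmetic difference is that you phrase the final contradiction via continuity of $d(\cdot, ME)$ while the paper passes to the limit in $\|b_{t_j} - b^*\|$ directly.
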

\begin{proof}
Assume otherwise that for some $\epsilon_0$ there exist a sequence $(b_t)$ such that $\Phi(b_t)\rightarrow  \Phi^*$ but $\inf_{b^*\in ME} \|b_t-b^*\| \ge \epsilon_0$ for all $t$. Note that for all $b^* \in ME$ we have that $\Phi(b^*)=\Phi^*$ and $\|b_t - b^*\|\geq \inf_{b'\in ME} \|b_t-b'\| \geq \epsilon_0$ for all $t$. Take a condensation point $b^{**}$ of this sequence and a subsequence $(t_j)$ that converges to $b^{**}$. Thus by our assumption we have $\Phi(b^{**})= \lim \Phi(b_{t_j}) = \Phi^*$. For any $b^*\in ME$ we have $\|b^{**}-b^*\| = \lim \|b_{t_j}-b^*\| \ge \epsilon_0 > 0$.  The former equality must imply $b^{**}\in ME$, but the latter implies $b^{**} \notin ME$.
\end{proof}

Next, for a subset of players $A \subset \mathcal{B}$ let $f_A: [0,1]^n \rightarrow  [0,1]^n$ be the continuous function where $i \in A$ do a proportional response update and the other players play the identity function. 
(i.e., do not change their bids, see Section \ref{sec:player-subsets} in the main text).  

By Lemma \ref{lemma:subset-updates} from the main text we have that  
(i) For all $A$ we have that $f_A(b)=b$ if and only if for all $i \in A$ it holds that $f_i(b)=b$; and 
(ii) $\Phi(f_A(b)) > \Phi(b)$ unless $f_A(b)=b$.

\begin{definition*}
The \textit{stable set} of $b^{**}$ is defined to be $S(b^{**})=\{i | f_i(b^{**})=b^{**}\}$.
\end{definition*}

A corollary (i) and (ii) above is that if $A \subseteq S(b^{**})$ then $f_A(b^{**})=b^{**}$, but if $A \setminus S(b^{**}) \ne \emptyset$ then $\Phi(f_A(b^{**})) > \Phi(b^{**})$.

\begin{lemma}: Let  $\Phi(b^{**}) < \Phi^*$. Then there exists $\delta>0$ such that for every $\|b-b^{**}\|\le\delta$ and every  $A \setminus S(b^{**}) \ne \emptyset$ we have that $\Phi(f_A(b)) > \Phi(b^{**})$.
\end{lemma}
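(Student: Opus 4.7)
The plan is to combine the corollary stated just before the lemma (that whenever $A \setminus S(b^{**})\neq \emptyset$ we have $\Phi(f_A(b^{**})) > \Phi(b^{**})$) with the continuity of each $f_A$ and a finiteness argument over subsets of players. Concretely, I would first observe that the collection of subsets $A \subseteq \mathcal{B}$ satisfying $A \setminus S(b^{**}) \neq \emptyset$ is a finite family (since $\mathcal{B}$ is finite), so it suffices to establish the existence of a suitable $\delta_A > 0$ for each such $A$ separately and then take $\delta := \min_A \delta_A$, which remains strictly positive.

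For a single $A$ with $A \setminus S(b^{**}) \neq \emptyset$, set $\eta_A := \Phi(f_A(b^{**})) - \Phi(b^{**}) > 0$, where positivity is the content of the corollary that follows from Lemma \ref{lemma:subset-updates} together with the characterization of stable sets. Since $\Phi$ is continuous and $f_A$ is a coordinate-wise combination of PRD update maps and the identity (each PRD coordinate being $\tfrac{a_{ij} b_{ij}}{p_j u_i} B_i$, a rational function of $b$ with nonvanishing denominators in a neighborhood of $b^{**}$), the composition $\Phi \circ f_A$ is continuous at $b^{**}$. Hence there exists $\delta_A > 0$ such that $\|b - b^{**}\| \leq \delta_A$ implies $|\Phi(f_A(b)) - \Phi(f_A(b^{**}))| < \eta_A/2$, and in particular $\Phi(f_A(b)) > \Phi(b^{**}) + \eta_A/2 > \Phi(b^{**})$. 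Taking the minimum over the finitely many offending subsets yields the uniform $\delta$ claimed in the statement.

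The main obstacle is justifying continuity of $f_A$ at $b^{**}$, since PRD updates involve divisions by the prices $p_j$ and by the utilities $u_i$, and these could a priori vanish. I would address this by recalling the standing convention in Section \ref{sec:model} that bids are initialized with $b^0_{ij} > 0$ whenever $a_{ij} > 0$ and that PRD preserves this support structure, so every condensation point $b^{**}$ of the trajectory inherits $p_j(b^{**}) > 0$ on every good touched by some player who values it, and $u_i(b^{**}) > 0$ for every $i$ (since $\sum_j a_{ij} = 1$ forces some valued good in $i$'s support). On any neighborhood small enough to avoid the price-zero coordinates irrelevant to the update (goods $j$ with $a_{ij}=0$ for all $i\in A$ contribute zero to the PRD formula in the convention of the paper), the map $f_A$ is a ratio of smooth functions with nonvanishing denominators, hence continuous. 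Once this technicality is in hand, the finiteness-plus-continuity argument above closes the lemma with no further computation.
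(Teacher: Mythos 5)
Your proof is correct and follows essentially the same route as the paper's: for each offending subset $A$ you use the strict potential gap $\Phi(f_A(b^{**}))-\Phi(b^{**})>0$ together with continuity of $\Phi\circ f_A$ to get a $\delta_A$, then take the minimum over the finitely many subsets. The only difference is that you spend extra effort justifying the continuity of $f_A$ (nonvanishing denominators), which the paper simply asserts; that added care is harmless and arguably a plus.
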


\begin{proof} Fix a set $A$ such that $A \setminus S(b^{**}) \ne \emptyset$ and let $\alpha = \Phi(f_A(b^{**})) - \Phi(b^{**}) >0$. Since $\Phi(f_A(\cdot))$ is continuous, there exists $\delta$ so that $|b-b^{**}|\le\delta$ implies $\Phi(f_A(b^{**})) - \Phi(f_A(b)) < \alpha $ and thus $\Phi(f_A(b)) > \Phi(b^{**})$.  Now take the minimum $\delta$ for all finitely many $A$.
\end{proof}

\begin{lemma} Let  $\Phi(b^{**}) < \Phi^*$ and let $F$ be a finite family of continuous functions such that for every $f \in F$ we have that $f(b^{**})=b^{**}$. Then there exists $\epsilon>0$ such that for every $b$ such that $\|b-b^{**}\|\le\epsilon$ and every $f \in F$ and every $A \setminus S(b^{**}) \ne \emptyset$ we have that $\Phi(f_A(f(b))) > \Phi(b^{**})$.
\end{lemma}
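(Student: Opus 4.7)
The plan is to reduce this statement to the preceding lemma via a standard continuity and uniformity argument. The preceding lemma already guarantees the existence of some $\delta > 0$ such that whenever $\|b - b^{**}\| \le \delta$ and $A \setminus S(b^{**}) \neq \emptyset$, we have $\Phi(f_A(b)) > \Phi(b^{**})$. So if I can ensure that applying any $f \in F$ to a nearby point keeps me inside this $\delta$-neighborhood of $b^{**}$, I am done.

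To that end, I would fix such a $\delta$ from the previous lemma, and then for each $f \in F$ use the fact that $f$ is continuous at $b^{**}$ with $f(b^{**}) = b^{**}$: for each such $f$ there exists $\epsilon_f > 0$ with $\|b - b^{**}\| \le \epsilon_f \implies \|f(b) - b^{**}\| \le \delta$. Since $F$ is finite, I can set $\epsilon = \min_{f \in F} \epsilon_f > 0$. Then for any $b$ with $\|b - b^{**}\| \le \epsilon$ and any $f \in F$, the point $f(b)$ lies in the $\delta$-ball around $b^{**}$, and the preceding lemma applied to $f(b)$ in place of $b$ yields $\Phi(f_A(f(b))) > \Phi(b^{**})$ for every $A$ with $A \setminus S(b^{**}) \neq \emptyset$.

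The only real content beyond bookkeeping is the use of finiteness of $F$ to pass from pointwise continuity (which gives a possibly different $\epsilon_f$ for each $f$) to a single uniform $\epsilon$; without finiteness the infimum could be zero. There is no obstacle to speak of here because all the hard work has already been absorbed into the previous lemma (which in turn rested on Lemma \ref{lemma:subset-updates} and continuity of $\Phi \circ f_A$). I expect the proof to be two or three lines and to be essentially mechanical once the right quantities are named.
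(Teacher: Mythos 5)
Your proposal is correct and follows essentially the same route as the paper: invoke the $\delta$ from the preceding lemma, use continuity of each $f$ at its fixed point $b^{**}$ to obtain an $\epsilon_f$ mapping the $\epsilon_f$-ball into the $\delta$-ball, and take the minimum over the finite family $F$. No gaps.
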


\begin{proof} Fix $f \in F$ and let $\delta$ be as promised by the previous lemma, i.e. for every $\|z-b^{**}\|\le\delta$ and every  $A \setminus S(b^{**}) \ne \emptyset$ we have that $\Phi(f_A(z)) > \Phi(b^{**})$.  Since $f(b^{**})=b^{**}$ and $f$ is continuous there exists $\epsilon >0$ so that $\|b-b^{**}\| \le \epsilon$ implies $\|f(b)-f(b^{**})\| = \|f(b)-b^{**}\|\le \delta$ and thus $\Phi(f_A(f(b))) > \Phi(b^{**})$.  Now take the minimum $\epsilon$ over the finitely many $f \in F$.
\end{proof}

\begin{definition*}
a sequence of sets $A_t \subseteq \mathcal{B}$ is called $T$-live if for every $i$ and for every $t$ there exists some $t \le t^* \le t+T$ such that $i \in S_{t^*}$. 
\end{definition*}

\begin{lemma}
Fix a sequence $b = (b_t)$ where $b_{t+1} = f_{A_t}(b_t)$ such that the sequence $A_t$ is $T$-live. There are no condensation points of $(b_t)$ outside of $ME$.
\end{lemma}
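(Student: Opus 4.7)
The plan is to argue by contradiction: suppose some condensation point $b^{**}$ of $(b_t)$ lies outside $ME$, so $\Phi(b^{**}) < \Phi^*$ by Theorem \ref{thm:equlibrium-and-potential}. The strategy is to produce a time $t$, arbitrarily far along the sequence, at which $\Phi(b_t)$ is already strictly larger than $\Phi(b^{**})$, contradicting the fact that the non-decreasing sequence $\Phi(b_t)$ must converge to $\Phi(b^{**})$ from below.

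The first step is to show that the stable set $S(b^{**})$ is a proper subset of $\mathcal{B}$. If every $f_i(b^{**}) = b^{**}$, then iterating any single $f_i$ from $b^{**}$ stays at $b^{**}$ forever, so by Proposition \ref{prop:prd_to_br} the limit of this iteration, which equals the best response of $i$ in $\tilde{G}$, is $b^{**}_i$ itself. Thus $b^{**}$ would be a Nash equilibrium of $\tilde{G}$ and hence, by Theorem \ref{thm:equlibrium-and-potential}, a market equilibrium, contradicting $b^{**} \notin ME$.

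Next I would assemble the finite family $F$ needed to apply the previous lemma. Let $F$ consist of all compositions $f_{A_{k-1}} \circ \cdots \circ f_{A_0}$ of length $0 \le k \le T$ in which each $A_\ell \subseteq S(b^{**})$ (the empty composition being the identity). This family is finite because there are at most $2^{|S(b^{**})|}$ candidate subsets and $k \le T$; every element is continuous, and by iterated application of the corollary of facts (i) and (ii), every element fixes $b^{**}$. The previous lemma then supplies an $\epsilon > 0$ such that
\[
\|b - b^{**}\| \le \epsilon,\ f \in F,\ A \setminus S(b^{**}) \ne \emptyset \;\Longrightarrow\; \Phi(f_A(f(b))) > \Phi(b^{**}).
\]

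Now I would invoke $T$-liveness to produce the desired update from any time $t$ with $\|b_t - b^{**}\| \le \epsilon$. Pick any $i^* \notin S(b^{**})$; by liveness there is some $s \in [t, t+T]$ with $i^* \in A_s$, so $A_s \setminus S(b^{**}) \ne \emptyset$. Let $t^*$ be the smallest such $s$; by minimality, $A_r \subseteq S(b^{**})$ for $t \le r < t^*$, so the composition $f = f_{A_{t^*-1}} \circ \cdots \circ f_{A_t}$ lies in $F$ and $b_{t^*} = f(b_t)$. Hence $b_{t^*+1} = f_{A_{t^*}}(f(b_t))$ satisfies $\Phi(b_{t^*+1}) > \Phi(b^{**})$. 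Since $b^{**}$ is a condensation point, such times $t$, and hence such productive indices $t^* + 1$, occur arbitrarily late in the sequence.

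To close, use Lemma \ref{lemma:subset-updates}: $\Phi(b_t)$ is non-decreasing in $t$, and along any subsequence $b_{t_k} \to b^{**}$ continuity of $\Phi$ gives $\Phi(b_{t_k}) \to \Phi(b^{**})$; the monotone sequence therefore converges to $\Phi(b^{**})$ from below, so $\Phi(b_t) \le \Phi(b^{**})$ for all $t$, contradicting the strict inequality above. The main obstacle I anticipate is the \emph{latching} step: one must ensure that the variable-length block of activations separating a near-miss $b_t$ from the first activation touching outside $S(b^{**})$ is drawn from a prearranged finite family of continuous functions that all fix $b^{**}$. This is what dictates the definition of $F$ in terms of subsets of $S(b^{**})$, and is the reason the ``all compositions $f$ fix $b^{**}$'' lemma had to be established in the preceding step.
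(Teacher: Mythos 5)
Your proof follows the paper's argument for this lemma essentially step for step: the same family $F$ of compositions of length at most $T$ of maps $f_A$ with $A$ contained in the stable set $S(b^{**})$, the same application of the two preceding lemmas to produce a time $t^*+1$ with $\Phi(b_{t^*+1}) > \Phi(b^{**})$, and the same contradiction with the monotone sequence $\Phi(b_t)$ being bounded above by $\Phi(b^{**})$. Your one addition --- explicitly showing $S(b^{**}) \ne \mathcal{B}$ so that a productive activation time is guaranteed to occur within the $T$-window --- fills in a detail the paper leaves implicit, and is a sensible inclusion.
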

\begin{proof}
Assume that exists a condensation point $b^**\notin ME$ and a subsequence that converges to it, then $\Phi(b^{**})<\Phi(b^*)$. Notice that $\Phi(b_t)$ is a non-decreasing sequence and so $\Phi(b_t) \leq \Phi(b^{**})$ for all $t$.  
Let $F$ be a set of functions achieved by composition of at most $T$ functions from $\{f_A | A \subset S(b^{**})\}$. 
So for every $f \in F$ we have that $f(b^{**})=b^{**}$, while for every $B \setminus S(b^{**}) \ne \emptyset$ we have that $\Phi(f_B(b^{**})) > \Phi(b^{**})$.   
Let $\epsilon$ be as promised by the previous lemma, i.e., for every $\|b-b^{**}\|\le\epsilon$ and every $f \in F$ and every $B$ such that $ B \setminus S(b^{**}) \ne \emptyset$ we have that $\Phi(f_B(f(b))) > \Phi(b^{**})$.  Since the subsequence converges to $b^{**}$ there exists $t_j$ in the subsequence so that $\|b_{t_j}-b^{**}\| \le \epsilon$.  Now let $t>t_j$ be the first time that $A_t \setminus S(b^{**}) \ne \emptyset$.  Now $b_{t+1} = f_{A_t}(f(b_{t_j}))$, where $f$ is the composition of all $f_A$ for the times $t_j$ to $t$.  We can now apply the previous lemma to get that $\Phi(b_{t+1}) = \Phi(f_{A_t}(f(b_{t_j})) > \Phi(b^{**})$, a contradiction.
\end{proof}

\begin{lemma}\label{lemma:convergence}
Fix a sequence $b = (b_t)$ where $b_{t+1} = f_{A_t}(b_t)$ such that the sequence $A_t$ is $T$-live.  
Then, it holds that 
$\lim_{t\rightarrow \infty} d(b_t, ME) = 0$.
\end{lemma}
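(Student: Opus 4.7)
The plan is to combine three ingredients that are essentially all in place: (i) the potential $\Phi(b_t)$ is monotone nondecreasing along the dynamics, (ii) every condensation point of $(b_t)$ lies in $ME$, and (iii) the ``reverse continuity'' statement in Lemma~\ref{lemma:dist_from_eq} that turns potential proximity into bid-profile proximity. The main work is just wiring these together; there is no further clever construction needed.

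First I would argue that $\Phi(b_t) \to \Phi^*$. By Lemma~\ref{lemma:subset-updates}, each step satisfies $\Phi(b_{t+1}) = \Phi(f_{A_t}(b_t)) \geq \Phi(b_t)$, so $\Phi(b_t)$ is monotone nondecreasing. It is also bounded above by $\Phi^*$ (the maximum of the continuous function $\Phi$ on the compact set $S$), hence it converges to some limit $L \leq \Phi^*$. Since $(b_t) \subseteq S$ is bounded, it has at least one condensation point $b^{**}$, and by the preceding lemma $b^{**} \in ME$, so $\Phi(b^{**}) = \Phi^*$. Continuity of $\Phi$ together with convergence of the appropriate subsequence gives $L = \lim_t \Phi(b_t) = \Phi(b^{**}) = \Phi^*$.

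Next I would convert this into convergence of $d(b_t, ME)$ using Lemma~\ref{lemma:dist_from_eq}. Given $\epsilon > 0$, let $\delta > 0$ be as promised by that lemma, so that $\Phi(b) > \Phi^* - \delta$ forces $d(b, ME) < \epsilon$. From $\Phi(b_t) \to \Phi^*$, there exists $t_0$ with $\Phi(b_t) > \Phi^* - \delta$ for all $t \geq t_0$, and therefore $d(b_t, ME) < \epsilon$ for every $t \geq t_0$. As $\epsilon$ was arbitrary, this yields $\lim_{t \to \infty} d(b_t, ME) = 0$, completing the proof.

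The only conceptual subtlety worth flagging is that the argument in the first paragraph really does need both monotonicity of $\Phi(b_t)$ and the ``no bad condensation points'' lemma: without monotonicity, having one condensation point in $ME$ would not pin down the limit of $\Phi(b_t)$, and without the condensation-point lemma, monotonicity alone only gives convergence of $\Phi(b_t)$ to some $L$ which a priori could be strictly less than $\Phi^*$. Once these combine to give $\Phi(b_t) \to \Phi^*$, the ``reverse continuity'' Lemma~\ref{lemma:dist_from_eq} is exactly what is required to transfer convergence from the scalar potential back to the (set-valued) distance in bid space, and the $T$-liveness assumption is used only implicitly, via the preceding lemma.
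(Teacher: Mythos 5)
Your proof is correct and follows essentially the same route as the paper: monotonicity of $\Phi(b_t)$ from Lemma~\ref{lemma:subset-updates}, the preceding lemma ruling out condensation points outside $ME$, and Lemma~\ref{lemma:dist_from_eq} to convert $\Phi(b_t)\to\Phi^*$ into $d(b_t,ME)\to 0$. The only cosmetic difference is that you argue directly (any condensation point lies in $ME$, forcing $L=\Phi^*$) where the paper argues by contradiction from $L<\Phi^*$; the ingredients and their roles are identical.
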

\begin{proof}
    By lemma \ref{lemma:subset-updates} we have that $\Phi(b^{t+1}) \geq \Phi(b^t)$ making the sequence $\Phi(b^t)$ monotone and bounded from above ($\Phi(\cdot)$ is a bounded function). Hence it converges to some limit $L$. Either $L=\Phi^*$ or $L < \Phi^*$. In the former case, the result is immediate by lemma \ref{lemma:dist_from_eq} and $b^t$ approaches the set $ME$. 
    We show that the latter yields a contradiction. If $L < \Phi^*$, this implies that $b^t$ is bounded away from $ME$, i.e. there exists $\epsilon_0 > 0$ such that for all $t$ $d(b^t,ME) \geq \epsilon_0$. 
    To see why this is true for all $t$ and not just in the limit, we observe that since the sequence $\Phi(b^t)$ is monotone, if we have $d(b^T,ME)=0$ at some time $T$, then we have $\Phi(b^t) = \Phi^*$ for all $t>T$, which we currently assume is not the case.
    Therefore, every subsequence of $b^t$ is bounded away from $ME$, implying that every condensation point of $b^t$ is not in $ME$. The sequence $b^t$ is bounded and therefore has a converging subsequence with a condensation point not in $ME$, which is a contradiction to the previous lemma. 
\end{proof}

Regarding convergence of prices, as stated in section \ref{sec:model}, equilibrium prices for each Fisher market are unique and attained by any bid profile $b^*\in ME$, thus, since the prices are a continuous function of the bids, the convergence of bids to this set implies the convergence of prices $p^t \to p^*$.

The last lemma concludes our proof of the Theorem \ref{thm:convergence}.

\end{proof} 

\section{Generic Markets}

\begin{proof} (Theorem \ref{thm:generic}): Assume by way of contradiction that a generic linear Fisher market has two distinct market equilibrium bid profiles $b^* \neq b^{**}$. For any market equilibrium $b$ it must hold that: (1) $\forall j \quad \sum_i b_{ij} = p^*_j$ since equilibrium prices are unique, and (2) $\forall i \quad \sum_j b_{ij} = B_i$ by budget feasibility. As $b^* \neq b^{**}$, there exists a pair $(i,j)$ with $b^*_{ij} \neq b^{**}_{ij}$, meaning that buyer $i$ has a different bid on good $j$ between $b^*$ and $b^{**}$, and so by (1) it must be that exists a buyer $k$ whose bid on good $j$ was also changed so that the price $p^*_j$ remains fixed; formally, $b^*_{kj} \neq b^{**}_{kj}$. In such case, by (2) there must be a good $\ell$ for which buyer $k$ has a different bid as well, since it's budget $B_k$ is fixed and fully utilized; formally $b^*_{k\ell} \neq b^{**}_{k\ell}$. As the graph $\Gamma=\{\mathcal{B} \cup \mathcal{G}, E\}$ with $E=\{\{i,j\} | b'_{ij} \neq b^*_{ij}\}$ is finite, following the process described above while satisfying the constraints (1) and (2) must lead to a cycle in the graph $\Gamma$. 
	
Finally, we will show that there exists a market equilibrium with a cycle in its corresponding graph. Define $b'=\lambda b^* + (1-\lambda) b^{**}$ for some $\lambda \in (0,1)$ and note that $b'$ is also market equilibrium as the set of market equilibria is a convex set (see the model section in the main text). Let $\Gamma(b')=\{\mathcal{B} \cup \mathcal{G}, E(b')\}$ with $E(b')=\{\{i,j\} | b'_{ij} > 0\}$ be the corresponding graph of $b'$. Observe that $E \subseteq E(b')$ since if $b^*_{ij}\neq b^{**}_{ij}$ then it must be that $b^*_{ij} > 0$ or $b^{**}_{ij} > 0$ and in any such case $b'_{ij} > 0$. 
Thus, the graph $\Gamma(b')$ contains a cycle, contradicting Lemma \ref{lemma:cycles} from the main text
\end{proof}

\begin{proof} (Lemma \ref{lemma:cycles}): Assume for the sake of contradiction that exists a cycle $C$ in $\Gamma(b^*)$, w.l.o.g. name the vertices of buyers and goods  participating in the cycle in an ascending order; that is, $C=b_1g_1b_2g_2\dots b_{k-1}g_kb_1$, where $b_i$ and $g_i$ represent buyers and goods $i$, respectively. Recall that for any market equilibrium if $x^*_{ij} >0$ then $\frac{a_{ij}}{p^*_j} = c_i$ for some constant $c_i$ (see the model section in the main text). Applying this to the cycle $C$ yields the following equations. (1) By considering edges from buyers to goods $b_i \to g_i$ we obtain for $i \in [k-1] \quad a_{i,i} = c_i p^*_i$, and (2) by considering edges from goods to buyers $g_i \to b_{i+1}$ we obtain for $i \in [k-1]\quad a_{i+1,i}= c_{i+1} p_i^*$ and the edge closing the cycle yields $a_{1,k}= c_{1} p_k^*$. Finally, by considering the product of ratios between valuations of buyers participating in the cycle we have the following condition. 
	\[
	\begin{split}
		 \frac{a_{21}}{a_{11}} \frac{a_{32}}{a_{22}} \frac{a_{43}}{a_{33}} \dots \frac{a_{i+1,i}}{a_{i,i}} \dots \frac{a_{k,k-1}}{a_{k-1,k-1}}\frac{a_{1,k}}{a_{k,k}}		 
		 &= \frac{c_2 p^*_1}{c_1 p^*_1} \frac{c_3 p^*_2}{c_2 p^*_2} \frac{c_4 p^*_3}{c_3 p^*_3}\dots \frac{c_{i+1} p^*_i}{c_i p^*_i} \dots \frac{c_{k} p^*_{k-1}}{c_{k-1} p^*_{k-1}}\frac{c_{1} p^*_k}{c_{k} p^*_{k}}
		 \\& = \frac{c_2}{c_1} \frac{c_3}{c_2} \frac{c_4}{c_3}\dots \frac{c_{i+1}}{c_i} \dots \frac{c_{k}}{c_{k-1}}\frac{c_{1}}{c_{k}}
		 \\& = 1,
	\end{split}
	\]
	which contradicts the genericity condition. 
\end{proof}

\end{appendices}

\end{document}